\theoremstyle{plain}
\newtheorem{theorem}{Theorem}
\newtheorem{proposition}[theorem]{Proposition}
\newtheorem{lemma}[theorem]{Lemma}
\newtheorem*{maintheorem}{Theorem}
\theoremstyle{definition}
\newtheorem{example}[theorem]{Example}
\theoremstyle{remark}
\newtheorem{remark}[theorem]{Remark}
\DeclareMathOperator{\res}{Res}
\def\pa{\partial}
\def\de{\delta}
\def\la{\lambda}
\def\bla{{\bar\lambda}}
\def\Z{\mathbb{Z}}	
\def\C{\mathbb{C}}	
\def\R{\mathbb{R}}
\renewcommand{\leq}{\leqslant} 		
\renewcommand{\geq}{\geqslant}
\def\bem#1\enm{\begin{pmatrix}#1\end{pmatrix}} 
\newcommand{\nn}{\nonumber}
\def\beq#1\eeq{\begin{equation}#1\end{equation}}
\def\bea#1\eea{\begin{align}#1\end{align}}   
\def\bean#1\eean{\begin{align*}#1\end{align*}}   
\def\bes#1\ees{\begin{subequations}#1\end{subequations}}
\def\beqa#1\eeqa{\begin{equation} \begin{aligned}#1\end{aligned}\end{equation}}   
\def\cH{\mathcal{H}}
\def\cL{\mathcal{L}}
\def\cU{\mathcal{U}}	
\def\cV{\mathcal{V}}	
\DeclareMathOperator{\ein}{Ein}
\def\S{\mathbb{S}}
\def\l{\la}
\def\bl{\bla}
\def\x{x}
\def\bx{{\bar{x}}}
\newcommand{\Function}[5]{\begin{array}{cccc}
#1:&#2&\longrightarrow&#3\\&#4&\longmapsto&#5\end{array}}
\newcommand{\isomorphism}[4]{\begin{array}{ccc}
#1&\longrightarrow &#2\\
#3&\longmapsto&#4\end{array}}
\def\bspl#1\espl{\begin{equation}\begin{split} #1\end{split}\end{equation}}
\def\cS{\mathcal{S}}
\begin{document}

\title[Infinite-dimensional principal hierarchies]{Principal hierarchies of infinite-dimensional Frobenius manifolds: the extended 2D Toda lattice}
\date{13 January 2015}
\author[G.~Carlet]{Guido Carlet}
\address{Korteweg-de Vries Institute for Mathematics, University of Amsterdam, P.O. Box 94248, 
1090 GE Amsterdam, 
The Netherlands}
\email{g.carlet@uva.nl}
\author[L.~Ph.~Mertens]{Luca Philippe Mertens}
\address{Instituto Nacional de Matematica Pura e Aplicada,
Estrada Dona Castorina 110, 22460-320 Rio de Janeiro, Brasil}
\email{mertens@impa.br}
\keywords{2D Toda, Frobenius manifold, principal hierarchy}
\subjclass[2000]{53D45, 35Q58}
\begin{abstract}
We define a dispersionless tau-symmetric bihamiltonian integrable hierarchy on the space of pairs of functions analytic inside/outside the unit circle with simple poles at $0$/$\infty$ respectively, which extends the dispersionless 2D Toda hierarchy of Takasaki and Takebe. 
Then we construct the deformed flat connection of the infinite-dimen\-sional Frobenius manifold $M_0$ introduced by Carlet, Dubrovin and Mertens in Math. Ann. 349 (2011) 75--115 and, by explicitly solving the deformed flatness equations, we prove that the extended 2D Toda hierarchy coincides with principal hierarchy of $M_0$.
\end{abstract}

\maketitle

{\small
\tableofcontents
}

\raggedbottom

\section*{Introduction}

The theory of Frobenius manifolds~\cite{D96}, originating as a geometric formulation of the associativity equations of two-dimensional topological field theory~\cite{Wit91, DVV91}, has proven to be an important tool in the study and classification of bi-Hamiltonian tau-symmetric integrable hierarchies of PDEs with one spatial variable~\cite{DZ01}.

The possibility of extending the Frobenius manifolds techniques to the realm of integrable PDEs with two spatial variables has been recently proposed in~\cite{CDM10} where, in collaboration with B.~Dubrovin, we have constructed an infinite-dimensional Frobenius manifold naturally associated with the bi-Hamiltonian structure of the dispersionless 2D Toda hierarchy.
In this work we further develop the program outlined in~\cite{CDM10}. 

First we show that the insights coming from the theory of Frobenius manifolds allow us to solve the problem of finding an extension of the dispersionless 2D Toda hierarchy. This problem has been open since the introduction of the extended~\cite{CDZ04} and extended bigraded~\cite{C07} Toda hierarchies, which are characterized by extra ``logarithmic'' flows.  Such flows are essential in the framework of~\cite{DZ01} and for the description of Gromov-Witten potentials~\cite{DZ04, MT08}, but cannot be obtained by reduction of the 2D Toda flows of~\cite{UT84, TT95}.
To achieve this extension we assume certain analytical properties of the Lax symbols $\la(z)$, $\bla(z)$. First we require that, in contrast with the usual approach of e.g.~\cite{TT95}, they are not just formal power series but genuine holomorphic functions on the exterior/interior part of the unit circle with simple poles at $0/\infty$ respectively. This requirement stems from the necessity of giving a precise meaning to products like $\la^p(z) \bla^q(z)$ which are ill-defined in the case of formal power series. 
We need to consider however not only polynomial expressions of $\la(z)$ and $\bla(z)$ but also logarithms. 
While in the case of extended Toda hierarchies~\cite{CDZ04, C07} the introduction of logarithmic flows does not present serious problems (as one can  define logarithms of formal power series as in~\cite{EY94}), in the case of the 2D Toda hierarchy this straightforward approach is not satisfactory. The correct insights come by considering the relationship of this hierarchy with the Frobenius manifold $M_0$ constructed in~\cite{CDM10}. It turns out that we have to consider logarithms not only of $\la(z)$, $\bla(z)$ but also of $w(z) = \la(z) + \bla(z)$. To make sense of these we impose further conditions on the winding numbers of the analytic curves obtained by restricting these functions to the unit circle in the complex plane. 
These conditions define an open set $M_1$ in the space of pairs of holomorphic Lax symbols. The {\it extended dispersionless 2D Toda hierarchy} is then defined as a family of commuting bi-Hamiltonian vector fields on the loop space $\cL M_1$ which admit a Lax formulation and include the usual flows of the dispersionless 2D Toda hierarchy. 

The main motivation for the definition of this hierarchy comes from the construction~\cite{CDM10} of the infinite-dimensional semisimple Frobenius manifold $M_0$ associated with the standard Poisson pencil of the 2D Toda hierarchy\footnote{Recently this construction has been generalised to other Poisson pencils~\cite{WZ14}, obtaining a family of Frobenius manifolds parametrised by two positive integers $(n,m)$.}.
The Frobenius manifold $M_0$ is defined on the space of pairs of holomorphic Lax symbols $\lambda(z)$, $\bla(z)$ with certain additional conditions (ensuring the invertibility of the metric $\eta$ and the well-posedness of the Riemann-Hilbert problem defining the flat coordinates). We will further assume that $\la(z)$, $\bla(z)$ satisfy the winding numbers condition mentioned above, i.e. we regard $M_0$ as an open subset of $M_1$. 

It is well-known that Frobenius manifolds are naturally associated with a certain class of bi-Hamiltonian dispersionless integrable hierarchies. This is first of all apparent from the fact that any Frobenius manifold $M$ is endowed with a pencil of flat metrics and, consequently, with two compatible hydrodynamic type Poisson structures $\{,\}_1$, $\{,\}_2$ on its loop space~\cite{DZ01}. Moreover a set $A(M)$ of Hamiltonian densities (which define Hamiltonians on $\cL M$, in involution with respect to the first Poisson structure $\{,\}_1$) is naturally associated with the Frobenius manifold $M$. One can single out a distinguished basis of $A(M)$, by constructing a basis of deformed flat coordinates, i.e. of horizontal sections of the basic geometric object associated to a $M$, the deformed flat connection  $\tilde\nabla$. 
Such distinguished basis is fixed by solving in a normal form (Levelt form) the isomonodromic matrix-valued linear system on the deformation parameter $\zeta$ in the $\C$-plane. The corresponding Hamiltonian flows define the so-called Principal hierarchy.
Here we will not review in detail these Frobenius manifold constructions  but rather refer the reader to~\cite{D99, DZ01}.

In the second part of this article we construct the deformed flat connection $\tilde\nabla$ of the Frobenius manifold $M_0$ and we explicitly solve the associated deformed flatness equations. We show that the deformed flat coordinates are indeed a Levelt system, thus we obtain the Principal hierarchy of $M_0$, which coincides with the extended dispersionless 2D Toda hierarchy, when restricted to $\cL M_0$.

Since the Frobenius manifold $M_0$ has a resonant spectrum, the normal form of the fundamental solution of deformed flatness equation in the $\zeta$-plane is not univocally determined but admits an arbitrariness. We exploit such arbitrariness to provide a set of deformed flat coordinates, and a related set of Hamiltonian densities defining the principal hierarchy, which satisfy the orthogonality~\eqref{orthogonnn}. In this way we construct a principal hierarchy with all the good properties expected from the finite-dimensional case.

Let us now state more precisely our results.

The usual Lax formulation of the dispersionless 2D Toda hierarchy is given in terms of two formal Laurent series, the Lax symbols
\[
\la(z) =z + u_0 + \dots, \quad
\bla(z) = \bar u_{-1} z^{-1} + \bar u_0 + \dots ,
\]
and Lax equations~\eqref{eq:TodaLax} which define an infinite set of commuting vector fields on the loop space of formal Lax symbols. 
In order to define a larger set of flows on this loop space we need to impose some analyticity conditions. 
First we require that, instead of formal power series, $\la(z)$, $\bla(z)$ are ``holomorphic Lax symbols'', i.e. $\la(z)$, resp. $\bla(z)$, is a holomorphic function on a neighborhood of closure of the exterior, resp. interior, part of the unit circle in Riemann sphere $\C \cup \{ \infty \}$ admitting a simple pole at $\infty$, resp. $0$, with the normalization condition
$\la(z) = z + O(1) \text{ for } |z| \to \infty$.
Second we impose a ``winding numbers condition'', namely we require that the functions $\la(z)$, $\bla(z)$ and $w(z):=\la(z)+ \bla(z)$, when restricted to the unit circle $|z|=1$, define analytic curves in $\C^\times := \C \backslash \{0\}$ which have winding number around the origin respectively equal to $1$, $-1$ and $1$. We denote by $M_1$ the space of pairs of holomorphic Lax symbols satisfying the winding numbers condition. 

The dispersionless 2D Toda equations define evolutionary flows on pairs of holomorphic Lax symbols depending smoothly on the variable $x$, hence, in particular, on the loop space of $M_1$,
\[
\cL M_1 = C^{\infty} (S^1 ,  M_1) .
\]
We define the {\it extended dispersionless 2D Toda hierarchy} as a set of commuting flows on $\cL M_1$ which include the usual 2D Toda flows. We summarize the first part of our results in the following theorem.

\begin{maintheorem}
Let $Q_{\hat\alpha,p}$, for $\hat\alpha \in \hat\Z = \Z\cup\{u,v\}$ and $p \geq0$, be functions of $\la$, $\bla$ defined by
\bes
\bean
Q_{\alpha, p} &= - \frac{(\la+\bla)^{\alpha+1}(\bla - \la)^p}{(\alpha+1)(2p)!!}   \qquad \text{for} \qquad \alpha \not=-1 ,\\
Q_{-1, p} &= - \frac{(-\la)^p}{p!} \Big( \log \left( 1+\frac{\bla}{\la} \right) + c_p -1 \Big) - \frac{(\bla-\la)^p}{(2p)!!} , \\
Q_{v,p} &= - \frac{(-\la)^p}{p!} \Big(  \log \left( 1+\frac{\bla}{\la} \right)+ c_p -1 \Big)  +
 \frac{\bla^p}{p!} \Big( \log \left( \bla (\la+ \bla) \right) - c_p -1 \Big) ,\\
Q_{u,p} &= \frac{\bla^{p+1}}{(p+1)!}  
\label{Au}
\eean
\ees
where $c_p = 1 + \cdots + \frac1p$ are the harmonic numbers (with $c_0=c_{-1}=0$). 

The Lax equations
\[
\frac{\pa \la}{\ \pa t^{\hat\alpha,p}} = \{ -(Q_{\hat\alpha,p})_- , \la \} ,   \qquad
\frac{\pa \bla}{\ \pa t^{\hat\alpha,p}} = \{ (Q_{\hat\alpha,p})_+  , \bla \} ,
\]
define a tau-symmetric hierarchy of commuting flows on the loop space of $M_1$.

These flows admit a bi-Hamiltonian formulation with respect to the dispersionless 2D Toda Poisson brackets $\{, \}_1$ and $\{,\}_2$
\[
\frac{\pa}{\pa t^{\hat\alpha,p}}  \cdot = \{ \cdot , H_{\hat\alpha,p} \}_1
\]
with recursion relations
\bean
&\{ \cdot , H_{\alpha,p} \}_2 = (\alpha+p+2) \{ \cdot , H_{\alpha,p+1} \}_1 ,\\
&\{ \cdot , H_{v,p} \}_2 = (p+1) \{ \cdot , H_{v,p+1} \}_1 +
2 \{ \cdot , H_{u,p} \}_1 ,\\
&\{ \cdot, H_{u,p} \}_2 = (p+2) \{ \cdot , H_{u,p+1} \}_1 
\eean
for $p\geq-1$, and Hamiltonians
\[
H_{\hat\alpha,p} = \oint_{S^1} \frac1{2 \pi i} \oint_{|z|=1} Q_{\hat\alpha,p+1} (\la(z), \bla(z) ) \ \frac{dz}z \ dx .
\]
The dispersionless 2D Toda flows are finite combinations of the flows defined above.
\end{maintheorem}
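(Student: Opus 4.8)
\emph{Proof strategy.} The plan is to reduce every assertion to residue calculus for the explicit functions $Q_{\hat\alpha,p}(\la,\bla)$, carried out on $\cL M_1$ with the bracket $\{,\}$ of the Lax equations and the two compatible dispersionless 2D Toda Poisson brackets $\{,\}_1$, $\{,\}_2$ of~\cite{CDM10}; I break the argument into four steps.

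\emph{Step 1 (well-definedness).} One must first check that the Lax equations define vector fields on $\cL M_1$. The only delicate point is the logarithms in $Q_{-1,p}$ and $Q_{v,p}$: since $\la$, $\bla$ and $w=\la+\bla$ have winding numbers $1$, $-1$, $1$ along $|z|=1$, the ratio $1+\bla/\la=w/\la$ and the product $\bla(\la+\bla)=\bla\,w$ have winding number $0$, so their logarithms are single-valued on $S^1$; hence each $Q_{\hat\alpha,p}(\la(z),\bla(z))$ has a genuine Laurent expansion on $|z|=1$, and the projections $(\,\cdot\,)_\pm$ together with $\{,\}$ make sense. From the explicit form of $(Q_{\hat\alpha,p})_\pm$ one reads off that the flow preserves holomorphicity, the simple poles at $\infty$ and $0$, and the normalization $\la=z+O(1)$; the three winding numbers, being locally constant, are automatically conserved, so $M_1$ is preserved.

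\emph{Step 2 (Hamiltonian form and recursion).} Using the Leibniz rule $\{f(\la,\bla),g\}=\pa_\la f\,\{\la,g\}+\pa_\bla f\,\{\bla,g\}$ together with the explicit expression of $\{,\}_1$ from~\cite{CDM10}, I would compute the Hamiltonian vector field of $H_{\hat\alpha,p}=\oint_{S^1}\frac1{2\pi i}\oint_{|z|=1}Q_{\hat\alpha,p+1}\,\frac{dz}{z}\,dx$ and match it term by term with the Lax flow generated by $(Q_{\hat\alpha,p})_\pm$. The essential input is a family of residue identities expressing $\pa_\la Q_{\hat\alpha,p+1}$ and $\pa_\bla Q_{\hat\alpha,p+1}$ through $Q_{\hat\alpha,p}$ and the flow generators; these identities are precisely what forces the normalizing factors $(\alpha+1)(2p)!!$, $p!$, $(p+1)!$ and the harmonic shifts $c_p$. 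Running the same computation with $\{,\}_2$ in place of $\{,\}_1$ then produces the recursion relations, the coefficients $\alpha+p+2$, $p+1$, $p+2$ and the coupling $2\{\cdot,H_{u,p}\}_1$ arising from the engineered combinations $(\la+\bla)^{\alpha+1}(\bla-\la)^p$, $\bla^{p+1}$ and the two logarithmic pairs. I expect this step to be the main obstacle: the residue calculus with the logarithmic densities is delicate, every coefficient must come out exactly, and one must keep careful track of how the projections $(\,\cdot\,)_\pm$ interact with variational derivatives and residues in this infinite-dimensional setting.

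\emph{Step 3 (commutativity and tau-symmetry).} Granting Step 2, commutativity of the flows is equivalent to the involutivity $\{H_{\hat\alpha,p},H_{\hat\beta,q}\}_1=0$, which follows by the Lenard--Magri argument applied to the pencil $\{,\}_1+\zeta\{,\}_2$ of~\cite{CDM10}: the recursion relations say exactly that $\{,\}_2$ maps the span of the $H_{\hat\alpha,p}$ into first-bracket Hamiltonians of the hierarchy --- a system triangular in the level $p$, the $v$-chain feeding into the $u$-chain but still consistent --- so an induction on $p$ gives the vanishing of all mutual brackets. Equivalently, one checks directly via the Step 2 identities that $\pa_{t^{\hat\alpha,p}}H_{\hat\beta,q}=\{H_{\hat\beta,q},H_{\hat\alpha,p}\}_1$ is symmetric under $(\hat\alpha,p)\leftrightarrow(\hat\beta,q)$; being also antisymmetric as a Poisson bracket, it vanishes. (At the Lax level one may instead run the dispersionless Zakharov--Shabat computation, the obstruction reducing to a combination of $\{(Q_{\hat\alpha,p})_\pm,(Q_{\hat\beta,q})_\pm\}$ that vanishes by the classical $r$-matrix identity for the splitting $(\,\cdot\,)=(\,\cdot\,)_++(\,\cdot\,)_-$.) Tau-symmetry is the identity $\pa_{t^{\hat\beta,q}}h_{\hat\alpha,p-1}=\pa_{t^{\hat\alpha,p}}h_{\hat\beta,q-1}$ for the densities $h_{\hat\alpha,p}=\frac1{2\pi i}\oint_{|z|=1}Q_{\hat\alpha,p+1}\,\frac{dz}{z}$; substituting the Lax equations turns both sides into the same residue pairing of $\pa_\la Q_{\hat\alpha,p}$, $\pa_\bla Q_{\hat\alpha,p}$ against $(Q_{\hat\beta,q})_\pm$ and conversely, so the symmetry follows from the Step 2 identities and the symmetry of the residue pairing.

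\emph{Step 4 (the 2D Toda flows).} The dispersionless 2D Toda flows are generated by the symbols $\la^n$ and $\bla^n$ for $n\geq1$ (together with the spatial translation, which coincides with one of the $p=0$ flows). Now $\bla^n=n!\,Q_{u,n-1}$, so the $\bla^n$-flows already appear among the $t^{\hat\alpha,p}$-flows. For the $\la^n$-flows write $\la=w-\bla$ and expand $\la^n=\sum_{k=0}^n\binom nk w^{n-k}(-\bla)^k$; the $k=n$ term is $(-1)^n n!\,Q_{u,n-1}$, and for $k<n$ the monomial $w^{n-k}\bla^k$ (with $n-k\geq1$) is a finite combination of the $Q_{\hat\alpha,p}$ by induction on $k$: one has $w^m=-m\,Q_{m-1,0}$ for $m\geq1$, while for $k\geq1$, using $\bla-\la=2\bla-w$, the function $Q_{m-1,k}$ equals, up to a nonzero factor, the monomial $w^m\bla^k$ plus a combination of monomials $w^{m'}\bla^{k'}$ with $k'<k$ and $m'\geq1$, which are combinations of the $Q_{\hat\alpha,p}$ by the inductive hypothesis. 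Hence every $\la^n$ is a finite combination of the $Q_{\hat\alpha,p}$, and applying $(\,\cdot\,)_\pm$ shows that every dispersionless 2D Toda flow is a finite combination of the $t^{\hat\alpha,p}$-flows, which completes the proof.
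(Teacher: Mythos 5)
Your overall architecture matches the paper's: well-definedness via the winding-number conditions (the paper splits the logarithms as $\log\frac{\la+\bla}{z}-\log\frac{\la}{z}$ rather than using $w/\la$ directly, but the single-valuedness argument is the same), the Hamiltonian form and recursions via differential identities on the $Q_{\hat\alpha,p}$, tau-symmetry via a zero-curvature identity, and recovery of the 2D Toda flows by writing $\la^n$, $\bla^n$ as combinations of the $Q$'s — your inductive argument in Step 4 is correct and is actually more of a proof than the paper's bare statement of the coefficients. There is, however, one step that would fail as primarily stated. Your main route to commutativity in Step 3, Lenard--Magri induction on the recursion relations, does not cover all pairs of Hamiltonians: the recursion coefficient $\alpha+p+2$ vanishes at $p=-\alpha-2$ for $\alpha\leq-1$, and $p+1$ vanishes at $p=-1$ in the $v$-chain, so $H_{-1,0}$, $H_{v,0}$ and more generally $H_{\alpha,-\alpha-1}$ for $\alpha\leq-1$ head chains whose starting points are \emph{not} Casimirs and are not reachable from the Casimirs $H_{\hat\alpha,-1}$ through the recursion. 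Concretely, the telescoping for $\{H_{-1,p},H_{-1,q}\}_1=\frac{p+1}{q}\{H_{-1,p+1},H_{-1,q-1}\}_1$ gets stuck at $\{H_{-1,p+q},H_{-1,0}\}_1$ and cannot be pushed down to a Casimir. You must therefore promote the parenthetical alternative to the main argument: the Zakharov--Shabat identity $\pa_{t^{\hat\beta,q}}Q_{\hat\alpha,p}-\pa_{t^{\hat\alpha,p}}Q_{\hat\beta,q}+\{(Q_{\hat\alpha,p})_+,(Q_{\hat\beta,q})_+\}-\{(Q_{\hat\alpha,p})_-,(Q_{\hat\beta,q})_-\}=0$, which the paper proves directly by two applications of the Leibniz rule (no $r$-matrix machinery needed); commutativity follows by projecting, and tau-symmetry by taking the residue, of this identity.

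The other weakness is that Step 2 is a plan rather than a proof: you correctly predict that "a family of residue identities" is the crux, but you neither state nor verify them, and they carry essentially all the content of the bi-Hamiltonian claim. They are elementary and should be exhibited: $(\pa_\bla-\pa_\la)Q_{\hat\alpha,p}=Q_{\hat\alpha,p-1}$ (with the conventions $Q_{-1,-1}=-1/\la$, $Q_{v,-1}=1/\bla-1/\la$, $Q_{u,-1}=1$ and $Q_{\alpha,-1}=0$ otherwise), together with $(\la\pa_\la+\bla\pa_\bla)Q_{\alpha,p}=(\alpha+p+1)Q_{\alpha,p}$, $(\la\pa_\la+\bla\pa_\bla)Q_{v,p}=pQ_{v,p}+2Q_{u,p-1}$ and $(\la\pa_\la+\bla\pa_\bla)Q_{u,p}=(p+1)Q_{u,p}$. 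Combined with the observation that $\{\la,Q_\la\}+\{\bla,Q_\bla\}=0$ for any $Q(\la,\bla)$ — which collapses the Poisson maps to $P_1(dH)=\big(\{\la,(Q_\bla-Q_\la)_-\},-\{\bla,(Q_\bla-Q_\la)_+\}\big)$ and the analogous expression for $P_2$ with $\la Q_\la+\bla Q_\bla$ — these identities yield the Hamiltonian form, the recursion relations with the stated coefficients, and the Casimir property of the $H_{\hat\alpha,-1}$ with no further residue gymnastics. With these two repairs your outline coincides with the paper's proof.
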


The second part of our results, as mentioned above, involves the solution of the deformed flatness equations of the Frobenius manifold $M_0$. 
More precisely let us consider the following functions on $M_0 \times \C^\times$, which are holomorphic in the parameter $\zeta$ in a neighborhood of $\zeta=0$
\bes
\label{thetas}
\bea
\theta_{\alpha}(\zeta) &= - \frac1{2 \pi i} \oint_{|z|=1}  \frac{(\l+\bl)^{\alpha+1}}{\alpha+1} e^{\frac{\bl-\l}2  \zeta} \ \frac{dz}z 
\quad \text{for} \quad \alpha\not= -1 ,\\
\theta_{-1}(\zeta) &=  
-\frac1{2 \pi i} \oint_{|z|=1} \left[ 
e^{- \la \zeta} \left( \log \left(1+\frac{\bla}{\la}\right)+ \ein(-\la\zeta) -1 \right) + e^{\frac{\bla-\la}{2} \zeta} \right] \frac{dz}z ,
\\
\theta_v(\zeta) &=  
\frac1{2 \pi i} \oint_{|z|=1} \Big[ 
- e^{- \la \zeta} \left( \log \left(1+\frac{\bla}{\la}\right) + \ein(-\la\zeta) -1 \right) +\\
&\qquad \qquad+ e^{\bla \zeta} \left( \log \bla(\la + \bla)  - \ein(\bla \zeta) -1\right) \Big] \frac{dz}{z} ,
\\
\theta_u(\zeta) &=  
\frac1{2 \pi i} \oint_{|z|=1}\frac{e^{\bla \zeta}-1}{\zeta} \ \frac{dz}{z} 
\eea
\ees
and the functions   
\bes
\label{yth}
\bea
y_\alpha(\zeta)&= \zeta^{\alpha+\frac12}\theta_\alpha(\zeta),\\
y_v(\zeta)&=\zeta^{\frac{1}{2}}\left(\zeta^{-1}\theta_v(\zeta)+2\log(\zeta)\theta_u(\zeta)\right),\\
y_u(\zeta)&=\zeta^{\frac{1}{2}}\theta_u(\zeta),
\eea
\ees  
which are multivalued in $\zeta$  on $\C^\times := \C\backslash\{0\}$. Here $\ein(z)$ denotes the entire exponential integral function, defined in~\eqref{einfn}.

\begin{maintheorem}
The sequence of functions $\left\{ y_{\hat\alpha}(\zeta) \right\}_{\hat\alpha\in\Z}$ on $M_0 \times \C^\times$ forms a Levelt basis of deformed flat coordinates for $M_0$. 
\end{maintheorem}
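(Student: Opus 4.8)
The goal is to exhibit the $y_{\hat\alpha}(\zeta)$ as a basis of solutions of the deformed flatness equations of $M_0$ whose fundamental matrix has, near $\zeta=0$, the Levelt normal form $\Theta(\zeta)\,\zeta^{\mu}\,\zeta^{R}$ with $\Theta(\zeta)$ holomorphic and $\Theta(0)$ invertible, $\mu$ a diagonal grading operator, and $R$ a nilpotent operator with $[\mu,R]=R$ satisfying the usual $\eta$-compatibility relations. I would first write out, in the flat coordinates $\{t^{\hat\beta}\}$ of the metric $\eta$ on $M_0$ from \cite{CDM10}, the system $\tilde\nabla\,dv=0$ obeyed by a deformed flat coordinate $v(t,\zeta)$. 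It splits into the horizontal equations, relating the second derivatives of $v$ to $\zeta$ times the Frobenius product of $M_0$ applied to $dv$, and the scaling equation, relating $\zeta\,\pa_\zeta\,dv$ to the Euler vector field $E$ and to $\mu$. As $M_0$ is infinite dimensional both equations should be phrased through the residue pairing $\frac1{2\pi i}\oint_{|z|=1}(\cdot)\,\frac{dz}{z}$, exactly as in the construction of $\tilde\nabla$ carried out above.

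The second step is to check that the $\theta_{\hat\alpha}(\zeta)$ of \eqref{thetas} solve the horizontal system. The integrands $e^{\frac{\bl-\l}{2}\zeta}$, $e^{-\l\zeta}$, $e^{\bl\zeta}$ are designed to behave as Baker--Akhiezer type eigenfunctions under the derivations $\pa_{\hat\beta}\l$, $\pa_{\hat\beta}\bl$; differentiating under the contour integral (legitimate by the analyticity conditions defining $M_0$) and inserting the structure equations for $\pa\l/\pa t$, $\pa\bl/\pa t$ together with the multiplication of $M_0$ from \cite{CDM10} should produce exactly the horizontal identities for $\theta_{\hat\alpha}(\zeta)$. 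Equivalently, expanding these identities in powers of $\zeta$ recovers, level by level, the bi-Hamiltonian recursion of the first Theorem with the densities $Q_{\hat\alpha,p}$, so this verification amounts to matching the structure constants of $M_0$ with the coefficients $(\alpha+p+2)$, $(p+1)$ and $(p+2)$ appearing there; the harmonic numbers $c_p$ reappear here as the $\zeta$-Taylor coefficients of $e^{-\l\zeta}\ein(-\l\zeta)$.

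The third step is the scaling equation. Assigning $\deg\l=\deg\bl=1$ and $\deg\zeta=-1$ I would compute $\zeta\,\pa_\zeta\theta_{\hat\alpha}$ and the action of $E$ (which rescales $\l$, $\bl$) and check that they agree up to the diagonal contribution of $\mu$. For $\hat\alpha\neq-1,v$ this is a short homogeneity computation which fixes the spectrum of $\mu$, namely the exponents $\alpha+\tfrac12$, $\tfrac12$, $-\tfrac12$ recorded in \eqref{yth} (consistent with the pairings $t^{\alpha}\leftrightarrow t^{-1-\alpha}$, $t^{u}\leftrightarrow t^{v}$ of $\eta$). For $\hat\alpha=-1$ and $\hat\alpha=v$ the naive generating function $-\tfrac{(-\l)^{p}}{p!}\log\!\big(1+\tfrac{\bl}{\l}\big)$ is not homogeneous, and the role of the $\ein$-corrections in $\theta_{-1}(\zeta),\theta_{v}(\zeta)$ — and, after passing from $\theta$ to $y$, of the $2\log\zeta$ term in $y_{v}$ — is precisely to absorb the resulting inhomogeneity: from $\frac{d}{d\zeta}\big(e^{-\l\zeta}\ein(-\l\zeta)\big)=-\l\,e^{-\l\zeta}\ein(-\l\zeta)+\frac{e^{-\l\zeta}-1}{\zeta}$ one checks that the combinations actually appearing are annihilated by the scaling operator modulo the $\zeta^{\mu}\zeta^{R}$ dressing. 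This is the Frobenius-manifold incarnation of the logarithmic flows of \cite{CDZ04,C07}.

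Finally I would put the $y_{\hat\alpha}(\zeta)$ in Levelt normal form near $\zeta=0$. Writing $\theta_{\hat\alpha}(\zeta)=\sum_{p\geq0}\theta_{\hat\alpha,p}\,\zeta^{p}$ with $\theta_{\hat\alpha,p}=\frac1{2\pi i}\oint Q_{\hat\alpha,p}\,\frac{dz}{z}$, the level-zero terms $\theta_{\hat\alpha,0}$ are, by \cite{CDM10}, the flat coordinates of $\eta$ up to normalization, so stripping $\zeta^{\mu}\zeta^{R}$ off the matrix $(y_{\hat\alpha})$ leaves a power series $\Theta(\zeta)$ with $\Theta(0)$ invertible. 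The $2\log\zeta\,\theta_{u}$ term in $y_{v}$ isolates the single nontrivial resonant block: $R$ is the rank-one nilpotent coupling the $v$- and $u$-directions (with coefficient $2$ in the normalization of \eqref{yth}), so $[\mu,R]=R$ since $\mu_{u}-\mu_{v}=1$, whereas the formal resonances among the integer labels are non-contributing because the $y_{\alpha}$ carry no $\log\zeta$. Checking in addition the $\eta$-skew-symmetry of $\mu$ and the $\eta$-compatibility of $R$ — which follow from the symmetry of the $Q_{\hat\alpha,p}$ in $\l,\bl$ and the self-adjointness of the Frobenius structure — completes the identification of $\{y_{\hat\alpha}(\zeta)\}$ as a Levelt basis. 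I expect the hardest step to be this scaling/logarithmic part: proving that the $\ein$-corrections and the $2\log\zeta$ term of \eqref{yth} are \emph{exactly} those dictated by the $u$--$v$ resonance, while at the same time controlling, in the infinite-dimensional setting, the holomorphy in $\zeta$ near $0$ of the contour integrals and the legitimacy of differentiating and taking residues under $\oint_{|z|=1}$ on $M_0$, where the extra conditions cutting $M_0$ out of $M_1$ are precisely what make $\eta$ invertible and the Riemann--Hilbert problem for the flat coordinates well posed.
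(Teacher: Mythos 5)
Your overall skeleton --- verify the deformed flatness equations directly for the explicit integral formulas, then read off the Levelt form from the relation between $y_{\hat\alpha}$ and $\theta_{\hat\alpha}$ with the rank-one nilpotent $R$ coupling the $u$- and $v$-directions --- is the same as the paper's, and your treatment of the normal form (the integer resonances non-contributing because the $y_\alpha$ carry no $\log\zeta$, $\Theta(0)=\mathrm{Id}$ because $\theta_{\hat\alpha}|_{\zeta=0}=t_{\hat\alpha}$) matches the paper. But the central verification, that the contour integrals actually satisfy the horizontality equations in the $M_0$-directions, is left as ``should produce exactly the horizontal identities,'' and the way you propose to substantiate it is partly misdirected: expanding the $M_0$-direction horizontality in powers of $\zeta$ yields the recursion $\pa_{\hat\beta}\pa_{\hat\gamma}\theta_{\hat\alpha,p+1}=c_{\hat\beta\hat\gamma}^{\hat\epsilon}\,\pa_{\hat\epsilon}\theta_{\hat\alpha,p}$, governed by the structure constants alone; the coefficients $(\alpha+p+2)$, $(p+1)$, $(p+2)$ of the first Theorem belong to the second ($\cU$--$\cV$) recursion, i.e.\ to the $\zeta$-direction equation, not to the horizontal one. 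A level-by-level matching in the infinite-dimensional setting, against the nonlocal multiplication operator $C_X$ of \eqref{c-x}, is not obviously closable as stated.

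The mechanism the paper actually uses, and which your proposal misses, is a reduction lemma (Proposition~\ref{horizontal_lemma}): for any functional of the form $\zeta^{-1/2}\tfrac{1}{2\pi i}\oint F(\zeta\la(z),\zeta\bla(z))\tfrac{dz}{z}+\phi(\zeta)$ the entire system $\tilde\nabla_X dy=0$ collapses to four conditions on the single combination $G:=F_{\bx}-F_{\x}-F$, namely $(\pa_\x G)_{\geq-1}=0$, $(\pa_{\bx}G)_{\leq1}=0$, $(G)_0=\mathrm{const}$ and $\pa_u[e^uG]_1=0$. The chosen $F_{\hat\alpha}$ satisfy $G=0$, $-1/\x$, $1/\bx-1/\x$ or $1$, so the whole horizontality check becomes a one-line computation using only that $1/\la(z)\in\frac1z\cH(D_\infty)$ and $1/\bla(z)\in z\cH(D_0)$ on $M_0$; this is also precisely where the $\ein$-corrections earn their keep. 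Moreover the paper shows that for this class of functionals the $\zeta$-equation is an automatic consequence of horizontality along $E$, via the identity $(\zeta\pa_\zeta+\tfrac12)y=E(y)$, so the ``scaling/logarithmic part'' you single out as the hardest step requires no separate verification. Without some such reduction your plan remains a plausible outline rather than a proof.
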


The Principal hierarchy is given by the set of vector fields on $\cL M_0$ defined by the Poisson structure associated with the flat metric $\eta$ on $M_0$ and the Hamiltonians 
\[
H_{\hat\alpha,p} = \oint_{S^1} \theta_{\hat\alpha,p+1} \ dx,
\]
where the Hamiltonian densities $\theta_{\hat\alpha,p}$ are obtained by expanding at $\zeta=0$ the analytic part of the Levelt basis of deformed flat coordinates 
\[
\theta_{\hat\alpha}(\zeta) = \sum_{p\geq0} \theta_{\hat\alpha,p} \zeta^p .
\]
Since 
\[
\theta_{\hat\alpha,p} = \frac1{2 \pi i} \oint_{|z|=1} Q_{\hat\alpha,p} \frac{dz}z,
\]
the Hamiltonians of the Principal hierarchy are equal to those of the extended 2D Toda hierarchy defined before.
\begin{maintheorem}
The Principal hierarchy of the Frobenius manifold $M_0$ coincides with the extended dispersionless 2D Toda hierarchy restricted on $\cL M_0$. 
\end{maintheorem}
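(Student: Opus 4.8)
The plan is to deduce the statement from the two preceding theorems by matching two descriptions of the same object, so that no essentially new analysis is needed. By the second theorem $\{y_{\hat\alpha}(\zeta)\}_{\hat\alpha\in\hat\Z}$ is a Levelt basis of deformed flat coordinates for $M_0$, hence --- by the standard construction of the Principal hierarchy recalled in the Introduction (see \cite{D99, DZ01}) --- the Principal hierarchy of $M_0$ is, by definition, the set of Hamiltonian flows on $\cL M_0$ generated, with respect to the Poisson bracket of the flat metric $\eta$, by the Hamiltonians $\oint_{S^1}\theta_{\hat\alpha,p+1}\,dx$, with $\theta_{\hat\alpha}(\zeta)=\sum_{p\geq0}\theta_{\hat\alpha,p}\zeta^p$ the analytic part of the Levelt basis, given by \eqref{thetas}. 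By the first theorem the extended dispersionless 2D Toda hierarchy is the set of Hamiltonian flows $\frac{\pa}{\pa t^{\hat\alpha,p}}=\{\cdot,H_{\hat\alpha,p}\}_1$ with $H_{\hat\alpha,p}=\oint_{S^1}\frac1{2\pi i}\oint_{|z|=1}Q_{\hat\alpha,p+1}\,\frac{dz}z\,dx$. Thus the statement reduces to two points: (i) on $\cL M_0$ the Poisson bracket induced by the flat metric $\eta$ coincides with the restriction of the first 2D Toda bracket $\{\,,\}_1$; and (ii) the Hamiltonian densities agree, $\theta_{\hat\alpha,p}=\frac1{2\pi i}\oint_{|z|=1}Q_{\hat\alpha,p}\,\frac{dz}z$ for all $\hat\alpha\in\hat\Z$ and $p\geq0$. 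Granting (i) and (ii), the $t^{\hat\alpha,p}$-flow of the Principal hierarchy is $\{\cdot,\oint_{S^1}\theta_{\hat\alpha,p+1}\,dx\}_\eta=\{\cdot,H_{\hat\alpha,p}\}_1$, which is the $t^{\hat\alpha,p}$-flow of the extended hierarchy; the last sentence of the first theorem then guarantees that the classical Takasaki--Takebe 2D Toda flows belong to this hierarchy.

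For (i) I would recall from \cite{CDM10} that the flat metric $\eta$ of $M_0$ was constructed exactly as the first (hydrodynamic) metric of the Poisson pencil of the dispersionless 2D Toda hierarchy, so that the associated Dubrovin--Novikov bracket on $\cL M_0$ is $\{\,,\}_1$; concretely this amounts to writing $\eta$ in the flat coordinates of $M_0$ and checking that it reproduces the constant coefficient matrix defining $\{\,,\}_1$, which is precisely how $\eta$ was singled out in \cite{CDM10}. No genuinely new computation enters here, only care with normalisations and with the labelling of the flat coordinates of $M_0$ by $\hat\Z$.

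For (ii) --- the identity already recorded in the text after \eqref{yth} --- one expands the integrands of \eqref{thetas} in powers of $\zeta$ at $\zeta=0$ and compares with \eqref{Au}. For $\alpha\neq-1$ and for $\hat\alpha=u$ it is immediate, using $(2p)!!=2^pp!$, $e^{\frac{\bl-\l}2\zeta}=\sum_{p\geq0}\frac{(\bl-\l)^p}{(2p)!!}\zeta^p$ and $\frac{e^{\bl\zeta}-1}{\zeta}=\sum_{p\geq0}\frac{\bl^{p+1}}{(p+1)!}\zeta^p$. For $\hat\alpha=-1$ and $\hat\alpha=v$ the key is the expansion
\[
e^{-\l\zeta}\bigl(\ein(-\l\zeta)-1\bigr)=\sum_{p\geq0}\frac{(-\l)^p}{p!}\,(c_p-1)\,\zeta^p,
\]
which follows from $\ein(x)=\sum_{n\geq1}\frac{(-1)^{n-1}x^n}{n\,n!}$ and the classical identity $\sum_{n=1}^p(-1)^{n-1}\binom{p}{n}\frac1n=c_p$ for the harmonic numbers (together with the completely analogous expansion of the $e^{\bl\zeta}$-block appearing in $\theta_v$); combined with $e^{-\l\zeta}\log(1+\tfrac{\bl}{\l})=\sum_{p\geq0}\frac{(-\l)^p}{p!}\log(1+\tfrac{\bl}{\l})\zeta^p$ this reproduces $Q_{-1,p}$ and $Q_{v,p}$ term by term, the $e^{\frac{\bl-\l}2\zeta}$ summand of $\theta_{-1}$ accounting for the remaining $-\frac{(\bl-\l)^p}{(2p)!!}$.

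The conceptual content of the statement is already carried by the first two theorems --- the bi-Hamiltonian, tau-symmetric structure of the extended hierarchy and the Levelt property of the $y_{\hat\alpha}$ --- so what remains is organisational: prove (i) and (ii) and assemble them as in the first paragraph. Accordingly I expect the only point needing real attention to be (i), namely verifying that the abstract Dubrovin--Novikov bracket attached to $\eta$ on the infinite-dimensional manifold $M_0$ agrees with $\{\,,\}_1$ itself, not merely with a Poisson-equivalent bracket, once the index sets and the scaling conventions of \cite{CDM10} are reconciled; (ii) is a short power-series computation.
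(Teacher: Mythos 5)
Your proposal is correct and follows essentially the same route as the paper: the paper's own argument is precisely to note that the $\eta$-bracket on $\cL M_0$ coincides with $\{\,,\}_1$ (citing \cite{CDM10}) and that the Hamiltonian densities agree, $\theta_{\hat\alpha,p}=\frac1{2\pi i}\oint_{|z|=1}Q_{\hat\alpha,p}\,\frac{dz}z$, whence the flows are identical. Your explicit $\zeta$-expansions (in particular $e^{x}\ein(x)=\sum_p\frac{x^p}{p!}c_p$ via $\sum_{n=1}^p(-1)^{n-1}\binom{p}{n}\frac1n=c_p$) correctly supply the verification that the paper leaves implicit.
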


Notice that recently other examples of infinite-dimensional Frobenius manifolds and associated dispersionless hierarchies have appeared in the literature.  
In~\cite{Rai10} Raimondo has constructed a Frobenius manifold structure on a vector subspace of the space of Schwartz functions $\cS(\R)$ on the real line which is associated with the dispersionless Kadomtsev-Petviashvili hierarchy (dKP).
Wu and Xu~\cite{WX11} have defined a family of Frobenius manifolds on the space of pairs of certain even functions meromorphic in the interior/exterior of the unit disk in $\C$, which are related to the dispersionless  two-component BKP hierarchy. 
In both cases the authors define, essentially by bihamiltonian recursion,  dispersionless hierarchies which extend the original $2+1$ systems. Note that the definition of such hierarchies is somehow simpler since it does not require the construction in terms of  the Levelt normal form of the deformed flat connection, as presented here.


The article is organized as follows: in the first Section we define the extended dispersionless 2D Toda hierarchy on the loop space of holomorphic Lax symbols with certain conditions on the winding numbers. We first recall some basic facts on the dispersionless 2D Toda hierarchy and introduce the analytic setting. Next we give the Lax and bi-Hamiltonian formulation of the extended flows and show that they indeed contain the usual dispersionless flows of the 2D Toda hierarchy. 
In the second Section we study the deformed flat connection associated with the infinite-dimensional Frobenius manifold $M_0$ defined in~\cite{CDM10}. We obtain simple expressions for the metric and its Levi-Civita connection in a new set of ``mixed'' coordinates. Necessary background from the theory of the Frobenius manifold $M_0$ is recalled when necessary. We solve explicitly the deformed flatness equations and prove that our solution provides a Levelt system of deformed flat coordinates. Expanding in the deformation parameter we obtain the Hamiltonian densities of the Principal hierarchy of $M_0$ which coincide with those of the extended 2D Toda hierarchy.
Finally an alternative, though slightly more complicated, system of deformed flat coordinates satisfying an orthogonality condition (cf.~\cite[Theorem 3.6.4]{DZ01}), and the corresponding Principal hierarchy  are presented.

\section{The extended dispersionless 2D Toda hierarchy}

In this section we define an extension of the dispersionless 2D Toda hierarchy introduced by Takasaki and Takebe~\cite{TT95} as the small dispersion limit of the 2D Toda hierarchy of Ueno and Takasaki~\cite{UT84}. 
To perform such extension we assume that Lax functions $\la$, $\bla$ are non-vanishing holomorphic functions on neighborhoods of $z=\infty$, $0$ respectively, which contain the unit circle, and that they satisfy certain analytic assumptions. We begin by recalling the standard formulation of the dispersionless 2D Toda hierarchy.

\subsection{The dispersionless 2D Toda hierarchy}

The dispersionless 2D Toda hierarchy is an infinite set of commuting quasi-linear PDEs for two sets of variables $u_k$, $\bar u_l$ depending on a ``space'' variable $x$ and two series of independent ``time'' variables 
$t = (t_k)_{k\geq 0}$, $\bar t = (\bar t_k)_{k\geq 0}$.  Let the Lax symbols
\[
\la(z,x)=z+\sum_{k\leq 0}u_k(x) z^k, \qquad \bla(z,x)=\sum_{l\geq -1}\bar u_l(x) z^l.
\]
be two formal Laurent series in $z$. The dispersionless 2D Toda hierarchy is defined by the Lax equations
\bes
\label{eq:TodaLax}
\bea
&\frac{\pa \l}{\pa t_n} =  \{ (\l^n)_+ , \l \} , \qquad 
\frac{\pa \bla}{\pa t_n} =  \{ (\l^n)_+ , \bla \} , \\
&\frac{\pa \l}{\pa \bar t_n} =  \{ (\bla^n)_- , \l \} , \qquad 
\frac{\pa \bla}{\pa \bar t_n} =  \{ (\bla^n)_- , \bla \} .
\eea
\ees
The bracket of two functions of the variables $z,x$ is defined by
\[
\{ f(z,x), g(z,x) \} = 
z \frac{\pa f}{\pa z} \frac{\pa g}{\pa x} - z \frac{\pa g}{\pa z} \frac{\pa f}{\pa x} ,
\]
while the notations $(\ )_+$, $(\ )_-$ represent projections taken with respect to the variable $z$:
\[
 \left(\sum_k f_k z^k \right)_+ = \sum_{k\geq 0} f_k z^k, \qquad \left(\sum_k f_k z^k \right)_- = \sum_{k < 0} f_k z^k.  
\]

The equations ~\eqref{eq:TodaLax} are formal Laurent series in $z$: each coefficient defines an evolutionary quasi-linear equation involving a finite number of dependent variables $u_k$, $\bar u_l$. Such equations have the remarkable property of defining commutative flows
\[
\left[ \frac{\pa}{\pa s_n} , \frac{\pa}{\pa s_m} \right] = 0 
\]
for $s_n$ equal to either $t_n$ or $\bar t_n$. 

The flows~\eqref{eq:TodaLax} admit a bi-Hamiltonian formulation \cite{C05}
\bes
\label{toda-rec}
\bea
&\frac{\pa}{\pa t_n}\cdot = \{\cdot,H_n\}_1 = -\{\cdot,H_{n-1}\}_2 , \\
&\frac{\pa}{\pa \bar t_n}\cdot= \{\cdot, \bar H_n\}_1 
= \{\cdot ,\bar H_{n-1}\}_2
\eea
\ees
with Hamiltonians given by 
\[
H_n = - \int \res \frac{\la^{n+1}}{n+1} \frac{dz}z dx , \qquad
\bar H_n = - \int \res \frac{\bla^{n+1}}{n+1} \frac{dz}z dx 
,\]
where the residue of a formal series is  
$\res \sum_k f_k z^k \frac{dz}z = f_0$. 
The hydrodynamic type Poisson brackets $\{ , \}_1$ and $\{ , \}_2$ are compatible, i.e. any their linear combination is still a Poisson bracket. These Poisson brackets have been defined in~\cite{C05}. Their definition is recalled below in Proposition~\ref{prop:poi}.

\subsection{Analytic setting}

Denote $D_0$ the closed unit disc in the Riemann sphere $\C \cup \{ \infty\}$, $D_{\infty}$ the closure of the complement of  $D_0$ and $\S^1 = D_0 \cap D_\infty$ the unit circle.   
For a compact subset $K$ of the Riemann sphere, denote by $\cH(K)$ the space of holomorphic functions on $K$, i.e. functions which extend holomorphically to a neighbourhood of $K$. 

For each $p\in\Z$ the space of holomorphic functions on a neighborhood of $\S^1$ splits in a direct sum
\[
\cH(\S^1) = z^p \cH(D_0) \oplus z^{p-1} \cH(D_\infty) 
.\]
The projections 
\[
(\ )_{\geq p} : \cH(\S^1) \rightarrow z^p \cH(D_0) , \qquad
(\ )_{\leq p-1} : \cH(\S^1) \rightarrow z^{p-1} \cH(D_\infty)
\]
are given by
\bean
&(f)_{\geq p} (z)  = \sum_{k\geq p} f_k z^k = \frac{z^p}{2 \pi i} \oint_{|z|<|\zeta|} \frac{\zeta^{-p} f(\zeta)}{\zeta - z} \ d\zeta, \\
&(f)_{\leq p-1} (z)  = \sum_{k\leq p-1} f_k z^k = -\frac{z^p}{2 \pi i} \oint_{|z|>|\zeta|} \frac{\zeta^{-p} f(\zeta)}{\zeta - z} \ d\zeta
\eean
for $f(z) = \sum_{k\in\Z} f_k z^k \in \cH(\S^1)$. As usual $(\ )_+ = (\ )_{\geq0}$ and $(\ )_- =(\ )_{\leq-1}$. The symbol $(\ )_k: \cH(\S^1) \to \C$ denotes the coefficient of $z^k$ in the Laurent series expansion, i.e. $(f)_k = \frac1{2 \pi i} \oint_{|z|=1}  f(z) z^{-k} \frac{dz}z$.

Define the infinite-dimensional manifold $M$ as the affine subspace 
\[
M = 
\big\{ (\la(z), \bla(z)) \in z \cH(D_\infty) \oplus \frac1z \cH(D_0)
\ \big| \ \la(z) = z + O(1) \text{ for } z \to \infty \big\} 
\]
in the direct sum of the vector spaces $z \cH(D_\infty)$ and  
$\frac1z \cH(D_0)$. We will sometimes refer to $M$ as the space of pairs of holomorphic Lax symbols.

For $(\la, \bla) \in M$, the functions $\la(z)$, $\bla(z)$  have the following Laurent series expansions
\[
\la(z)=z+ \sum_{k\leq 0}u_k z^k , \qquad 
\bla(z)=\sum_{k\geq -1}\bar u_k z^k
\] 
at $\infty$ and $0$ respectively.

The tangent space at a point $\hat\la=(\la(z),\bla(z)) \in M$ will be identified with a space of pairs of functions
\[
T_{\hat\lambda}M \cong
\cH(D_{\infty}) \oplus \frac1z \cH(D_0) 
\]
where a vector $\pa_X$ is associated with the pair $X = (X(z),\bar{X}(z))$ given by $X(z)=\pa_X \la(z)$ and $\bar{X}(z)=\pa_X \bla(z)$.

The cotangent bundle at $\hat\la \in M$ will be given also by a space of pairs of functions 
\[
T^*_{\hat\lambda}M \cong
\cH(D_{0}) \oplus z\cH(D_{\infty}) 
\]
by representing a covector $\alpha$ as the pair $(\alpha(z),\bar\alpha(z))$ by the residue pairing
\beq
\label{residue_pairing}
<\alpha, X> = \frac1{2 \pi i} \oint_{|z|=1} \big[
\alpha(z) X(z) + \bar\alpha(z) \bar X(z) 
\big]
\, \frac{d z}{z} .
\eeq
%
Note that we are using a definition of residue pairing which is slightly different from the one in~\cite{CDM10}, in particular here the measure is $dz/z$ instead of $dz$, and as a consequence, in the representation of a covector as a pair of functions $(\alpha(z), \bar\alpha(z))$, we allow $\bar	\alpha(z)$ to have a simple pole rather than $\alpha(z)$. This difference is of course immaterial, but consistent with the usual dispersionless 2D Toda  formulation. 

A point in the loop space $\cL M$ of smooth maps from $S^1$ to the $M$ is given by a pair of functions $(\la(z,x) , \bla(z,x))$. Here $S^1 = \R \mod 2 \pi$, while the symbol $\S^1$ will denote the unit circle in $\C$. 

A tangent vector at a point $(\la(z,x),\bla(z,x)) \in \cL M$ is clearly identified with a map from $S^1$ to $\cH(D_\infty) \oplus \frac1z \cH(D_0)$ and a $1$-form with a map from $S^1$ to $\cH(D_0)\oplus z\cH(D_\infty)$. The pairing of a vector 
$X = (X(z,x), \bar X(z,x))$ and a $1$-form $\alpha=(\alpha(z,x),\bar\alpha(z,x))$ is
\[
< \alpha, X > = \frac1{2 \pi i} \oint_{S^1} \oint_{|z|=1}  \big[ \alpha(z,x) X(z,x) + \bar\alpha(z,x) \bar X(z,x) \big] \, \frac{dz}z dx ,
\]
which is the natural extension of the pairing~\eqref{residue_pairing}.

The equations \eqref{eq:TodaLax} defining the 2D Toda flows specify, for each $n>0$, a vector field over $\cL M$. 
Indeed, note that equations~\eqref{eq:TodaLax} are of the form 
\[
(\pa_t \la, \pa_t \bla )  = (\{ -Q_-, \la \},\{ Q_+, \bla\})
\] 
where $Q=\la^n$ or $\bla^n$. For $(\la, \bla) \in \cL M$, at fixed $x\in S^1$, we have $Q(z)\in\cH(\S^1)$ and we can easily check that this implies that $\{ Q_-, \la \} \in \cH(D_\infty)$ and $\{ Q_+, \bla\} \in \frac1z \cH(D_0)$. Hence $(\{ -Q_-, \la \},\{ Q_+, \bla\}) \in T_{\hat\la} \cL M$.

Recall that a Poisson bracket $\{,\}_i$ of two local  functionals $F$, $G$ on $\cL M$ is written in terms of a Poisson operator $P_i$ from the cotangent to the tangent space of $\cL M$ as follows
\[
\{ F, G \}_i = < dF , P_i(dG) >  .
\]
The bi-Hamiltonian structure of the 2D Toda hierarchy, in the dispersive case, was defined in~\cite{C05} by using R-matrix theory applied to an algebra of pairs of difference operators. We recall here the formulas for the dispersionless limit of the Poisson brackets, which were obtained in~\cite{CDM10}, in the analytic setting. 
\begin{proposition}
\label{prop:poi}
The maps $P_i : T^* \cL M \to T \cL M$ define compatible Poisson brackets on $\cL M$. Such maps, given a $1$-form $\hat\omega=(\omega,\bar\omega) \in T^*_{\hat\la} \cL M$ at $\hat\la=(\la, \bla) \in \cL M$, are defined by 
\bean
P_1(\hat\omega) &= \big( - \{ \la, (\omega-\bar\omega)_- \}+
( \{\la, \omega \} + \{ \bla, \bar\omega \} )_{\leq 0} , \\
&\qquad \{\bla, (\omega-\bar\omega)_+ \}+
( \{\la, \omega \} + \{ \bla, \bar\omega \} )_{> 0} \big) \\
P_2(\hat\omega) &= \big( \{\la, (\la\omega +\bla\bar\omega)_-\} -
\la ( \{\la, \omega \} + \{ \bla, \bar\omega \} )_{\leq 0} + 
z \la' \phi_x , \\
&\qquad - \{\bla, (\la\omega, +\bla\bar\omega)_+\} +
\bla ( \{\la, \omega \} + \{ \bla, \bar\omega \} )_{> 0} +
z \bla' \phi_x \big)
\eean
where
\[
\phi_x = \frac1{2 \pi i} \oint_{|z|=1} \big( \{\la, \omega \} + \{ \bla, \bar\omega \} \big) \frac{dz}z .
\]
\end{proposition}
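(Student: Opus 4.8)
The statement of Proposition~\ref{prop:poi} has two logically independent parts. First, one must check that the explicit formulas genuinely define maps $P_i : T^*\cL M \to T\cL M$, i.e.\ that for every covector $\hat\omega=(\omega,\bar\omega)\in\cH(D_0)\oplus z\cH(D_\infty)$ the pair $P_i(\hat\omega)$ lands in $\cH(D_\infty)\oplus\frac1z\cH(D_0)$. Second, one must show that the induced brackets $\{F,G\}_i=\langle dF,P_i(dG)\rangle$ are skew-symmetric, satisfy the Jacobi identity, and are compatible. The algebraic content of the second part---skew-symmetry, Jacobi, and the vanishing of the Schouten bracket $[P_1,P_2]=0$ that encodes compatibility---consists of identities between residue and differential-polynomial expressions in $\la$, $\bla$ and the covector components, which were established in~\cite{C05} for the dispersive 2D Toda structure. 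Since each of these identities holds formally, independently of whether $\la,\bla$ are read as formal Laurent series or as genuine holomorphic Lax symbols, they transfer verbatim to the analytic setting; structurally they reflect the fact that $P_1$ and $P_2$ are the linear and quadratic Poisson brackets of the classical $r$-matrix furnished by the Laurent splitting $\cH(\S^1)=z^p\cH(D_0)\oplus z^{p-1}\cH(D_\infty)$, which is also why any linear combination is again of the same type. The real work here is therefore the analytic well-definedness, where the correction $\phi_x$ in $P_2$ is the delicate point.

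The plan for well-definedness is a degree count. For $P_1$ I would unpack each summand using $\la=z+O(1)$, $\bla=\bar u_{-1}z^{-1}+O(1)$ and the bracket $\{f,g\}=z f_z g_x - z g_z f_x$, tracking the range of powers of $z$ produced. Using that $(\ )_-=(\ )_{\le -1}$ and $(\ )_{\le 0}$ yield nonpositive-power series while $(\ )_+$ and $(\ )_{>0}$ yield nonnegative ones, and that $\{\la,\cdot\}$, $\{\bla,\cdot\}$ shift degrees in a controlled way, one checks that the first component of $P_1(\hat\omega)$ contains only powers $z^k$ with $k\le 0$ (hence lies in $\cH(D_\infty)$) and the second only $k\ge -1$ (hence lies in $\frac1z\cH(D_0)$). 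This is routine bookkeeping.

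For $P_2$ the same count reveals that the term $-\la(\{\la,\omega\}+\{\bla,\bar\omega\})_{\le0}$ in the first component produces a spurious power $z^1$, equal to $z$ times the $z^0$-coefficient of $(\{\la,\omega\}+\{\bla,\bar\omega\})_{\le0}$, which would violate the normalization $\pa_X\la=O(1)$ defining the tangent space. The term $z\la'\phi_x$, whose $z^1$-coefficient is $\phi_x$ since $z\la'=z+O(z^{-1})$, is present precisely to cancel this: setting the total $z^1$-coefficient to zero forces
\[
\phi_x=\frac1{2\pi i}\oint_{|z|=1}\big(\{\la,\omega\}+\{\bla,\bar\omega\}\big)\frac{dz}z,
\]
which is the quantity in the statement. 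I would then verify that with this value the remaining powers of the first component are indeed $\le0$, and, separately, that the second component---where $z\bla'=-\bar u_{-1}z^{-1}+O(z)$ already lies in $\frac1z\cH(D_0)$---has all its powers $\ge-1$ for any scalar $\phi_x$. Thus the single scalar correction $\phi_x$ simultaneously renders both components tangent. That one scalar suffices for both constraints is the main obstacle in this step, although the count above indicates the second component is automatically admissible once $\phi_x$ is fixed by the first.

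Finally I would record the skew-symmetry directly, as it is short in the analytic language. It rests on two facts: the cyclic invariance of the trace under the Poisson bracket,
\[
\oint_{S^1}\oint_{|z|=1}f\{g,h\}\frac{dz}z\,dx=\oint_{S^1}\oint_{|z|=1}h\{f,g\}\frac{dz}z\,dx,
\]
which follows from the Leibniz rule together with $\oint_{S^1}\pa_x(\cdot)\,dx=0$ and $\oint_{|z|=1}z\pa_z(\cdot)\frac{dz}z=0$; and the adjointness of the Laurent truncations under the residue pairing, namely $(\ )_{\ge p}^{\,*}=(\ )_{\le -p}$ and its analogues. Substituting the formulas for $P_i$ into $\langle\hat\alpha,P_i(\hat\beta)\rangle$ and applying these two facts, the symmetric contributions cancel in pairs, leaving an expression manifestly antisymmetric under $\hat\alpha\leftrightarrow\hat\beta$; the $\phi_x$-terms in $P_2$ pair up and cancel by the same trace identity. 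Combining skew-symmetry and well-definedness with the Jacobi and compatibility identities inherited from~\cite{C05} then yields the Proposition.
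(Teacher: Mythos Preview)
The paper does not prove this proposition: it is stated without proof, prefaced only by the remark that ``the bi-Hamiltonian structure of the 2D Toda hierarchy was defined in~\cite{C05}'' and that the formulas are being recalled ``in the analytic setting.'' Your proposal therefore goes well beyond the paper's treatment. That said, your sketch is correct and is exactly the argument one would expect here: you defer the Jacobi identity and compatibility to~\cite{C05} (as the paper implicitly does), and you supply the one genuinely new point in the analytic setting, namely that $P_i(\hat\omega)$ actually lands in $T_{\hat\la}\cL M = \cH(D_\infty)\oplus\tfrac1z\cH(D_0)$. Your degree count is accurate, including the key observation that the sole obstruction is the $z^1$-coefficient in the first component of $P_2$, which is cancelled precisely by the stated $\phi_x$, while the second component is automatically admissible. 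The skew-symmetry verification via cyclic trace and adjointness of the Laurent projections is also fine.
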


The Hamiltonians
\[
H_n = - \frac1{2 \pi i} \int_{S^1} \oint_{|z|=1}  \frac{\la^{n+1}}{n+1} \frac{dz}z dx , \qquad
\bar H_n = -\frac1{2 \pi i} \int_{S^1} \oint_{|z|=1}   \frac{\bla^{n+1}}{n+1} \frac{dz}z dx 
\]
define local functionals on $\cL M$ and generate the Hamiltonian vector fields~\eqref{eq:TodaLax} according to~\eqref{toda-rec}. Summarizing well known facts in this analytic setting:
\begin{proposition}
The dispersionless 2D Toda hierarchy equations~\eqref{eq:TodaLax} define a set of bi-Hamiltonian commuting vector fields on $\cL M$, with respect to the Poisson brackets $\{,\}_i$ and with recursion relations~\eqref{toda-rec}. 
\end{proposition}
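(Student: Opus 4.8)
\emph{Proof proposal.} The plan is to verify the three assertions in turn — that \eqref{eq:TodaLax} defines vector fields on $\cL M$, that these vector fields commute, and that they obey the recursion \eqref{toda-rec} — by transporting the known formal statements of \cite{TT95} and \cite{C05} to the present analytic category, the only genuinely new point being that all splittings, projections and residues used there are now the convergent ones introduced in Section~1.2.

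First, the vector field property. This was essentially already observed in the paragraph preceding Proposition~\ref{prop:poi}: for $Q=\la^n$ (resp.\ $Q=\bla^n$) one has $Q\in\cH(\S^1)$ since $\la$ (resp.\ $\bla$) is holomorphic and non-vanishing on a neighbourhood of $\S^1$; then $Q_+\in\cH(D_0)$ and $Q_-\in z^{-1}\cH(D_\infty)$, and a short degree count at $z=0$ and $z=\infty$ — using $\la=z+O(1)$, $\bla=O(z^{-1})$ and that $\la_x,\bla_x$ are holomorphic wherever $\la,\bla$ are — shows $\{Q_-,\la\}\in\cH(D_\infty)$ and $\{Q_+,\bla\}\in\frac1z\cH(D_0)$, so the right-hand sides of \eqref{eq:TodaLax} lie in $T_{\hat\la}\cL M$.

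Second, commutativity. I would write each flow in the zero-curvature form $\pa_s\la=\{A_s,\la\}$, $\pa_s\bla=\{A_s,\bla\}$ with $A_{t_n}=(\la^n)_+$ and $A_{\bar t_n}=(\bla^n)_-$. Then $[\pa_s,\pa_{s'}]=0$ is equivalent to the compatibility relation $\pa_s A_{s'}-\pa_{s'}A_s+\{A_{s'},A_s\}=0$, which follows from the identities $\pa_s(\la^m)=\{A_s,\la^m\}$, $\pa_s(\bla^m)=\{A_s,\bla^m\}$ together with the fact that $(\ )_{\geq p}$ and $(\ )_{\leq p-1}$ are complementary idempotents on $\cH(\S^1)$ commuting with $\pa_x$ and with each $\pa_s$. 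These are precisely the manipulations of Takasaki--Takebe; since the projections of Section~1.2 satisfy the same algebra and all series involved converge on a neighbourhood of $\S^1$, their argument applies verbatim, and I would spell out one representative case, say $[\pa_{t_n},\pa_{\bar t_m}]=0$.

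Third, the bi-Hamiltonian recursion — the only genuinely computational step. Using the residue pairing \eqref{residue_pairing} and the constraint $\la=z+O(1)$ one identifies the differentials as $dH_n\leftrightarrow(-(\la^n)_+,\,0)$ and $d\bar H_n\leftrightarrow(0,\,-(\bla^n)_-)$ in $\cH(D_0)\oplus z\cH(D_\infty)$. Substituting these into the explicit formulas for $P_1$ and $P_2$ in Proposition~\ref{prop:poi}, using $\{\la,(\la^n)_+\}=\{(\la^n)_-,\la\}$ (since $\{\la,\la^n\}=0$) and the degree estimates from the first step to see that the truncations $(\ )_{\leq 0}$, $(\ )_{>0}$ act as identity or zero on these particular arguments and that $\phi_x$ drops out, one checks directly that $P_1(dH_n)$ equals the right-hand side of the $t_n$-flow and, after $n\mapsto n-1$, that $P_2(dH_{n-1})=-P_1(dH_n)$; symmetrically for the $\bar t_n$-flows with the opposite sign. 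This is the formal computation of \cite{C05}, now legitimate because every projection and residue is the analytic one. The main obstacle is exactly this last step: one must track carefully which projection is applied at each place and verify, via the degree counts at $0$ and $\infty$, that the correction terms in $P_1$ and $P_2$ collapse on the one-forms $dH_n$, $d\bar H_n$ — equivalently, that the analytic setting introduces no obstruction beyond what \cite{C05} already handled formally; everything else is unwinding of definitions or citation.
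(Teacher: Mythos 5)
The paper gives no proof of this proposition: it is explicitly presented as a summary of ``well known facts in this analytic setting,'' with the vector-field property checked in the paragraph preceding Proposition~\ref{prop:poi} (exactly your first step) and the bi-Hamiltonian statement deferred to~\cite{C05}. Your overall strategy of transporting the formal arguments to the analytic category is therefore the intended one, and your first two steps are fine.

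There is, however, a concrete inaccuracy in your third step. If you represent $dH_{n-1}$ by the projected covector $\big(-(\la^{n-1})_+,\,0\big)\in\cH(D_0)\oplus z\cH(D_\infty)$, then in the formula for $P_2$ one has $\{\la,\omega\}+\{\bla,\bar\omega\}=\{\la,(\la^{n-1})_-\}$, which lies in $\cH(D_\infty)$ but has a nonzero constant term in general; for instance for $n=2$ one finds $\phi_x=\pa_x u_{-1}$. So $\phi_x$ does \emph{not} drop out, nor do the truncations $(\ )_{\leq0}$, $(\ )_{>0}$ in $P_2$ act as identity or zero on these arguments; the various correction terms cancel only in aggregate. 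The clean repair is Lemma~\ref{repres}, which the paper sets up precisely for this purpose: choose instead the unprojected representative $(Q_\la,Q_\bla)=(-\la^{n-1},0)$ in $\cH(\S^1)\oplus\cH(\S^1)$, for which $\{\la,Q_\la\}+\{\bla,Q_\bla\}=0$ identically. Then $\phi_x$ and both middle terms of $P_2$ vanish term by term, and
\[
P_2(dH_{n-1})=\big(\{\la,(\la Q_\la)_-\},\,-\{\bla,(\la Q_\la)_+\}\big)
=\big(\{(\la^n)_-,\la\},\,-\{(\la^n)_+,\bla\}\big)
=-\big(\{(\la^n)_+,\la\},\{(\la^n)_+,\bla\}\big),
\]
which is $-\pa_{t_n}$, as required by~\eqref{toda-rec}; the computation for $P_1$ and for the $\bar t_n$-flows is analogous. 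This is exactly the manipulation the paper carries out later in the proof of the bi-Hamiltonian property of the extended hierarchy, so your argument is repairable with the tools already present, but as written the claimed simplifications in the $P_2$ computation are false.
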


Note that we have chosen to represent $1$-forms on $M$ by elements of $\cH(D_0) \oplus z\cH(D_\infty)$ using the pairing~\eqref{residue_pairing}. One can more generally represent a $1$-form by a pair of functions in $\cH(\S^1)\oplus \cH(\S^1)$: the $1$-form does not change by adding to the representative an element in $\frac1z \cH(D_\infty) \oplus z^2 \cH(D_0)$ (recall that $z^p\cH(D_\infty)$ and $z^p \cH(D_0)$ are seen here as subspaces of $\cH(\S^1)$, for any $p\in\Z$). The freedom of choosing the representative extends of course to the $1$-forms on the loop space. Later we will need the following easy to check observation:
\begin{lemma}
\label{repres}
For $i=1,2$, the Poisson map $P_i$ maps a $1$-form $\hat\omega$ to a vector $X = P_i(\hat\omega)$ which is independent of the choice of the representative for $\hat\omega \in T_{\hat\la} \cL M$ in $\cH(\S^1)\oplus \cH(\S^1)$.
\end{lemma}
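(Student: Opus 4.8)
The plan is to reduce the statement, by $\C$-linearity of $P_i$, to the assertion that $P_i$ annihilates every representative of the zero $1$-form. The formulas of Proposition~\ref{prop:poi} make sense verbatim for an arbitrary pair $\hat\omega=(\omega,\bar\omega)\in\cH(\S^1)\oplus\cH(\S^1)$, since they only involve the Poisson bracket, the projections and the residue, all defined on $\cH(\S^1)$; and, as recalled above, two representatives of the same $1$-form differ by an element of $\tfrac1z\cH(D_\infty)\oplus z^2\cH(D_0)$. Thus it suffices to check that $P_i(\delta\omega,\delta\bar\omega)=0$ whenever $\delta\omega\in\tfrac1z\cH(D_\infty)$ and $\delta\bar\omega\in z^2\cH(D_0)$ (as a byproduct this also re-establishes that $P_i$ lands in $T_{\hat\la}\cL M$ for any choice of representative).

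First I would record the relevant support bounds on the Laurent expansions at $\S^1$: since $\la=z+O(1)$ is holomorphic at $\infty$, each of $\la$, $z\la'$, $\la_x$ has only powers $z^k$ with $k\leq1$; since $\bla$ has a simple pole at $0$ and is otherwise holomorphic there, each of $\bla$, $z\bla'$, $\bla_x$ has only $k\geq-1$; by hypothesis $\delta\omega$ has only $k\leq-1$ and $\delta\bar\omega$ only $k\geq2$. These yield at once: $\{\la,\delta\omega\}$ and $\la\,\delta\omega$ are supported on powers $k\leq0$; $\{\bla,\delta\bar\omega\}$ and $\bla\,\delta\bar\omega$ on $k\geq1$; $(\delta\omega-\delta\bar\omega)_-=\delta\omega$ and $(\delta\omega-\delta\bar\omega)_+=-\delta\bar\omega$; and $\phi_x=\bigl(\{\la,\delta\omega\}\bigr)_0$. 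Substituting into the formula for $P_1$, both components collapse to zero immediately; for instance the first becomes $-\{\la,\delta\omega\}+\bigl(\{\la,\delta\omega\}\bigr)_{\leq0}+\bigl(\{\bla,\delta\bar\omega\}\bigr)_{\leq0}=-\{\la,\delta\omega\}+\{\la,\delta\omega\}+0=0$, and the second is analogous. So the case $i=1$ requires nothing beyond this bookkeeping.

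For $i=2$ the same substitutions reduce the first component of $P_2(\delta\omega,\delta\bar\omega)$ to $\{\la,(\la\,\delta\omega)_-\}-\la\,\{\la,\delta\omega\}+z\la'\phi_x$. I would then write $(\la\,\delta\omega)_-=\la\,\delta\omega-(\la\,\delta\omega)_0$, use the Leibniz rule $\{\la,\la\,\delta\omega\}=\delta\omega\{\la,\la\}+\la\{\la,\delta\omega\}=\la\{\la,\delta\omega\}$ and $\{\la,g\}=z\la'g_x$ for $g=g(x)$ independent of $z$, so that the expression becomes $z\la'\bigl(\phi_x-\pa_x(\la\,\delta\omega)_0\bigr)$. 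This vanishes by the identity $\phi_x=\pa_x(\la\,\delta\omega)_0$, which I would obtain from the definition of $\phi_x$ by an integration by parts in $z$ (turning $\bigl(\{\la,\delta\omega\}\bigr)_0$ into $\pa_x$ of the residue of $\la'\delta\omega$), combined with the normalization $\la=z+O(1)$, which forces the residue of $\la'\delta\omega$ to coincide with $(\la\,\delta\omega)_0$. The second component is handled in the same way, with $\la$ replaced by $\bla$ and using $\{\bla,\bla\,\delta\bar\omega\}=\bla\{\bla,\delta\bar\omega\}$. The main obstacle --- really the only nonroutine point in the whole argument --- is precisely this $z\la'\phi_x$ (resp.\ $z\bla'\phi_x$) term in $P_2$: $\phi_x$ itself does not vanish in general, so one has to see that it is exactly cancelled by the $\{\la,(\la\,\delta\omega)_0\}$ (resp.\ $\{\bla,(\la\,\delta\omega)_0\}$) contribution extracted from the bracket. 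Once that identity is in hand the lemma follows.
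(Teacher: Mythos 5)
Your proof is correct: the paper states this lemma without proof (``easy to check''), and your computation is exactly the intended verification --- the support bounds on the Laurent expansions kill all terms of $P_1$ immediately, and for $P_2$ you correctly identify and resolve the one nontrivial cancellation, namely that $z\la'\phi_x$ (resp.\ $z\bla'\phi_x$) is cancelled by the $\{\la,(\la\,\delta\omega)_0\}$ (resp.\ $\{\bla,(\la\,\delta\omega)_0\}$) contribution via the identity $\phi_x=\pa_x(\la\,\delta\omega)_0$, which indeed follows from integration by parts in $z$ together with $(\la\,\delta\omega)_0=(z\la'\delta\omega)_0$ (a consequence of $\la-z\la'$ having only non-positive powers of $z$). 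No gaps.
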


\subsection{The extended hierarchy: Lax formulation}
We now impose extra analyticity conditions on Lax functions which allow us to extend the 2D Toda hierarchy with new flows involving products of $\la$, $\bla$ and their logarithms. 

Let us define the manifold $M_1$ as the open subset of $M$ given by pairs of functions $(\la(z), \bla(z)) \in M$ which satisfy the following winding numbers condition:
\begin{quote}
the functions $w(z):= \la(z) + \bla(z)$, $\la(z)$ and $\bla(z)$ when restricted to the unit circle $\S^1 := \{ z \in \C \text{ s.t. }  |z|=1\}$ define analytic curves in $\C^\times$ with winding number  around $0$ respectively equal to $1$, $1$ and $-1$. 
\end{quote}

Since $M_1$ is an open subset of $M$ we can represent $T M_1$ and $T^* M_1$ as before.

The flows of the extended 2D Toda hierarchy are defined by the following Lax representation
\beq
\label{eq:lax}
\frac{\pa \la}{\ \pa t^{\hat\alpha,p}} = \{ -(Q_{\hat\alpha,p})_- , \la \} ,   \qquad
\frac{\pa \bla}{\ \pa t^{\hat\alpha,p}} = \{ (Q_{\hat\alpha,p})_+  , \bla \} ,
\eeq
for $\hat\alpha \in \hat\Z = \Z\cup\{u,v\}$ and $p\geq 0$. The $Q_{\hat\alpha,p}$ are functions of $\la$, $\bla$ defined by the formulas
\bes
\label{Qfns}
\bea
Q_{\alpha, p} &= - \frac{(\la+\bla)^{\alpha+1} (\bla - \la)^p}{(\alpha+1) (2p)!!}   \qquad \text{for} \qquad \alpha \not=-1 ,\\
Q_{-1, p} &= - \frac{(-\la)^p}{p!} \Big( \log \left( 1+\frac{\bla}{\la} \right) + c_p -1 \Big) - \frac{(\bla-\la)^p}{(2p)!!} , \\
Q_{v,p} &= - \frac{(-\la)^p}{p!} \Big(  \log \left( 1+\frac{\bla}{\la} \right)+ c_p -1 \Big)  +\nn\\
& \quad + \frac{\bla^p}{p!} \Big( \log \left( \bla (\la+ \bla) \right) - c_p -1 \Big) ,\\
Q_{u,p} &= \frac{\bla^{p+1}}{(p+1)!}  
\label{Au}
\eea
\ees
where $c_p = 1 + \cdots + \frac1p$ are the harmonic numbers (with $c_0=c_{-1}=0$). 
%

Let us consider the well-posedness of the Lax equations. We have seen above that each 2D Toda evolutionary flow defines a vector field of the form 
\[
(\{ -Q_-, \la \} , \{ Q_+ , \bla \} ) \in T_{\hat \la} \cL M
\]
over $\cL M$, for $Q= \la^n$ or $\bla^n$. This is based on the fact that $Q$ is an entire function of $\la$ or $\bla$, hence, by composition with $\la(z)$ or $\bla(z)$ it gives a function $Q(z)$ holomorphic in a neighborhood of the unit circle. This in turn implies that the projections make sense and that the dispersionless Lax equations define a vector field. 

For more general functions $Q(\la, \bla)$, which e.g. might not be holomorphic on the whole $\C^2$, the projections appearing in the Lax equations do not make sense for every $\hat\la \in \cL M$. Hence we need to impose extra conditions on the functions $\la$, $\bla$, as we can see from the following simple general observation.

\begin{lemma}
Let $Q(\la, \bla)$ be a multivalued holomorphic function on an open subset of $\C^2$ and $M'$ an open subset of $M$ defined by imposing extra conditions on $(\la,\bla)\in M$, such that 
\beq
\label{basic-cond}
Q(z) = Q(\la(z), \bla(z)) \in \cH(\S^1) \text{ for any } 
(\la,\bla)\in M'
.
\eeq
Then the Lax equations 
\[
\frac{\pa \la}{\ \pa t} = \{ -Q_- , \la \} ,   \qquad
\frac{\pa \bla}{\ \pa t} = \{ Q_+  , \bla \} ,
\]
give a well-defined vector field on $\cL M'$.
\end{lemma}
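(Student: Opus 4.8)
The plan is to verify that the prescription $(\pa_t\la,\pa_t\bla)=(\{-Q_-,\la\},\{Q_+,\bla\})$ assigns to each loop $(\la,\bla)\in\cL M'$ a genuine tangent vector, depending smoothly on the base point, so that it defines a vector field on $\cL M'$. Since $M'$ is an open subset of $M$, the tangent space of $M'$ at any point equals that of $M$; hence once the pair is shown to lie in $T_{\hat\la}M$ at every point of the loop, nothing further about tangency to the extra conditions need be checked, and a smooth section of $T\cL M$ along $\cL M'$ is automatically a vector field on $\cL M'$. I would run the verification pointwise in the loop variable $x$, in the same style as the text's check of well-posedness of the ordinary $2$D Toda flows, with $Q=\la^n,\bla^n$ replaced by the general $Q(\la,\bla)$.

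First I would explain why the projections $Q_\pm$ make sense. Although $Q$ is a priori only a possibly multivalued holomorphic function on an open subset of $\C^2$, hypothesis \eqref{basic-cond} says exactly that, after composition with $(\la(z),\bla(z))$ for $(\la,\bla)\in M'$, the function $Q(z)$ is a single-valued element of $\cH(\S^1)$; here one notes that the conditions carving out $M'$ pin down one branch of $Q$ on an annular neighborhood of $\S^1$, consistently as $x$ varies along the loop. Then $Q_+=(Q)_{\geq0}\in\cH(D_0)$ and $Q_-=(Q)_{\leq-1}\in z^{-1}\cH(D_\infty)$ are given by the splitting recalled above, and the right-hand sides of the Lax equations are meaningful.

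The core step is a degree count in $z$. From the Laurent expansions of $\la$ at $\infty$ and $\bla$ at $0$ one has $z\pa_z\la\in z\cH(D_\infty)$, $\pa_x\la\in\cH(D_\infty)$, $z\pa_z\bla,\pa_x\bla\in\frac1z\cH(D_0)$, while $z\pa_z Q_-,\pa_x Q_-\in z^{-1}\cH(D_\infty)$ and $z\pa_z Q_+\in z\cH(D_0)$, $\pa_x Q_+\in\cH(D_0)$ (the constant term of $Q_+$ is annihilated by $z\pa_z$, which is what keeps the second component from acquiring a double pole at $0$). Feeding these into $\{f,g\}=z\pa_z f\,\pa_x g-z\pa_z g\,\pa_x f$ and using that $\cH(D_\infty)$ and $\cH(D_0)$ are rings, one obtains $\{-Q_-,\la\}\in\cH(D_\infty)$ — in particular $O(1)$ at $\infty$, as required for tangency to the affine space $M$ — and $\{Q_+,\bla\}\in\frac1z\cH(D_0)$, i.e. $(\{-Q_-,\la\},\{Q_+,\bla\})\in T_{\hat\la}M$. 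It remains to note that $\la,\bla$ are $C^\infty$ in $x$ and $Q$ is holomorphic, so $x\mapsto Q(z,x)$ is $C^\infty$ with values in $\cH(\S^1)$; since the projections and the bilinear bracket are continuous, the resulting section of $T\cL M$ is smooth, and openness of $M'$ upgrades it to a vector field on $\cL M'$.

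I do not expect a serious obstacle: modulo the bookkeeping above, the statement runs parallel to the already-checked $2$D Toda case. The only point needing genuine care is the multivaluedness of $Q$ — one must be sure that, once composed with $(\la(z),\bla(z))$, both $Q(z)$ and the quantities entering the Poisson bracket are single-valued — and that is precisely what hypothesis \eqref{basic-cond} isolates; it is also exactly the ingredient that will have to be checked for the specific densities $Q_{\hat\alpha,p}$, which involve logarithms and (for non-integer $\alpha$) fractional powers, by cutting out a suitable open set $M'$ such as $M_1$, where the winding-number conditions make the relevant combinations single-valued on $\S^1$.
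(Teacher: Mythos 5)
Your proposal is correct and follows exactly the route the paper intends: the paper states this lemma without proof, pointing back to the "easily checked" degree count already used for $Q=\la^n,\bla^n$, and your argument is that same check — condition \eqref{basic-cond} makes the projections $Q_\pm$ meaningful, and the bracket's degree bookkeeping places $\{-Q_-,\la\}$ in $\cH(D_\infty)$ and $\{Q_+,\bla\}$ in $\frac1z\cH(D_0)$, i.e.\ in $T_{\hat\la}M=T_{\hat\la}M'$ since $M'$ is open. (Only a cosmetic quibble: the relevant point for avoiding a double pole at $0$ is that $z\pa_z Q_+$ has no $z^{-1}$ term because $Q_+\in\cH(D_0)$, not that its constant term is annihilated; the conclusion is unaffected.)
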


The conditions defining the manifold $M_1$ clearly imply that the property~\eqref{basic-cond} is satisfied for $Q$'s of the form~\eqref{Qfns}. For example the logarithmic part of $Q_{-1,p}$ can be written as 
\[
\log\left( 1+ \frac\bla\la \right) = 
\log \frac{\la+\bla}{z} - \log\frac\la{z} .
\]
The winding number condition on $\la + \bla$ is equivalent to the fact that $\frac{\la+\bla}z$ can be lifted to a map from $\S^1$ to the Riemann surface of the logarithm, i.e. the universal covering of $\C^\times$. Therefore $\log \frac{\la(z)+\bla(z)}{z}$, as a function of $z$, is in $\cH(\S^1)$. A similar reasoning shows that $\log\frac{\la(z)}{z} \in \cH(\S^1)$, hence $Q_{-1,p}(\la(z) , \bla(z) ) \in \cH(\S^1)$ for $(\la(z), \bla(z)) \in M_1$. In conclusion the Lax equations~\eqref{eq:lax} provide well-defined vector fields on $\cL M_1$.

Let us observe that the winding number conditions on $\la(z)$ and $\bla(z)$ are equivalent to the fact that $\la(z)$, $\bla(z)$ are not vanishing on $D_\infty$, $D_0$ respectively and that the leading term of $\bla(z)$ is non-zero.
\begin{lemma}
\label{windingzeros}
Let $(\la(z), \bla(z)) \in M$. The function $\la(z)$ restricted to $\S^1$ parametrizes an analytic curve in $\C^\times$ of winding number $1$ if and only if $\la(z)$ is non-vanishing in $D_\infty$. The function $\bla(z)$ restricted to $\S^1$ parametrizes an analytic curve in $\C^\times$ of winding number $-1$ if and only if $\bla(z)$ is non-vanishing in $D_0$ and the leading coefficient $\bar u_{-1}$ in  $\bla(z) = \bar u_{-1} z^{-1} + O(1)$ for $|z|\to0$ is non-zero. 
\end{lemma}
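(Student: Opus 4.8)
The plan is to deduce both equivalences from the argument principle, by producing in each case an \emph{exact} formula for the winding number in terms of the number of interior zeros; since such a formula is a two-sided statement, it yields both implications of the ``if and only if'' simultaneously. Consider first $\la$. As $\la\in z\cH(D_\infty)$ we may write $\la(z)=z\,g(z)$ with $g:=\la/z\in\cH(D_\infty)$, and the normalization $\la(z)=z+O(1)$ at $\infty$ gives $g(\infty)=1\neq0$, so $g$ is genuinely holomorphic on a neighbourhood of $D_\infty$ with no zero at $\infty$. Winding numbers around $0$ being additive under multiplication of $\C^\times$-valued curves, the winding number of $\la|_{\S^1}$ equals $1$ (the winding number of $z\mapsto z$) plus that of $g|_{\S^1}$. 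To evaluate the latter I would substitute $w=1/z$, so that $\S^1$ traversed counterclockwise in $z$ becomes $\S^1$ traversed clockwise in $w$ while $\tilde g(w):=g(1/w)$ is holomorphic on a neighbourhood of the closed unit disc in $w$ with $\tilde g(0)=1$; the argument principle then gives $\mathrm{wind}(g|_{\S^1})=-Z$, where $Z$ is the number of zeros of $\tilde g$ in $\{|w|<1\}$, equivalently the number of zeros of $\la$ in $\{|z|>1\}$, counted with multiplicity. Hence $\mathrm{wind}(\la|_{\S^1})=1-Z$. Thus if $\la$ is nonvanishing on $D_\infty$ then in particular $\la\neq0$ on $\S^1$, so the image curve lies in $\C^\times$, $Z=0$, and the winding number is $1$; conversely if $\la|_{\S^1}$ is a curve in $\C^\times$ of winding number $1$ then $Z=0$, which together with $\la\neq0$ on $\S^1$ gives $\la\neq0$ on all of $D_\infty$.

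For $\bla$ the argument is the mirror image, and slightly simpler since no change of variable is needed. Write $\bla(z)=h(z)/z$ with $h:=z\bla\in\cH(D_0)$ and $h(0)=\bar u_{-1}$. By additivity, $\mathrm{wind}(\bla|_{\S^1})=-1+\mathrm{wind}(h|_{\S^1})$, and applying the argument principle directly on the unit disc $D_0$ with its standard counterclockwise boundary orientation gives $\mathrm{wind}(h|_{\S^1})=Z'$, the number of zeros of $h$ in $\{|z|<1\}$ with multiplicity; hence $\mathrm{wind}(\bla|_{\S^1})=Z'-1$. So the winding number is $-1$ precisely when $h$ has no zero in $\{|z|<1\}$. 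Finally, since $h(0)=\bar u_{-1}$ while for $z\neq0$ one has $h(z)=0$ if and only if $\bla(z)=0$, the condition ``$h$ has no zero in $D_0$'' is exactly ``$\bar u_{-1}\neq0$ and $\bla$ has no zero in $D_0\setminus\{0\}$''. Packaging the two directions as before --- nonvanishing on $\S^1$ being built into the phrase ``curve in $\C^\times$'' --- completes the proof.

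There is no substantial obstacle here; the only point demanding care is the orientation bookkeeping in the first half, where $\S^1$ is the positively oriented boundary of $D_0$ but the \emph{negatively} oriented boundary of $D_\infty$ --- a nuisance that is handled transparently by the substitution $w=1/z$, which turns $D_\infty$ into a standard disc. One should also keep in mind that the phrase ``parametrizes an analytic curve in $\C^\times$'' already encodes the nonvanishing of the function on $\S^1$, which is precisely what lets a single winding-number identity deliver both directions of each equivalence.
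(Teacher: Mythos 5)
Your proof is correct and rests on the same idea as the paper's: the argument principle expresses the winding number as a count of zeros (and the pole/zero order at $0$ or $\infty$), and the sign constraints on these counts force them all to vanish. The paper phrases this directly as the logarithmic residue $\frac1{2\pi i}\oint \bla'/\bla\,dz = N+N_0$ with $N\geq 0$, $N_0\geq -1$, while you factor out $z^{\pm1}$ and use additivity of winding numbers; this is only a cosmetic reorganization of the same computation.
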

\begin{proof}
The winding number of the curve parametrized by $\bla:\S^1\to\C^\times$ is given by the logarithmic residue 
\[
\frac1{2 \pi i} \oint_{|z|=1} \frac{\bla'(z)}{\bla(z)} dz =
N + N_0
\]
where $N\geq0$ is the number of zeros (counted with their multiplicity) of $\bla(z)$ in $D_0\backslash\{0\}$ and $N_0\geq-1$ is the order of $\bla(z)$ at $z=0$. Clearly if the winding number is $-1$ we must have $N=0$ and $N_0=-1$. The converse is obvious as is the statement on $\la(z)$.
\end{proof}

For $(\la(z), \bla(z))\in M_1$ this observation implies that $\frac{\la(z)}z$ (resp. $z\bla(z))$ maps $D_\infty$ (resp. $D_0$) to a bounded subset of $\C^\times$ hence
\bea
\label{logs}
\log\frac{\la(z)}z\in\frac1z\cH(D_\infty), \quad
\log z\bla(z) \in z \cH(D_0).
\eea

Commutativity as usual follows from the Zakharov-Shabat equations, which turn out to be quite simple.
\begin{proposition}
The Lax equations~\eqref{eq:lax} imply that the following ZS zero curvature equations hold
\beq
\label{zs}
\frac{\pa Q_{\hat\alpha,p}}{\pa t^{\hat\beta,q}} - \frac{\pa Q_{\hat\beta,q}}{\pa t^{\hat\alpha,p}} + \{ (Q_{\hat\alpha,p})_+ , (Q_{\hat\beta,q})_+  \} -
\{ (Q_{\hat\alpha,p})_- , (Q_{\hat\beta,q})_-  \} =0 .
\eeq
\end{proposition}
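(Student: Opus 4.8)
The plan is to derive the zero-curvature relation \eqref{zs} directly from the Lax equations \eqref{eq:lax}, exploiting two facts: that each $Q_{\hat\alpha,p}$ is a function of $\la$ and $\bla$ only --- hence it depends on a time $t^{\hat\beta,q}$ solely through $\la(z,x)$ and $\bla(z,x)$ --- and that the bracket $\{\,,\,\}$ is a derivation in each of its two arguments. First I would record the resulting chain-rule identities: for $F=F(\la,\bla)$ of the form \eqref{Qfns} and any $h\in\cH(\S^1)$, writing $\pa_\la F$, $\pa_\bla F$ for the partial derivatives with respect to $\la$, $\bla$, the equalities $\pa_z F=(\pa_\la F)\,\pa_z\la+(\pa_\bla F)\,\pa_z\bla$ and $\pa_x F=(\pa_\la F)\,\pa_x\la+(\pa_\bla F)\,\pa_x\bla$ yield at once
\[
\{h,F\}=(\pa_\la F)\{h,\la\}+(\pa_\bla F)\{h,\bla\},\qquad
\frac{\pa F}{\pa t^{\hat\beta,q}}=(\pa_\la F)\frac{\pa\la}{\pa t^{\hat\beta,q}}+(\pa_\bla F)\frac{\pa\bla}{\pa t^{\hat\beta,q}}.
\]
The only subtlety is that $Q_{-1,p}$ and $Q_{v,p}$ contain genuinely multivalued logarithmic terms, but by the winding-number conditions defining $M_1$ (cf.\ the discussion preceding \eqref{logs}) the functions $\log\frac{\la+\bla}{z}$, $\log\frac{\la}{z}$ and $\log z\bla$ belong to $\cH(\S^1)$, so each composed $Q_{\hat\alpha,p}(\la(z),\bla(z))$ --- and likewise each partial derivative $\pa_\la Q_{\hat\alpha,p}$, $\pa_\bla Q_{\hat\alpha,p}$, which is rational in the logarithmic arguments --- is an honest element of $\cH(\S^1)$, to which the splitting $\cH(\S^1)=\cH(D_0)\oplus z^{-1}\cH(D_\infty)$ and the projections $(\ )_\pm$ apply; moreover $\{\la,\bla\}\in\cH(\S^1)$ is well defined on $\S^1$, and differentiating a logarithm produces single-valued objects, so the identities above are legitimate throughout.

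Then, abbreviating $P=Q_{\hat\alpha,p}$, $R=Q_{\hat\beta,q}$, $s=t^{\hat\alpha,p}$, $\tau=t^{\hat\beta,q}$ (not to be confused with the Poisson operators $P_i$), I would apply the chain rule together with \eqref{eq:lax} to write $\frac{\pa P}{\pa\tau}=(\pa_\la P)\{-R_-,\la\}+(\pa_\bla P)\{R_+,\bla\}$, and then use $\{-R_-,P\}=(\pa_\la P)\{-R_-,\la\}+(\pa_\bla P)\{-R_-,\bla\}$ to absorb the first summand, obtaining
\[
\frac{\pa P}{\pa\tau}=\{-R_-,P\}+(\pa_\bla P)\{R,\bla\}=\{-R_-,P\}+(\pa_\bla P)(\pa_\la R)\{\la,\bla\},
\]
since $\{R,\bla\}=(\pa_\la R)\{\la,\bla\}$ as $\{\bla,\bla\}=0$. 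An analogous manipulation, absorbing the $(\ )_+$ term this time, gives
\[
\frac{\pa R}{\pa s}=\{P_+,R\}-(\pa_\la R)\{P,\la\}=\{P_+,R\}+(\pa_\la R)(\pa_\bla P)\{\la,\bla\}.
\]
Subtracting, the two ``anomalous'' terms $(\pa_\bla P)(\pa_\la R)\{\la,\bla\}$ are equal and cancel, leaving $\frac{\pa P}{\pa\tau}-\frac{\pa R}{\pa s}=\{-R_-,P\}-\{P_+,R\}$. Expanding $P=P_++P_-$ and $R=R_++R_-$ by bilinearity and using antisymmetry of the bracket, the mixed terms $\{R_-,P_+\}$ and $\{P_+,R_-\}$ cancel, and one is left with $\frac{\pa P}{\pa\tau}-\frac{\pa R}{\pa s}=-\{P_+,R_+\}+\{P_-,R_-\}$, which is precisely \eqref{zs}.

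I do not expect a genuine obstacle: once the analytic setup is in force, \eqref{zs} is a formal consequence of the derivation property of $\{\,,\,\}$ together with elementary bookkeeping with the projections $(\ )_\pm$. The one point needing care --- and the reason the winding-number hypotheses defining $M_1$ are used --- is exactly the justification recalled in the first paragraph, namely that the logarithmic $Q$'s and their $\la$-, $\bla$-derivatives compose to bona fide functions in $\cH(\S^1)$ so that the chain rule and the decomposition $\cH(\S^1)=\cH(D_0)\oplus z^{-1}\cH(D_\infty)$ apply; this has already been arranged in the preceding subsection. Commutativity of the flows \eqref{eq:lax} then follows from \eqref{zs} in the standard way.
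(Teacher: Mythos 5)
Your proof is correct and follows essentially the same route as the paper's: apply the chain rule to $\pa Q_{\hat\alpha,p}/\pa t^{\hat\beta,q}$, use the derivation property of $\{\,,\,\}$ to absorb one projection into a full bracket, observe that the leftover cross terms proportional to $\{\la,\bla\}$ cancel between the two time derivatives, and finish by expanding $Q=Q_++Q_-$. The only differences are presentational (the paper absorbs the $(\ )_-$ part in both computations and cancels via antisymmetry at the end, whereas you absorb $(\ )_-$ in one and $(\ )_+$ in the other so the anomalous terms cancel immediately), plus your explicit justification of the chain rule for the logarithmic $Q$'s, which the paper delegates to the preceding subsection.
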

\begin{proof}
To avoid cumbersome notations let 
$Q := Q_{\hat\alpha,p}$ and $\tilde Q := Q_{\hat\beta, q}$. Computing
\[
\frac{\pa Q}{\pa t^{\hat\beta,q}}= Q_\la \{ - \tilde Q_- , \la \} 
+Q_\bla \{ \tilde Q_+, \bla \},
\]
substituting in the last term $\tilde Q_+ = \tilde Q - \tilde Q_-$ and using the Leibniz rule we get
\[
\frac{\pa Q}{\pa t^{\hat\beta,q}}=\{ - \tilde Q_-, Q \} + Q_\bla \{ \tilde Q, \bla \}
\]
hence
\[
\frac{\pa Q}{\pa t^{\hat\beta,q}} - \frac{\pa \tilde Q}{\pa t^{\hat\alpha,p}} = \{ - \tilde Q_-, Q \} + Q_\bla \{ \tilde Q, \bla \}
-\{ - Q_-, \tilde Q \} - \tilde Q_\bla \{ Q, \bla \} .
\]
Rewriting the last term in the right-hand side as 
\[
- \tilde Q_\bla Q_\la \{ \la, \bla \} = - Q_\la \{ \la , \tilde Q \}
\]
and using the Leibniz rule again we get
\[
\frac{\pa Q}{\pa t^{\hat\beta,q}} - \frac{\pa \tilde Q}{\pa t^{\hat\alpha,p}} = \{ - \tilde Q_-, Q \} -\{ - Q_-, \tilde Q \} 
+\{ \tilde Q ,Q \} .
\]
A simple rearrangement of the right-hand side gives the desired result.
\end{proof}

The commutativity of the flows~\eqref{eq:lax} follows from~\eqref{zs}. For example 
\[
\left[ \pa_{t^{\hat\alpha,p}} , \pa_{t^{\hat\beta,q}} \right] \la(z) = 
\left\{ - \frac{\pa (Q_{\hat\alpha,p})_-}{\pa t^{\hat\beta,q}} + \frac{\pa (Q_{\hat\beta,q})_-}{\pa t^{\hat\alpha,p}}  +
\{ (Q_{\hat\alpha,p})_- , (Q_{\hat\beta,q})_-  \}
, \la \right\} =0
\]
because the first term in the big curly bracket is given by the projection to $\frac1z \cH(D_\infty)$ of the left-hand side of~\eqref{zs}.

Let us consider a couple of explicit examples of the extended flows.
\begin{example}
From~\eqref{Qfns} we have that 
\[
Q_{v,0} = \log \la + \log \bla  = \log \frac\la{z} + \log z \bla.
\]
Using~\eqref{logs}, we can easily compute the Lax equations
\[
\frac{\pa\la}{\pa t^{v,0}} 
= \{ - (Q_{v,0})_- , \la \} 
= \{ - \log \frac\la{z} , \la \} 
= \{ \log z , \la \} = \frac{\pa \la}{\pa x} 
\]
and similarly
\[
\frac{\pa\bla}{\pa t^{v,0}} = \frac{\pa \bla}{\pa x}
.\]
Therefore the time $t^{v,0}$ corresponds to the space variable $x$, as expected from the Frobenius manifold construction in the next section. Note that the $x$-translation was not present among the original dispersionless Toda Lax flows.  
\end{example}

\begin{example} 
Consider now the Lax equations for the nontrivial time $t^{v,1}$. From~\eqref{Qfns} we get
\[
Q_{v,1} = \la \log \left( 1 + \frac\bla\la \right)  
+ \bla \log ( \la \bla + \bla^2 ) - 2 \bla .
\] 
For $(\la, \bla)\in M_1$ we can evaluate the projections of 
$Q_{v,1}(z) = Q_{v,1}(\la(z), \bla(z))$. 
To this aim is convenient to rewrite the previous expression as
\[
Q_{v,1} = (\la+\bla) \log \frac{\la+\bla}{z} - \la \log \frac\la{z} 
+ \bla ( \log z \bla -2 )
.\]
Using the identities
\[
\left( \la \log \frac\la{z} \right)_+ = u_0, \quad
\left( \bla \log z \bla \right)_- = ( \bar u_{-1} \log \bar u_{-1} )z^{-1} 
\]
we easily obtain 
\[
(Q_{v,1})_- = \left( (\la+\bla) \log \frac{\la+\bla}{z} \right)_- 
-\la \log\frac\la{z} + u_0 + \bar u_{-1} ( \log \bar u_{-1} -2) z^{-1} 
\]
and then 
%
\bean
\frac{\pa\la}{\pa t^{v,1}} 
&= \{ -(Q_{v,1})_- , \la \} \\
&= \left(
- \left( z ( \la_z + \bla_z ) \log\frac{\la+\bla}z \right)_- 
- u_0 + \bar u_{-1} z^{-1} \log \bar u_{-1} 
\right) \la_x \\
&+ \left( 
\left( z ( \la_x + \bla_x ) \log\frac{\la+\bla}z \right)_- 
+ \frac{\pa\bar u_{-1}}{\pa x} \log \bar u_{-1} 
\right) \la_z
.\eean
With a similar computation one can obtain $\frac{\pa\bla}{\pa t^{v,1}}$.
\end{example}

\subsection{The extended hierarchy: bi-Hamiltonian formulation}

We now show that the evolutionary flows of the extended 2D Toda hierarchy are bi-Hamiltonian with respect to the Poisson structures $P_i$.

The Hamiltonians are functionals on $\cL M_1$ given by
\[
H_{\hat\alpha,p} = \oint_{S^1} h_{\hat\alpha,p} \ dx , 
\quad
\hat\alpha \in \hat\Z, \quad p \geq -1
\]
where the Hamiltonian densities $h_{\hat\alpha,p}$ are expressed in terms of the functions $Q_{\hat\alpha,p}$ defined in~\eqref{Qfns} as
\[
h_{\hat\alpha,p} = 
\frac1{2 \pi i} \oint_{|z|=1} Q_{\hat\alpha,p+1}(\la(z), \bla(z)) \, \frac{dz}z .
\]

\begin{proposition}
The flows of the extended 2D Toda hierarchy are bi-Hamiltonian w.r.t. the Poisson brackets $\{,\}_i$ 
\[
\frac{\pa}{\pa t^{\hat\alpha,p}}  \cdot = \{ \cdot , H_{\hat\alpha,p} \}_1
\]
with Hamiltonians $H_{\hat\alpha,p}$ and the recursion relations
\bean
&\{ \cdot , H_{\alpha,p} \}_2 = (\alpha+p+2) \{ \cdot , H_{\alpha,p+1} \}_1 ,\\
&\{ \cdot , H_{v,p} \}_2 = (p+1) \{ \cdot , H_{v,p+1} \}_1 +
2 \{ \cdot , H_{u,p} \}_1 ,\\
&\{ \cdot, H_{u,p} \}_2 = (p+2) \{ \cdot , H_{u,p+1} \}_1 
\eean
for $p\geq-1$. Moreover $H_{\hat\alpha,-1}$ are Casimirs of $\{,\}_1$.
\end{proposition}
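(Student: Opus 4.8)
The plan is to prove the three assertions -- the Hamiltonian representation with respect to $\{,\}_1$, the bi-Hamiltonian recursion, and the Casimir property -- simultaneously, by evaluating $P_1(dH_{\hat\alpha,p})$ and $P_2(dH_{\hat\alpha,p})$ directly from Proposition~\ref{prop:poi}. The first step is to identify the differentials: varying
\[
H_{\hat\alpha,p} = \oint_{S^1} \frac1{2\pi i} \oint_{|z|=1} Q_{\hat\alpha,p+1}(\la(z),\bla(z)) \, \frac{dz}z \, dx
\]
and comparing with the pairing on $\cL M_1$ shows that, as a $1$-form, $dH_{\hat\alpha,p}$ is represented by the pair $(\pa_\la Q_{\hat\alpha,p+1}, \pa_\bla Q_{\hat\alpha,p+1})$. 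The winding numbers condition defining $M_1$ makes $\la+\bla$, $\la$ and $z\bla$ non-vanishing near $\S^1$, so these derivatives -- built only out of powers of $\la+\bla$ and $\bla$ together with the rational functions $1/\la$, $1/(\la+\bla)$, $1/\bla$ coming from the logarithms in~\eqref{Qfns} -- genuinely belong to $\cH(\S^1) \oplus \cH(\S^1)$, and by Lemma~\ref{repres} the vectors $P_i(dH_{\hat\alpha,p})$ do not depend on this choice of representative.

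The computation then collapses thanks to the elementary identity
\[
\{ \la, \pa_\la Q \} + \{ \bla, \pa_\bla Q \} = 0
\]
valid for every $C^2$ function $Q = Q(\la,\bla)$: since $\{\la, f\} = f_\bla \{\la,\bla\}$ and $\{\bla, g\} = -g_\la\{\la,\bla\}$ for functions $f$, $g$ of $\la$, $\bla$, the left-hand side equals $(Q_{\la\bla} - Q_{\la\bla})\{\la,\bla\} = 0$. Evaluating $P_1$ and $P_2$ on $\hat\omega = (\pa_\la Q_{\hat\alpha,p+1}, \pa_\bla Q_{\hat\alpha,p+1})$ this kills every projection term $(\cdots)_{\leq 0}$, $(\cdots)_{>0}$ and the term $\phi_x$ in Proposition~\ref{prop:poi}, leaving
\bean
P_1(dH_{\hat\alpha,p}) &= \big( -\{ \la, ((\pa_\la-\pa_\bla)Q_{\hat\alpha,p+1})_- \} ,\ \{ \bla, ((\pa_\la-\pa_\bla)Q_{\hat\alpha,p+1})_+ \} \big) , \\
P_2(dH_{\hat\alpha,p}) &= \big( \{ \la, ((\la\pa_\la+\bla\pa_\bla)Q_{\hat\alpha,p+1})_- \} ,\ -\{ \bla, ((\la\pa_\la+\bla\pa_\bla)Q_{\hat\alpha,p+1})_+ \} \big) .
\eean

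It then remains to establish, by a direct computation from~\eqref{Qfns}, the generating-function identities
\bean
&(\pa_\la-\pa_\bla) Q_{\hat\alpha,p+1} = - Q_{\hat\alpha,p} , \qquad \hat\alpha\in\hat\Z, \ p\geq0 , \\
&(\la\pa_\la+\bla\pa_\bla) Q_{\alpha,p+1} = (\alpha+p+2) Q_{\alpha,p+1} , \qquad \alpha\in\Z , \\
&(\la\pa_\la+\bla\pa_\bla) Q_{v,p+1} = (p+1) Q_{v,p+1} + 2 Q_{u,p} , \\
&(\la\pa_\la+\bla\pa_\bla) Q_{u,p+1} = (p+2) Q_{u,p+1} .
\eean
In the coordinates $w = \la+\bla$, $v = \bla-\la$ one has $\pa_\la - \pa_\bla = -2\pa_v$ and $\la\pa_\la + \bla\pa_\bla = w\pa_w + v\pa_v$, so for $\alpha \neq -1$ these identities are immediate from the bihomogeneity of $Q_{\alpha,p}$ in $(w,v)$, while the cases $\hat\alpha \in \{-1,v,u\}$ require handling the logarithmic terms by hand. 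Here the harmonic numbers are decisive: differentiating expressions of the form $w^k\log w$ or $\bla^k\log\bla$ produces anomalous non-logarithmic contributions, and the relation $c_{p+1} = c_p + \frac1{p+1}$ is exactly what is needed for these to recombine into the right-hand sides above; in particular the leftover anomaly $\frac{2\bla^{p+1}}{(p+1)!} = 2Q_{u,p}$ produced by the term $\bla^{p+1}\log(\bla(\la+\bla))$ of $Q_{v,p+1}$ is the origin of the extra term in the $v$-recursion. This is the computational heart of the argument and the step I expect to require the most care, although it is entirely mechanical.

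Granting these identities, the three assertions follow at once. Substituting $(\pa_\la-\pa_\bla)Q_{\hat\alpha,p+1} = -Q_{\hat\alpha,p}$ into the formula for $P_1(dH_{\hat\alpha,p})$ reproduces exactly the Lax vector field $(\{-(Q_{\hat\alpha,p})_-,\la\}, \{(Q_{\hat\alpha,p})_+,\bla\})$ of~\eqref{eq:lax}, which gives $\pa/\pa t^{\hat\alpha,p}\,\cdot = \{\cdot, H_{\hat\alpha,p}\}_1$. Substituting the Euler-type identities into $P_2(dH_{\hat\alpha,p})$ and comparing with $P_1(dH_{\hat\alpha,p+1})$ -- itself computed from the first identity -- yields the recursion relations; these hold for all $p\geq-1$ because at $p=-1$ only the identities for $Q_{\hat\alpha,0}$ and $Q_{\hat\alpha,1}$ are invoked. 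Finally, for the Casimir property one computes $(\pa_\la-\pa_\bla)Q_{\hat\alpha,0}$ directly: it equals $0$ for $\alpha\neq-1$, $1/\la$ for $\alpha=-1$, $1/\la - 1/\bla$ for $\hat\alpha=v$, and $-1$ for $\hat\alpha=u$. By Lemma~\ref{windingzeros}, on $M_1$ one has $1/\la\in z^{-1}\cH(D_\infty)$ and $1/\bla\in z\cH(D_0)$, so in every case the $(\ )_-$-part of $(\pa_\la-\pa_\bla)Q_{\hat\alpha,0}$ depends only on $\la$ and its $(\ )_+$-part only on $\bla$; hence $\{\la, (\cdots)_-\} = \{\bla, (\cdots)_+\} = 0$ and $P_1(dH_{\hat\alpha,-1}) = 0$.
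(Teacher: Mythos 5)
Your proposal is correct and follows essentially the same route as the paper: represent $dH_{\hat\alpha,p}$ by $(\pa_\la Q_{\hat\alpha,p+1},\pa_\bla Q_{\hat\alpha,p+1})$ via Lemma~\ref{repres}, use the identity $\{\la,Q_\la\}+\{\bla,Q_\bla\}=0$ to collapse the Poisson maps of Proposition~\ref{prop:poi}, and reduce everything to the derivative identity $(\pa_\bla-\pa_\la)Q_{\hat\alpha,p}=Q_{\hat\alpha,p-1}$ together with the Euler-type homogeneity relations, with the Casimir property following from $1/\la\in\frac1z\cH(D_\infty)$, $1/\bla\in z\cH(D_0)$. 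Your treatment is if anything slightly more explicit than the paper's (e.g.\ on the origin of the $2Q_{u,p}$ anomaly from the harmonic-number relation and on covering all four Casimir cases), but the argument is the same.
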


\begin{proof}
The Hamiltonians $H_{\hat\alpha,p}$ are of the form
\[
H = \frac1{2\pi i} \oint_{S^1}   \oint_{|z|=1} Q(\la, \bla) \, \frac{dz}z \ dx ,
\]
where the function $Q(z)=Q(\la(z),\bla(z))$ is in $\cH(\S^1)$, when $(\la,\bla) \in M$. 

From Lemma~\ref{repres} it follows that the differential of $H$ 
\[
dH = \left( \Big( \frac{\pa Q}{\pa\la} \Big)_{\geq0} , \Big( \frac{\pa Q}{\pa\bla} \Big)_{\leq1} \right)
\in \cH(D_0) \oplus z\cH(D_\infty)
\]
can be equivalenty represented by
\[
dH = \left( \frac{\pa Q}{\pa\la} , \frac{\pa Q}{\pa\bla} \right)
\in \cH(\S^1) \oplus \cH(\S^1) .
\]
For any function $Q(\la, \bla)$ we have that 
\[
\{ \la, Q_\la \} + \{ \bla , Q_\bla \} = 0 
\]
hence, substituting in the Poisson maps given in Proposition \ref{prop:poi}, we obtain
\bean
P_1(dH) &= \big( \{\la , ( Q_\bla - Q_\la )_- \}, 
\{ \bla, -( Q_\bla - Q_\la )_+ \} \big), \\
P_2(dH) &= \big( \{\la , (\la Q_\la + \bla Q_\bla)_-\},
\big( -\{\bla , (\la Q_\la + \bla Q_\bla)_+\} \big) ,
\eean
where we have denoted $Q_\la = \frac{\pa Q}{\pa \la}$ and $Q_\bla = \frac{\pa Q}{\pa \bla}$.

One can check directly, using the homogeneity properties of $Q_{\hat\alpha,p}$, that
\[
\left(\frac{\pa}{\pa\bla} - \frac{\pa}{\pa\la} \right)  Q_{\hat\alpha,p} = Q_{\hat\alpha,p-1}
\]
and
\bean
\left( \la \frac{\pa}{\pa\la} + \bla \frac{\pa}{\pa\bla} \right) Q_{\alpha,p} &= (\alpha + p + 1) Q_{\alpha,p} ,\\
\left( \la \frac{\pa}{\pa\la} + \bla \frac{\pa}{\pa\bla} \right) Q_{v,p} &= p Q_{v,p} +2 Q_{u,p-1} ,\\
\left( \la \frac{\pa}{\pa\la} + \bla \frac{\pa}{\pa\bla} \right) Q_{u,p} &= (p+1) Q_{u,p} ,
\eean
for $\hat\alpha\in\hat\Z$, $\alpha\in\Z$ and $p\geq0$. In these formulas we have assumed
\[
Q_{-1,-1} = -\frac1{\la(z)} , \quad Q_{v,-1} = \frac1{\bla(z)} - \frac1{\la(z)}, \quad Q_{u,-1}= 1, 
\]
and $Q_{\alpha,-1}=0$ for $\alpha\not=-1$.

The Lax equations \eqref{eq:lax} and the recursion relations are an easy consequence of these formulas.

The fact that $H_{v,-1}$ is a Casimir follows from
\begin{align*}
P_1(d H_{v,-1}) &= \left( \{ \la, (Q_{v,-1})_-\}, \{ \bla, -( Q_{v,-1})_+ \}\right) \\
&=\left( \{ \la, -\frac1{\la(z)} \}, \{ \bla, - \frac1{\bla(z)} \}\right) = 0 
\end{align*}
and a similar computation holds for $H_{\-1,-1}$. Note that in the last formula it is essential that $\frac1{\la(z)} \in \frac1z\cH(D_\infty)$ and $\frac1{\bla(z)} \in z\cH(D_0)$ i.e. that $\la(z)$ and $\bla(z)$ do not have zeros.
\end{proof}

\begin{remark}
Note that for $\alpha\geq0$ the Lenard-Magri bi-Hamiltonian recursion starts from the Casimir $H_{\alpha,-1}$ of $\{,\}_1$ and induces the infinite chains of Hamiltonians $H_{\alpha, p}$, $p\geq0$. On the other hand for $\alpha\leq-1$ the Lenard-Magri chain starting from the Casimir $H_{\alpha,-1}$ stops after $-\alpha-1$ steps at the Casimir $H_{\alpha,-\alpha-2}$ of the second Poisson bracket. The remaining Hamiltonians $H_{\alpha, p}$ for $p\geq-\alpha-1$, $\alpha\geq-1$ are included in chains starting from $H_{\alpha, -\alpha-1}$ which are not Casimirs. Similarly the chain $H_{u,p}$ starts from the Casimir $H_{u,-1}$ of the first Poisson bracket and the chain $H_{v,p}$ starts from $H_{v,0}$ which is not a Casimir. Finally note that the Hamiltonians $H_{-1,-1}$ and $H_{v,-1}$ are common Casimirs of $\{,\}_1$ and $\{,\}_2$. 
\end{remark}

\begin{remark}
Let us finally observe that the extended 2D Toda hierarchy defined above really extends the usual dispersionless 2D Toda. Indeed,
the Hamiltonians $H_n$, $\bar H_n$ of the dispersionless 2D Toda hierarchy are finite combinations of the Hamiltonians of the dispersionless extended 2D Toda hierarchy: 
\bean
H_n &= (-1)^{n} \, n! \, H_{u,n-1} + \sum_{l=0}^n \frac{n! \ ((-1)^l - (-1)^{n+1})}{(n-l)!\ 2^{n-l+1}} H_{n-l,l-1}, \\
\bar H_n &= - n! \ H_{u,n-1} .
\eean

\end{remark}


%
\subsection{Tau symmetry}
The hamiltonian densities $h_{\hat\alpha, p}$ of the extended dispersionless 2D Toda are tau symmetric, as can be easily proved using the Zakharov-Shabat equations.

\begin{proposition}
The Hamiltonian densities $h_{\hat\alpha,p}$ satisfy the tau-symmetry
\[
\frac{\pa h_{\hat\alpha,p-1}}{\pa t^{\hat\beta,q}} = 
\frac{\pa h_{\hat\beta,q-1}}{\pa t^{\hat\alpha,p}} 
\]
for any $\hat\alpha, \hat\beta \in \hat\Z$ and $p,q\geq0$.
\end{proposition}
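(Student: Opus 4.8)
The plan is to reduce the claimed identity to the Zakharov--Shabat equations~\eqref{zs} together with a one-line residue computation.

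First I would unwind the definitions: by construction $h_{\hat\alpha,p-1}=\frac1{2\pi i}\oint_{|z|=1}Q_{\hat\alpha,p}(\la(z),\bla(z))\,\frac{dz}z$, the contour $|z|=1$ is fixed, and each $t^{\hat\beta,q}$-flow is a well-defined vector field on $\cL M_1$, so I may differentiate under the integral sign. By the chain rule $\pa_{t^{\hat\beta,q}}Q_{\hat\alpha,p}=\frac{\pa Q_{\hat\alpha,p}}{\pa\la}\,\pa_{t^{\hat\beta,q}}\la+\frac{\pa Q_{\hat\alpha,p}}{\pa\bla}\,\pa_{t^{\hat\beta,q}}\bla$, which is again an element of $\cH(\S^1)$ on $M_1$, so that $\pa_{t^{\hat\beta,q}}h_{\hat\alpha,p-1}=\frac1{2\pi i}\oint_{|z|=1}\pa_{t^{\hat\beta,q}}Q_{\hat\alpha,p}\,\frac{dz}z$. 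Subtracting the analogous expression with $(\hat\alpha,p)$ and $(\hat\beta,q)$ interchanged and substituting~\eqref{zs} gives
\begin{align*}
\pa_{t^{\hat\beta,q}}h_{\hat\alpha,p-1}-\pa_{t^{\hat\alpha,p}}h_{\hat\beta,q-1}
&=\frac1{2\pi i}\oint_{|z|=1}\Big(\{(Q_{\hat\alpha,p})_-,(Q_{\hat\beta,q})_-\}\\
&\qquad-\{(Q_{\hat\alpha,p})_+,(Q_{\hat\beta,q})_+\}\Big)\frac{dz}z .
\end{align*}

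It then suffices to see that each Poisson-bracket term has vanishing residue. For this I would record the elementary fact that $\frac1{2\pi i}\oint_{|z|=1}\{f,g\}\,\frac{dz}z=\frac1{2\pi i}\oint_{|z|=1}(f_zg_x-g_zf_x)\,dz$ vanishes in two situations: when both $f,g\in\cH(D_0)$, since the integrand is then holomorphic on the closed unit disc and Cauchy's theorem applies; and when both $f,g\in z^{-1}\cH(D_\infty)$, since then $f_zg_x$ and $g_zf_x$ involve only powers $z^k$ with $k\leq-3$, so there is again no $z^{-1}$ term. Applying the first case to $f=(Q_{\hat\alpha,p})_+,\ g=(Q_{\hat\beta,q})_+\in\cH(D_0)$ and the second to $f=(Q_{\hat\alpha,p})_-,\ g=(Q_{\hat\beta,q})_-\in z^{-1}\cH(D_\infty)$ makes the right-hand side vanish, which is the assertion.

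I do not expect a real obstacle here; the only point requiring care is the analytic bookkeeping, namely checking that the projections $(Q_{\hat\alpha,p})_\pm$ and the time derivatives $\pa_{t^{\hat\beta,q}}Q_{\hat\alpha,p}$ genuinely belong to $\cH(D_0)$, $z^{-1}\cH(D_\infty)$, respectively $\cH(\S^1)$, for $(\la,\bla)\in M_1$. This is precisely what the winding-number conditions defining $M_1$ provide, via~\eqref{logs} and the discussion preceding~\eqref{eq:lax}, so no additional ingredient is needed.
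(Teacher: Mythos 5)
Your proof is correct and takes essentially the same route as the paper: differentiate under the integral, substitute the Zakharov--Shabat equations~\eqref{zs}, and observe that the residues of $\{(Q_{\hat\alpha,p})_+,(Q_{\hat\beta,q})_+\}$ and $\{(Q_{\hat\alpha,p})_-,(Q_{\hat\beta,q})_-\}$ both vanish. The paper dismisses that last step as ``clearly vanishes''; your Cauchy/power-counting justification of it is exactly the right filling-in and introduces nothing beyond what the winding-number conditions on $M_1$ already guarantee.
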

\begin{proof}
By equation~\eqref{zs} we have that
\[
\frac{\pa h_{\hat\alpha,p-1}}{\pa t^{\hat\beta,q}} -
\frac{\pa h_{\hat\beta,q-1}}{\pa t^{\hat\alpha,p}} =
\frac1{2 \pi i} \oint_{|z|=1} \left[ 
\frac{\pa Q_{\hat\alpha,p}}{\pa t^{\hat\beta,q}} - \frac{\pa Q_{\hat\beta,q}}{\pa t^{\hat\alpha,p}} \right] \frac{dz}z
\]
is equal to
\[
\frac1{2 \pi i} \oint_{|z|=1} \left[ 
- \{ (Q_{\hat\alpha,p})_+ , (Q_{\hat\beta,q})_+  \} +
\{ (Q_{\hat\alpha,p})_- , (Q_{\hat\beta,q})_-  \} \right] \frac{dz}z 
\]
which clearly vanishes. 
\end{proof}

\begin{remark}
Note that in the proof of the tau-symmetry we have not used the explicit form of the Hamiltonians: the only relevant property is that the Lax equations for the time $t^{\hat\alpha,p}$ and the Hamiltonian density $h_{\hat\alpha,p-1}$ are written in terms of the {\it same} function $Q_{\hat\alpha,p}$. This requires the choice of a proper normalization of the flows.
\end{remark}
%
%
%

\section{The principal hierarchy of $M_0$}

In~\cite{CDM10} an infinite-dimensional Frobenius manifold structure was defined on an open subset $M_0$ of $M$ with the property that the associated flat pencil of metrics induces on $\cL M_0$ the Poisson pencil of the (dispersionless) 2D Toda hierarchy. For this reason we refer to $M_0$ as the 2D Toda Frobenius manifold. 

In this section we find the explicit solution to the flatness equations of the deformed connection of the 2D Toda Frobenius manifold. The expansion of such solution for the deformation parameter $\zeta \sim \infty$ defines a sequence of Hamiltonian densities on $M_0$. These in turn define, through the hydrodynamic type Poisson structure associated with the flat metric of the Frobenius manifold $M_0$, a hierarchy of commuting equations on the loop space $\cL M_0$, the so-called Principal hierarchy. We show that such hierarchy coincides with the extended 2D Toda hierarchy introduced in the previous section.

\subsection{The manifold $M_0$ as a bundle on the space of parametrized simple curves}

The manifold $M_0$ was defined in~\cite{CDM10} as the open subset of $M$ given by pairs of functions $(\la(z),\bla(z))\in M$ that satisfy the conditions
\begin{enumerate}
\item[i.] the coefficient $\bar u_{-1}$ in the expansion $\bla(z) = \bar u_{-1} z^{-1}+ O(1)$ for $z\to0$ is non-zero and the derivative of $w(z) := \la(z) + \bla(z)$ does not vanish on the unit circle $\S^1 := \{ z \in \C \text{ s.t. }  |z|=1\}$;
\item[ii.] the closed curve $\Gamma$ parametrized by the restriction of $w(z)$ to $\S^1$ is positively oriented, non-selfintersecting and encircles the origin $w=0$.
\end{enumerate}
Here we also require that
\begin{enumerate}
\item[iii.] the functions $\la(z)$, $\bla(z)$ are non-vanishing for $z$ in $D_\infty$, $D_0$ respectively.
\end{enumerate}

Condition (i.) guarantees the invertibility of the metric $\eta$ and condition (ii.) the solvability of the Riemann-Hilbert problem defining the flat coordinates, see Section~\ref{fl-sec}. Note that, by Lemma~\ref{windingzeros}, condition (iii.) implies that the winding conditions defining $M_1$ are satisfied, i.e. $M_0 \subset M_1$. 

The manifold $M_0$ can be seen as (an open subset of) a trivial two-dimensional fiber bundle over the space $M_\mathrm{red}$ of parametrized simple analytic curves, as shown by the map
\[
\isomorphism{M_0}{M_\mathrm{red}\oplus \C \oplus \C}{(\la(z),\bla(z))}{(w(z),v,u)}
\]
where $w(z):=\la(z)+\bla(z) \in \cH(\S^1)$, $v:=\bar u_0 = (\bla)_0$ and $e^u:=\bar u_{-1} = (\bla)_1$. Note that this map can be easily inverted by
\[
\la(z) = w_{\leq 0}(z) +z -v -e^u z^{-1}, \quad
\bla(z) = w_{\geq1}(z) -z +v + e^u z^{-1} .
\]
We refer to the variables $(w(z), v,u)$ as $w$-coordinates. 

In analogy with the construction of the tangent and cotangent spaces to $M$ given in the previous section, the $w$-coordinates suggest to identify both the tangent and the cotangent spaces at a point $\hat\la \in M_0$ with
\bean
T_{\hat\la}M \cong T_{\hat\la}^*M \cong \cH(\S^1)\oplus\C^2.
\eean
A vector $\pa_{X}$ is now represented by a triple $X=(X(z),X_v,X_u)$ where $X(z)=\pa_{X} w(z)$, $X_v=\pa_{X} v$ and $X_u=\pa_{X} u$. A $1$-form $\alpha$ is represented by a triple $(\alpha(z),\alpha_v,\alpha_u)$ through the pairing
\[
<\alpha, X> = \frac1{2 \pi i} \oint_{|z|=1} 
\alpha(z) X(z) \frac{dz}{z} + \alpha_v X_v + \alpha_u X_u .
\]
Switching to this representation is achieved by the following formulas: a vector $X=(X(z), \bar X(z)) \in \cH(D_\infty) \oplus \frac1z \cH(D_0)$ and a $1$-form  $\alpha=(\alpha(z), \bar\alpha(z) ) \in \cH(D_{0}) \oplus z\cH(D_{\infty})$ are represented by triples
\bean
&X = \big( X(z) + \bar X(z) , (\bar X(z))_0, \frac1{\bar u_{-1}} ( \bar X)_{-1} \big), \\
&\alpha= \big(\alpha(z) + (\bar\alpha(z))_{<0}, ( \bar\alpha(z)-\alpha(z))_0,
\bar u_{-1} ( \bar\alpha(z) -\alpha(z) )_1 \big),
\eean
in $\cH(\S^1)\oplus\C^2$.

This representation of the tangent and cotangent spaces turns out to be quite convenient and natural (see e.g. the simple formulas for the metric~\eqref{eta} and the connection~\eqref{christ}). In the following we will freely use both representations of vectors and covectors, often without specifying which one we are using, since in most cases it will be clear from the context.

\subsection{The metric}

In~\cite{CDM10} the metric, i.e. a bilinear form $\eta$ on the cotangent space $T^*M$, was defined in terms of a linear map $\eta^*:T^*M \to TM$ by 
\[
\eta(\alpha,\beta) = <\alpha, \eta^*(\beta)>
.\]
The map $\eta^*$ is defined as follows: a $1$-form $\alpha= (\alpha(z), \bar\alpha(z) )\in T^*M$ is mapped to a vector $X = \eta^*(\alpha)=(X(z), \bar X(z))$ given by
\bean
&X(z) =(z\la'(z)\alpha(z)+z\bla'(z)\bar\alpha(z))_{\leq0}-z\la'(z)(\alpha(z)-\bar\alpha(z))_{<0},\\
&\bar{X}(z) =(z\la'(z)\alpha(z)+z\bla'(z)\bar\alpha(z))_{>0}+z\bla'(z)(\alpha(z)-\bar\alpha(z))_{\geq0} . 
\eean
The map $\eta^*$ is invertible, i.e. the bilinear form $\eta$ is non-degenerate, at the points $(\la,\bla)$ of $M$ such that $w'(z) = \la'(z) + \bla'(z) \not= 0$ for $z \in \S^1$ and $\bar u_{-1} \not= 0$, i.e. in particular at  the points of $M_0$.

Representing vectors and $1$-forms by elements in $\cH(\S^1)\oplus\C^2$, the map $\eta^*$ is given by
\[
\Function{\eta^*}{T^*M}{TM}{(\alpha(z),\alpha_v,\alpha_u)}{(zw'(z) \alpha(z),\alpha_u,\alpha_v)}
\]
and its inverse is obviously
\[
\Function{\eta_*}{TM}{T^*M}{(X(z),X_v,X_u)}{(\frac{1}{zw'(z)}X(z),X_u,X_v)} .
\]
The bilinear form on the tangent space is written in $w$-coordinates as 
\beq
\label{eta-t}
\eta(X,Y)
=\frac{1}{2\pi i}\oint_{|z|=1} \frac{X(z)Y(z)}{z^2 w'(z)} dz + X_v Y_u + X_u Y_v .
\eeq
and on the cotangent space as
\beq
\label{eta}
\eta(\alpha,\beta)
=\frac{1}{2\pi i}\oint_{|z|=1} \alpha(z)\beta(z) w'(z) dz + \alpha_v\beta_u + \alpha_u\beta_v .
\eeq
Note that changing variable of integration this can in turn be expressed as an integral over the curve $\Gamma = w(\S^1)$ 
\[
\eta(\alpha,\beta)
=\frac{1}{2\pi i}\oint_{\Gamma} \alpha(z(w))\beta(z(w)) dw + \alpha_v\beta_u + \alpha_u\beta_v .
\]

\subsection{The Levi-Civita connection}

Now we derive a formula for the Levi-Civita connection of the metric $\eta$.
Let us define the Christoffel symbol $\Gamma$ of $\eta$ as a map
\[
\Gamma: TM \otimes T^*M \to T^*M
\]
that associates to $X \in TM$, $\alpha \in T^*M$ a $1$-form $\Gamma_X(\alpha) \in T^*M$; 
representing vectors and $1$-forms by elements in $\cH(\S^1)\oplus\C^2$, we define
\beq
\label{christ}
\Gamma_X(\alpha)=\left( \frac{\alpha'(z) X(z)}{w'(z)},0,0\right) .
\eeq
The covariant derivative of the $1$-form $\alpha$ along the vector field $X$ is defined by
\beq
\label{cov-der}
\nabla_X\alpha=\pa_X\alpha-\Gamma_X(\alpha) =( \pa_X \alpha(z)-\frac{\alpha'(z) X(z)}{w'(z)}, \pa_X\alpha_v, \pa_X\alpha_u) .
\eeq

\begin{proposition}
The connection $\nabla$ is torsion free and compatible with the metric $\eta$. 
\end{proposition}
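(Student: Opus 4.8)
The plan is to verify directly, from the explicit formula \eqref{cov-der} for $\nabla$, the two defining properties of the Levi-Civita connection: that it is torsion free and that it is compatible with the metric $\eta$ given in \eqref{eta-t}. Both verifications are computational and split naturally according to the $\cH(\S^1)\oplus\C^2$ decomposition of vectors and covectors; the $\C^2$ components are trivial since $\Gamma_X(\alpha)$ has vanishing $v$- and $u$-components, so $\nabla_X\alpha$ acts there by the plain directional derivative $\pa_X$, and the only work is in the $\cH(\S^1)$ component.

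For torsion-freeness, I would use the standard characterization: $\nabla$ is torsion free iff for every closed $1$-form $\alpha = df$ (i.e. $\alpha(z) = \pa f(z)$, $\alpha_v = \pa_v f$, $\alpha_u = \pa_u f$ with $f$ a function on $M_0$) the $1$-form $\nabla_X(df)$ is symmetric in the sense that $\langle \nabla_X(df), Y\rangle = \langle \nabla_Y(df), X\rangle$, equivalently $\nabla(df)$ is a symmetric bilinear form. Writing this out, the difference $\langle\nabla_X(df),Y\rangle - \langle\nabla_Y(df),X\rangle$ only involves the $\cH(\S^1)$-part and equals
\[
\frac1{2\pi i}\oint_{|z|=1}\Big[\big(\pa_X\,\pa f(z)\big)Y(z) - \big(\pa_Y\,\pa f(z)\big)X(z) - \frac{(\pa f)'(z)\,X(z)\,Y(z)}{w'(z)} + \frac{(\pa f)'(z)\,Y(z)\,X(z)}{w'(z)}\Big]\frac{dz}{z}.
\]
The last two terms cancel identically, and the first two cancel because $\pa_X$ and $\pa_Y$ commute as vector fields (the $w$-coordinate components $X(z),Y(z)$ being the action of the coordinate vector fields) and $\pa f(z)$ is the gradient of a scalar — so $\pa_X\,\pa f(z) = \pa_Y\,\pa_X f = \pa_X\,\pa_Y f$ by symmetry of second derivatives. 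Hence torsion vanishes.

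For metric compatibility I would check $\pa_X\big(\eta(\alpha,\beta)\big) = \eta(\nabla_X\alpha,\beta) + \eta(\alpha,\nabla_X\beta)$ for $1$-forms $\alpha,\beta$, using the cotangent-space formula obtained by feeding \eqref{eta-t} through $\eta^*$, namely $\eta(\alpha,\beta) = \frac1{2\pi i}\oint_{|z|=1}\alpha(z)\beta(z)w'(z)\,dz + \alpha_v\beta_u+\alpha_u\beta_v$ as in \eqref{eta}. Differentiating along $X$ produces, besides the terms $(\pa_X\alpha)\beta w' + \alpha(\pa_X\beta)w'$ and the trivial $\C^2$-contributions, the extra term $\frac1{2\pi i}\oint \alpha(z)\beta(z)\,\pa_X w'(z)\,dz$; since $\pa_X w(z) = X(z)$ this is $\frac1{2\pi i}\oint\alpha\beta\,X'(z)\,dz$. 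On the other side, the correction terms $-\Gamma_X(\alpha)$ and $-\Gamma_X(\beta)$ contribute $-\frac1{2\pi i}\oint\big(\alpha'(z)\beta(z)+\alpha(z)\beta'(z)\big)\frac{X(z)}{w'(z)}w'(z)\,dz = -\frac1{2\pi i}\oint(\alpha\beta)'(z)X(z)\,dz$. Integration by parts on the circle turns this into $+\frac1{2\pi i}\oint\alpha(z)\beta(z)X'(z)\,dz$, which exactly matches the extra term from differentiating $w'$. Hence compatibility holds.

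The only genuine subtlety — the step I expect to require the most care — is making sure all the contour integrals are legitimate and the integration by parts has no boundary contribution: this is where conditions (i.)--(iii.) defining $M_0$ enter, guaranteeing $w'(z)\neq 0$ on $\S^1$ (so division by $w'$ is harmless and $\Gamma_X(\alpha)\in\cH(\S^1)$), and ensuring that $\alpha,\beta$ paired against vectors are represented in the right spaces so that the Laurent expansions converge on a neighbourhood of $\S^1$ and $\oint_{|z|=1}(\cdot)'dz = 0$. Once the functional-analytic bookkeeping is in place, both identities reduce to the elementary cancellations above.
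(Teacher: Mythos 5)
Your proposal is correct and follows essentially the same route as the paper: metric compatibility is checked by differentiating the cotangent-space formula \eqref{eta} along $X$ and using a single integration by parts on $\S^1$ to match the $\pa_X w'$ term with the Christoffel contributions, and torsion-freeness reduces to the manifest symmetry $<\Gamma_X(\alpha),Y>=<\Gamma_Y(\alpha),X>$ (your formulation via exact $1$-forms is just a repackaging of this same identity). No gaps.
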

\begin{proof}
The compatibility of $\nabla$ with the metric $\eta$ is equivalent to the identity
\[
\pa_X\left(\eta(\alpha,\beta)\right)= \eta(\nabla_X\alpha,\beta)+\eta(\alpha,\nabla_X\beta)
\]
for every $\alpha,\beta$ in $T^* M$ and $X$ in $T M$.
Let us first compute the following derivative along the vector $X$, using Leibniz rule
\[
\begin{split}
&\pa_X \oint \alpha(z)\beta(z) zw'(z) \frac{dz}{z}=\\
&= \oint \left[\left((\pa_X\alpha(z))\beta(z)+\alpha(z)(\pa_X\beta(z))\right)  zw'(z) 
-z\pa_z(\alpha(z)\beta(z)) X(z) \right]\frac{dz}{z}
\end{split}
\]
where in the last summand we used commutativity of $\pa_X$ and $\pa_z$ and integration by parts. The last expression is equal to
\[
\begin{split}
&\oint \left[\left((\pa_X\alpha(z))\beta(z)+\alpha(z)(\pa_X\beta(z))\right)  zw'(z) \right.\\
&\left. -(z\alpha'(z) X(z))\beta(z) -\alpha(z)(z\beta'(z) X(z))\right] \frac{dz}{z}= \\
&= \oint \left((\nabla_X\alpha)(z)\beta(z)+\alpha(z)(\nabla_X\beta)(z)\right)  zw'(z) \frac{dz}{z}.
\end{split}
\]
This formula allows us to easily take the derivative of formula~\eqref{eta}
\[
\begin{split}
\pa_X\left(\eta(\alpha,\beta)\right)&=\frac{1}{2\pi i}\oint \left((\nabla_X\alpha)(z)\beta(z)+\alpha(z)(\nabla_X\beta)(z)\right)  zw'(z) \frac{dz}{z}+\\
&+(\pa_X\alpha_v)\beta_u+(\pa_X\alpha_u)\beta_v+\alpha_v(\pa_X\beta_u)+\alpha_u(\pa_X\beta_v)=\\
&=\eta(\nabla_X\alpha,\beta)+\eta(\alpha,\nabla_X\beta) .
\end{split}
\]
The compatibility of $\nabla$ with the metric $\eta$ is proved.

The fact that the torsion of $\nabla$ is zero is equivalent to the identity
\beq
\label{torsionless}
<\Gamma_X(\alpha),Y>=<\Gamma_Y(\alpha),X>
\eeq
for every $\alpha,\beta$ in $T^*M$ and $X,Y$ in $TM$.
The proof is immediate. 
\end{proof}

\subsection{Flat coordinates}
\label{fl-sec}

Let us briefly recall the construction of the flat coordinates given in~\cite{CDM10} and obtain explicit formulas for the flat coordinates functionals. 

The simple curve $\Gamma = w(\S^1)$ divides the Riemann sphere in an interior and an exterior domain. We denote their closures by $\Gamma_0$ and $\Gamma_\infty$ respectively. The inverse $z(w)$ of the function $w(z)$ defines an holomorphic function on $\Gamma$, i.e. $z(w) \in \cH(\Gamma)$. 

Consider the following Riemann-Hilbert factorization problem: find two non-vanishing functions $f_0 \in \cH(\Gamma_0)$, $f_\infty \in w\cH(\Gamma_\infty)$ such that 
\[
z(w) = \frac{f_\infty(w)}{f_0(w)} \text{ for } w \in \Gamma
\]
and with normalization fixed by $f_\infty(w) = w +O(1)$ for $|w|\to\infty$. The solution to this factorization problem always exists and is unique. 

The coefficients $t^\alpha$ in the expansions 
\[
\log f_0(w) = - \sum_{\alpha\geq0} t^\alpha w^\alpha, \quad
\log \frac{f_\infty(w)}w = \sum_{\alpha<0} t^\alpha w^\alpha ,
\]
respectively in a neighborhood of $w=0$ and $\infty$, along with $t^u=u$ and $t^v=v$, form a system of flat coordinates. 

Indeed one can easily see that
\[
\frac{\pa w(z)}{\pa t^\alpha} = - z w'(z) w^\alpha 
\]
and, by substituting in~\eqref{eta-t}, one gets the nontrivial components of the Gram matrix $\eta\left(\frac{\pa}{\pa t^{\hat\alpha}} , \frac{\pa}{\pa t^{\hat\beta}} \right)$ of the metric in flat coordinates
\beq
\label{eta-up}
\eta_{\alpha \beta} = \de_{\alpha+\beta, -1}, \quad
\eta_{u v} =\eta_{vu}= 1. 
\eeq

Note that $\log\frac{z(w)}w \in \cH(\Gamma)$ is given by the sum
\[
\log\frac{z(w)}{w} = \log \frac{f_\infty(w)}w - \log f_0(w) . 
\]
For $\alpha\leq-1$ 
\[
\frac1{2 \pi i} \oint_\Gamma \log f_0(w) \ w^{-\alpha-1} \ dw =0
\]
hence
\[
\frac1{2 \pi i} \oint_\Gamma \log\frac{z(w)}{w} \ w^{-\alpha-1} \ dw = 
\frac1{2 \pi i} \oint_\Gamma \log \frac{f_\infty (w) }{w} \ w^{-\alpha-1} \ dw = t^\alpha ,
\]
where the last equality is evaluated by deforming the contour of integration to the neighborhood of $w=0$. By changing the variable of integration the left-hand side is written
\[
\frac1{2 \pi i} \oint_{|z|=1} \left( \log\frac{z}{w(z)} \right)' \ \frac{w^{-\alpha}(z)}{\alpha} \ dz = 
\frac1{2 \pi i} \oint_{|z|=1} \frac{w^{-\alpha}}{\alpha} \ \frac{dz}z
\]
A similar computation can be performed in the case $\alpha\geq0$. We obtain the following representation of the flat coordinates as integrals on the unit circle
\bean
&t^{\alpha} = \frac1{2 \pi i} \oint_{|z|=1} \frac{w^{-\alpha}}{\alpha}\frac{dz}z \text{ for } \alpha\not=0, \\
&t^{0} = -\frac1{2 \pi i} \oint_{|z|=1} \log\frac{w(z)}{z} \frac{dz}z .
\eean

Observe that by substituting the differentials
\beq
\label{diff-dt}
dt^\alpha = ( - w^{-\alpha-1}(z) , 0 ,0 ), \quad
dv = (0,1,0),\quad
du=(0,0,1)
\eeq
in formula~\eqref{cov-der} one can easily check that the $t^{\hat\alpha}$ are flat functions with respect to the Levi-Civita connection $\nabla$, i.e. 
\[
\nabla d t^{\hat\alpha} =0 .
\]

\subsection{The associative product and the deformed flat connection}

The Frobenius manifolds are endowed with an associative commutative product on each tangent space.  
In the case of the 2D Toda Frobenius manifold this product was defined in \cite{CDM10} by introducing a multiplication on the cotangent spaces and then dualizing it via $\eta$ to the tangent bundle. 
The product of two $1$-forms $\alpha=(\alpha(z), \bar\alpha(z) )$ and $\beta=(\beta(z), \bar\beta(z) ) $, represented by pairs of functions in $\cH(D_0) \oplus z \cH(D_\infty)$, is given by 
\bean
\alpha \cdot \beta=
&\left(\hspace{3pt}\alpha\left[z\la'\beta+z\bla'\bar\beta\right]_{>0}+\left[z\la'\alpha+z\bla'\bar\alpha\right]_{>0}\beta-\left[z\la'\alpha\beta+z\bla'(\alpha\bar\beta+\bar\alpha\beta)\right]_{\geq 0},\right.\\
&\left.-\bar\alpha\left[z\la'\beta+z\bla'\bar\beta\right]_{\leq0}-\left[z\la'\alpha+z\bla'\bar\alpha\right]_{\leq0}\bar\beta
+\left[z\bla'\bar\alpha\bar\beta+z\la'(\alpha\bar\beta+\bar\alpha\beta)\right]_{\leq 1}\right).
\eean


Given a vector field $X$ over $M_0$, the multiplication by $X$ induces a linear map $X\cdot :TM \to TM$ sending $Y\mapsto X\cdot Y$. This map can be dualized to the cotangent bundle, giving a linear map $C_X:T^*M \to T^*M$ that coincides with the multiplication by $\eta_*(X)$ on the cotangent space: $C_X(\alpha) = \eta_*(X) \cdot\alpha$, where $\alpha$ is a $1$-form. Now we obtain the explicit form of this operator, representing vectors and $1$-forms as elements of $\cH(\S^1)\oplus\C^2$.
\begin{lemma}
Let $X=(X(z), X_v, X_u)$ be a vector field on $M_0$. The operator $C_X$ is given by
\bea
\label{c-x}
&C_X(\alpha) = \\
&\Big( \frac{X(z)}{zw'(z)} \left( 
(z w'(z) \alpha(z) )_{>0} - (z w'(z))_{>0} \alpha(z) + z \alpha(z)
+\frac{e^u}z (\alpha(z) +\alpha_v) +\alpha_u \right) \nn\\
&+ \left(X_{>0}(z) \alpha(z)\right)_{<0} + 
\left( X_{\leq0}(z) \alpha(z)\right)_{\geq0} + \frac{e^u}z X_u (\alpha(z) +\alpha_v) +X_v \alpha(z) ,\nn\\
&\left( X(z) \alpha(z) \right)_0 +X_u \alpha_u +X_v \alpha_v,\nn\\
&\left( e^u (X(z) + zw'(z) X_u)(\alpha(z)+\alpha_v) \right)_1 
-e^u X_u \alpha_v + X_v \alpha_u \Big) \nn
\eea
where $\alpha=(\alpha(z), \alpha_v, \alpha_u)$ is a $1$-form on $M_0$.
\end{lemma}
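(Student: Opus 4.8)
The plan is to derive \eqref{c-x} by a direct computation from the definition $C_X(\alpha)=\eta_*(X)\cdot\alpha$, where $\cdot$ is the product on $1$-forms recalled above. That product is written for covectors in the pair representation $\cH(D_0)\oplus z\cH(D_\infty)$, while $X$, the argument $\alpha$, and the asserted answer are all written in the mixed representation $\cH(\S^1)\oplus\C^2$. So the computation splits into three steps: (i) rewrite $\eta_*(X)$ and $\alpha$ as pairs; (ii) evaluate the bilinear product; (iii) rewrite the resulting pair back in mixed coordinates and compare it with \eqref{c-x}.

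For step (i) I would begin from the mixed-coordinate expression $\eta_*(X)=\big(\tfrac{1}{zw'(z)}X(z),\,X_u,\,X_v\big)$ and invert the translation formulas between the two covector representations, obtaining the pair $(\beta(z),\bar\beta(z))$ with $\beta(z)=\big(\tfrac{X(z)}{zw'(z)}\big)_{\geq0}$ and $\bar\beta(z)=\big(\tfrac{X(z)}{zw'(z)}\big)_{<0}+\big(\big(\tfrac{X(z)}{zw'(z)}\big)_0+X_u\big)+\big(\big(\tfrac{X(z)}{zw'(z)}\big)_1+e^{-u}X_v\big)z$; likewise the argument $\alpha=(\alpha(z),\alpha_v,\alpha_u)$ becomes the pair whose first component is $(\alpha(z))_{\geq0}$ and whose second component is $(\alpha(z))_{<0}+\big((\alpha(z))_0+\alpha_v\big)+\big((\alpha(z))_1+e^{-u}\alpha_u\big)z$. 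Into the product formula I would also substitute $z\la'(z)=(zw'(z))_{\leq0}+z+e^uz^{-1}$ and $z\bla'(z)=(zw'(z))_{>0}-z-e^uz^{-1}$, which follow at once by differentiating the $w$-coordinate parametrization $\la(z)=w_{\leq0}(z)+z-v-e^uz^{-1}$, $\bla(z)=w_{\geq1}(z)-z+v+e^uz^{-1}$ of points of $M_0$.

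Step (iii) amounts to reassembling the resulting pair $(\gamma(z),\bar\gamma(z))$ into the triple $\big(\gamma(z)+(\bar\gamma(z))_{<0},\,(\bar\gamma(z)-\gamma(z))_0,\,e^u(\bar\gamma(z)-\gamma(z))_1\big)$ and checking term by term that this equals \eqref{c-x}. The only genuine difficulty is the bookkeeping: one must push the projections $(\ )_{>0},(\ )_{\leq0},(\ )_{\geq0},(\ )_{<0},(\ )_{\leq1},(\ )_0$ through the two coordinate changes and the product, repeatedly invoking $z\la'+z\bla'=zw'$ and elementary identities such as $\big(f\,g_{>0}\big)_{<0}=\big(f_{<0}\,g_{>0}\big)_{<0}$ together with the fact that products of Laurent series with controlled support keep controlled support. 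The structural point to keep in view is that the explicit $z$ and $e^uz^{-1}$ terms in $z\la'$, $z\bla'$ are precisely what produces, in the first component of \eqref{c-x}, the ``bulk'' factor $\tfrac{X(z)}{zw'(z)}\big((zw'(z)\alpha(z))_{>0}-(zw'(z))_{>0}\alpha(z)+z\alpha(z)+\tfrac{e^u}{z}(\alpha(z)+\alpha_v)+\alpha_u\big)$ as well as the two scalar components. As consistency checks I would verify that for $X$ equal to the unit vector field $e$ of the Frobenius manifold $M_0$ the operator $C_X$ collapses to the identity on $T^*M_0$, and that $\eta(C_X\alpha,\beta)=\eta(\alpha,C_X\beta)$; these also pin down all the numerical normalizations (in particular the $e^u=\bar u_{-1}$ factors).
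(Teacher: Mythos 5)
Your proposal is correct and is exactly the direct computation the paper implicitly relies on (the lemma is stated there without proof): all the intermediate formulas you record — the inverted covector translation $(\alpha)_{\geq0}$ and $(\alpha)_{<0}+((\alpha)_0+\alpha_v)+((\alpha)_1+e^{-u}\alpha_u)z$, the decompositions $z\la'=(zw')_{\leq0}+z+e^uz^{-1}$, $z\bla'=(zw')_{>0}-z-e^uz^{-1}$, and the reassembly into the triple with the $e^u=\bar u_{-1}$ factor — are right, and the check $C_e=\mathrm{Id}$ does come out of \eqref{c-x}. The only caveat is that you have outlined rather than executed the final bookkeeping, but no idea is missing and no step would fail.
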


The unit vector field is given by
\[
e=\frac{\pa}{\pa v}
\]
which can be represented as $e=(-1,1)$ or as $e=(0,1,0)$.

Recall~\cite{CDM10} that the Frobenius manifold $M_0$ is endowed also with an Euler vector field 
\[
E = (\la(z) - z \la'(z) ,\bla(z) -z \bla'(z))
\]
or, equivalently
\[
E = ( w(z) - z w'(z), v, 2).
\]

The deformed flat connection on $M_0\times\C^\times$ is defined as the deformation $\tilde\nabla$ of the Levi-Civita connection $\nabla$ (on the tangent bundle) given by the following formulas
\bean
&\tilde\nabla_X Y = \nabla_X Y + \zeta X \cdot Y ,\\
&\tilde\nabla_{\frac{d}{d\zeta}} Y = \pa_\zeta Y + E \cdot Y - \frac1\zeta \cV( Y ) ,
\eean
where $X$, $Y$ are vector fields on $M_0$ and $\zeta\in\C^\times$ is the deformation parameter. 
The remaining components of $\tilde\nabla$ are assumed to be trivial, i.e. $\tilde\nabla_X \frac{d}{d\zeta} = \tilde\nabla_{\frac{d}{d\zeta}} \frac{d}{d\zeta}=0$. 
The operator $\cV$ on $TM$ is defined in terms of the Euler vector field $E$ by 
\beq
\label{V-oper}
\cV = \frac12 - \nabla E 
\eeq
since the charge of the Frobenius manifold $M_0$ is $d=1$. 

Dualizing the definition of $\tilde\nabla$ to the cotangent bundle we obtain the following formulas for its nontrivial components 
\bean
&\tilde\nabla_X \alpha = \nabla_X \alpha - \zeta C_X(\alpha), \\
&\tilde\nabla_{\frac{d}{d\zeta}}\alpha = \pa_\zeta \alpha - \cU(\alpha) + \frac1\zeta \cV(\alpha),
\eean
where $\alpha$ is a $1$-form field, we denote by the same symbol $\cV$ the transpose of the operator defined in~\eqref{V-oper} and by $\cU = C_E$ the operator of multiplication by $\eta_*(E)$ on the contangent bundle. The invariance of the metric clearly implies that the operator $\cU$ is symmetric w.r.t. $\eta$.

\begin{lemma}
\label{lemma-V}
The operator $\cV: T^*M \to T^*M$ on $M_0$ is antisymmetric w.r.t. the metric $\eta$ and is explicitly given by
\[
\cV(\alpha) = \left( -\frac{\alpha(z)}2 - z \alpha'(z) \left( \frac{w(z)}{z w'(z)}\right), -\frac{\alpha_v}2 , \frac{\alpha_u}2 \right)
\]
where $1$-forms are represented by elements in $\cH(\S^1)\oplus\C^2$. The flat coordinates differentials $\{dt^{\hat\alpha}\}_{\hat\alpha\in\hat\Z}$ are eigenvectors of $\cV$, i.e. for $\alpha\in\Z$
\[
\cV(dt^\alpha) = (\alpha+\frac12) dt^\alpha, \quad
\cV(dv) = -\frac12 dv, \quad
\cV(du) = \frac12  du
.\]
\end{lemma}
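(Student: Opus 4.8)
The plan is to verify the claimed formula for $\cV$ by direct computation from the definition $\cV = \frac12 - \nabla E$, and then check the eigenvalue equations by plugging in the differentials $dt^{\hat\alpha}$.

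First I would compute $\nabla E$ as an operator on the tangent bundle, or — more conveniently, since we want $\cV$ acting on $1$-forms — its transpose on the cotangent bundle. Using the Euler vector field in $w$-coordinates, $E = (w(z) - z w'(z), v, 2)$, and the formula \eqref{cov-der} for the covariant derivative, one computes $\nabla_Y dt^{\hat\alpha}$ and pairs with $E$, or directly evaluates $(\nabla E)$ applied to a general $1$-form $\alpha = (\alpha(z), \alpha_v, \alpha_u)$. Concretely, $(\nabla_X E)$ has the $w$-component obtained from $\pa_X(w(z) - z w'(z))$ minus the Christoffel correction $\frac{(w - zw')' X(z)}{w'(z)}$; the key simplification is that $\pa_X(w - zw') = X(z) - z X'(z)$ and $(w-zw')' = w' - w' - z w'' = -z w''$, so the $w$-part of $\nabla_X E$ becomes $X(z) - z X'(z) + \frac{z w''(z)}{w'(z)} X(z)$, which one recognizes as $X(z) - z w'(z) \left(\frac{X(z)}{z w'(z)}\right)'\cdot z$ after rearrangement — i.e. it acts as a first-order differential operator. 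Dualizing via the metric \eqref{eta} (integration by parts against $w'(z) dz$) converts this into the stated action on $\alpha(z)$, namely $\alpha(z) \mapsto z\alpha'(z)\frac{w(z)}{z w'(z)} + \alpha(z)$ up to the shift by $\frac12$. For the two-dimensional fiber directions, $\pa_X v = X_v$ and $\pa_X 2 = 0$, so $\nabla E$ acts on $(\alpha_v, \alpha_u)$ essentially by the constant matrix coming from the $(v,u)$-block of $\nabla E$; combined with the $\frac12$ this yields the diagonal entries $-\frac12$ on $dv$ and $\frac12$ on $du$.

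Next, for the eigenvector statement, I would substitute the explicit differentials from \eqref{diff-dt}, i.e. $dt^\alpha = (-w^{-\alpha-1}(z), 0, 0)$, $dv = (0,1,0)$, $du = (0,0,1)$, into the formula for $\cV$. For $dt^\alpha$ with $\alpha(z) = -w^{-\alpha-1}(z)$ we have $z\alpha'(z) = -(-\alpha-1) w^{-\alpha-2}(z) \cdot z w'(z) = (\alpha+1) w^{-\alpha-1}(z) \cdot \frac{z w'(z)}{w(z)}\cdot w(z)/\dots$; more carefully, $z\alpha'(z)\cdot \frac{w(z)}{zw'(z)} = (\alpha+1) w^{-\alpha-1}(z) = -(\alpha+1)\alpha(z)$, so $\cV(dt^\alpha)$ has $w$-component $-\frac{\alpha(z)}{2} + (\alpha+1)\alpha(z) = (\alpha+\frac12)\alpha(z)$, giving eigenvalue $\alpha+\frac12$. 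The $dv$ and $du$ cases are immediate from the diagonal form on the fiber directions.

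Finally, for antisymmetry of $\cV$ w.r.t. $\eta$, I would note this is essentially automatic: $\cV = \frac12 - \nabla E$, and a standard fact (or a one-line computation using compatibility of $\nabla$ with $\eta$ together with the Killing-type property of the Euler field, $\mathcal{L}_E \eta = (2-d)\eta = \eta$) gives that $\nabla E + (\nabla E)^* = \mathrm{id}$, whence $\cV + \cV^* = 1 - (\nabla E + (\nabla E)^*) = 0$. Alternatively one checks directly from the explicit formula that $\eta(\cV\alpha, \beta) + \eta(\alpha, \cV\beta) = 0$, where the $w$-integral terms cancel after an integration by parts (the $-\frac12$ terms from the two summands cancel against the derivative-term contributions) and the fiber terms cancel because the $(v,u)$-block $\mathrm{diag}(-\frac12, \frac12)$ is antisymmetric with respect to the off-diagonal pairing $\alpha_v\beta_u + \alpha_u\beta_v$. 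The main obstacle I anticipate is bookkeeping in the $w$-component: correctly tracking the integration by parts and the $z w''/w'$ term when dualizing $\nabla E$ from vectors to $1$-forms, and making sure the projection/representative ambiguity (Lemma~\ref{repres}-style) does not interfere — but since $\cV$ is defined intrinsically this should only require care, not new ideas.
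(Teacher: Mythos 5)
Your overall strategy coincides with the paper's (compute $\nabla E$, transpose it, then evaluate on the differentials \eqref{diff-dt}), and both your eigenvector check and your two antisymmetry arguments are correct --- the Lie-derivative argument $\mathcal{L}_E\eta=(2-d)\eta$ with $d=1$ is a legitimate alternative to the paper's direct verification from the explicit formula. The problem is in the key computation of $\nabla_X E$. You apply the $1$-form rule $\pa_X(\cdot)-\frac{(\cdot)'X}{w'}$ of \eqref{cov-der} to the components of $E$, but $E$ is a vector field: the covariant derivative of a vector $Y$ is $\nabla_XY=\pa_XY+\Gamma^*_X(Y)$, where $\Gamma^*_X$ is the transpose of \eqref{christ} with respect to the pairing, namely
\[
\Gamma^*_X(Y)=\Bigl(-z\pa_z\Bigl(\frac{X(z)Y(z)}{zw'(z)}\Bigr),0,0\Bigr),
\]
so the $z$-derivative falls on the whole product $XY/(zw')$ and not on the component of $E$ alone. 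The correct result is $\nabla_XE=\bigl(X(z)-z\pa_z\bigl(X(z)\tfrac{w(z)}{zw'(z)}\bigr),X_v,0\bigr)$, whose transpose against the pairing $\frac1{2\pi i}\oint\alpha X\frac{dz}{z}$ (one integration by parts) is $(\nabla E)^*\alpha=\bigl(\alpha+z\alpha'\tfrac{w}{zw'},\alpha_v,0\bigr)$, which gives the stated $\cV$. Your intermediate expression $X-zX'+\frac{zw''}{w'}X$ is a genuinely different operator --- it depends on $w$ only through $w'$ and $w''$, whereas the correct one contains $\frac{w}{zw'}$ undifferentiated, so shifting $w$ by a constant changes one and not the other --- and it transposes to $\alpha+z\alpha'+\frac{zw''}{w'}\alpha$, not to $\alpha+z\alpha'\frac{w}{zw'}$. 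Hence the assertion that ``dualizing converts this into the stated action'' fails for the formula you actually derived, and neither the explicit expression for $\cV$ nor the eigenvalue computation can be reached from it. A contributing confusion: the dualization here is the transpose with respect to the pairing $<\,,\,>$, not an adjoint with respect to the metric $\oint\alpha\beta\,w'\,dz$ as your parenthetical suggests. With $\Gamma^*_X$ in place, the rest of your argument goes through as written.
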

\begin{proof}
Let $\alpha$ be a $1$-form and $X$, $Y$ vector fields. The covariant derivative of $Y$ is 
\[
\nabla_X Y = \pa_X + \Gamma^*_X(Y)
\]
where the transpose $\Gamma^*_X: TM \to TM$ of the Christoffel symbol~\eqref{christ} is given by
\[
\Gamma^*_X ( Y) = \left( -z \pa_z \left( \frac{X(z) Y(z)}{z w'(z)} \right), 0,0 \right) .
\]
We then compute 
\[
\nabla_X(E) = \left( X(z) - z \pa_z \left( X(z) \frac{w(z)}{z w'(z)} \right) , X_v ,0 \right) 
\]
and consequently the operator $\cV$ on the tangent bundle is given by
\[
\cV(X) = \frac{X}2 - \nabla_X(E) = \left( -\frac{X(z)}2 +z \pa_z \left( X(z) \frac{w(z)}{z w'(z)} \right), 
-\frac{X_v}2 , \frac{X_u}2 \right) .
\]
Transposing we obtain the desired expression for $\cV$ on the cotangent bundle. 

Using such expression, the antisymmetry of $\cV$, i.e.
\[
\eta(\alpha, \cV(\beta)) + \eta(\cV(\alpha),\beta) =0
\]
for any $1$-forms $\alpha$, $\beta$, can be easily checked.

The proof is completed by computing $\cV(dt^{\hat\alpha})$ using the explicit form of the differentials~\eqref{diff-dt}.
\end{proof}

\subsection{Deformed flat coordinates}

A functional $y(\hat\la,\zeta)$ on $M_0\times\C^\times$ is called {\it deformed flat} if its differential is horizontal w.r.t. the deformed flat connection $\tilde\nabla$, i.e.
\[
\tilde\nabla dy = 0 .
\]
The general theory of (finite dimensional) Frobenius manifolds ensures that the deformed connection $\tilde\nabla$ is flat. This in turn implies the local existence of a system of deformed flat coordinates i.e. a set of independent deformed flat functions. In this section we will provide an explicit system of deformed flat coordinates on $M_0$, proving in particular the flatness of $\tilde\nabla$ for the infinite-dimensional Frobenius manifold $M_0$.

The deformed flatness equations for a differential $dy$ are 
\bes
\label{horizz}
\bea
&\pa_X dy = \left(\Gamma_X  + \zeta C_X \right)(dy), \label{horizontality-X} \\
&\pa_\zeta dy = \left(\cU - \frac1\zeta \cV\right)(dy) ,
\label{horizontality-z}
\eea
\ees
for any vector field $X$ on $M_0$.

The following theorem provides an infinite family of deformed flat functions.
\begin{theorem}
\label{theorem2} 
The family of functionals $\left\{ y_{\hat\alpha}(\hat\la,\zeta)\right\}_{\hat\alpha\in\hat\Z}$ over $M_0\times\C^\times$ defined by
\bes
\label{y-functionals}
\bea
y_{\alpha}(\zeta) :=& -\frac{\zeta^{-1/2}}{2 \pi i} \oint_{|z|=1}  
\Big[\frac{(\l\zeta+\bl\zeta)^{\alpha+1}}{\alpha+1} \exp(\frac{\bl\zeta-\l\zeta}2)\Big] \ \frac{dz}z 
\quad \text{for} \quad \alpha\not= -1 ,\\
y_{-1}(\zeta) :=& -\frac{\zeta^{-1/2}}{2 \pi i} \oint_{|z|=1} 
\Big[e^{-\la \zeta} \Big( \log \frac{\la\zeta+\bla\zeta}{z} - \log (\frac{\la\zeta}z) + \ein(-\la\zeta) -1 \Big) +\nn \\
&\qquad \qquad+\exp(\frac{\bla\zeta-\la\zeta}{2})\Big] \frac{dz}z ,
\\
y_v(\zeta) :=&\ \frac{\zeta^{-1/2}}{2 \pi i} \oint_{|z|=1} 
\Big[-e^{- \la \zeta} \Big( \log (\frac{\la\zeta+\bla\zeta}{z}) - \log (\frac{\la\zeta}z) + \ein(-\la\zeta) -1 \Big)+ \nn \\
&\quad +  e^{\bla \zeta} \Big( \log  \frac{\la\zeta + \bla\zeta}{z} + \log ( z \bla\zeta ) - \ein(\bla \zeta) -1\Big)-2\log(\zeta)\Big] \frac{dz}{z},
\\
y_u(\zeta) :=&\ \frac{\zeta^{-1/2}}{2 \pi i} \oint_{|z|=1}
\big[\exp(\bla \zeta)-1\big]  \frac{dz}{z} .
\eea
\ees
forms a Levelt system of deformed flat coordinates on $M_0$ at $\zeta=0$.
\end{theorem}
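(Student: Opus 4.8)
The plan is to verify directly that each functional $y_{\hat\alpha}(\zeta)$ satisfies the two deformed flatness equations \eqref{horizontality-X} and \eqref{horizontality-z}, and then to check the Levelt normalization at $\zeta=0$. The key observation that makes the computation tractable is that all the integrands are built from the elementary "building blocks" $e^{\la\zeta}$, $e^{\bla\zeta}$, $e^{\frac12(\bla-\la)\zeta}$, together with the logarithmic terms $\log\frac{\la\zeta+\bla\zeta}{z}$, $\log\frac{\la\zeta}{z}$, $\log z\bla\zeta$ (all of which lie in $\cH(\S^1)$ on $M_1\supseteq M_0$ by the winding number condition and \eqref{logs}) and the exponential integral $\ein$. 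I would first record the elementary differentiation rules for these blocks under an arbitrary tangent vector $X=(X(z),X_v,X_u)$, using $\pa_X w(z)=X(z)$, $\pa_X v = X_v$, $\pa_X u = X_u$ together with the inversion formulas $\la(z)=w_{\le0}(z)+z-v-e^uz^{-1}$, $\bla(z)=w_{\ge1}(z)-z+v+e^uz^{-1}$, so that $\pa_X\la(z) = X_{\le0}(z) - X_v - e^u X_u z^{-1}$ and $\pa_X\bla(z) = X_{\ge1}(z)+X_v+e^uX_uz^{-1}$. Then $\pa_X(\la+\bla) = \pa_X w = X$, which is what couples the integrand to the right-hand side of \eqref{horizontality-X}.

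Next I would compute the right-hand side $(\Gamma_X + \zeta C_X)(dy_{\hat\alpha})$ using the explicit formulas \eqref{christ} for $\Gamma_X$ and \eqref{c-x} for $C_X$. The Christoffel term contributes $\frac{(\pa_z\, \text{integrand-as-covector})\,X(z)}{w'(z)}$, which after an integration by parts in $z$ combines with the "chain rule along $w$" piece of $\pa_X$; the $\zeta C_X$ term accounts for the remaining couplings to $X_v$, $X_u$ and to the $(\ )_{>0},(\ )_{\le0}$ projections of $X(z)$. The projections appearing in $C_X$ are exactly the ones that reproduce the splitting of $\pa_X\la$ and $\pa_X\bla$ above, so the structure should match term by term after the substitutions $X_v\to$ coefficient, $e^uX_u\to$ coefficient are tracked carefully. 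The $\zeta$-equation \eqref{horizontality-z} is checked similarly using the explicit form of $\cV$ from Lemma \ref{lemma-V} and $\cU=C_E$; here the homogeneity of the integrands under $\zeta\pa_\zeta$ (recall the $\zeta^{-1/2}$ prefactor contributes $-\tfrac12$, and $(\la\zeta+\bla\zeta)^{\alpha+1}$ contributes $\alpha+1$, etc.) is precisely what produces the eigenvalue structure $\cV(dt^\alpha)=(\alpha+\tfrac12)dt^\alpha$ on the leading term.

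The main obstacle, and the place requiring the most care, is the $\hat\alpha\in\{-1,v\}$ "logarithmic" cases: here the $\ein$ function enters via the identity $\pa_\zeta\ein(a\zeta) = \frac{e^{a\zeta}-1}{\zeta}$ (equivalently $\pa_\zeta\big(e^{a\zeta}(\ein(a\zeta)-1)\big)$ and the related combinations), and one must verify that the $\ein$ and $\log$ terms conspire so that the projection operators in $C_X$ act correctly — the point being that $\log\frac{\la\zeta}{z}$ has a part in $\frac1z\cH(D_\infty)$ by \eqref{logs} while $\log z\bla\zeta\in z\cH(D_0)$, so the $(\ )_{>0}$ vs $(\ )_{\le0}$ splitting lands these terms where they belong. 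I expect this to be exactly the content of the earlier homogeneity relations $(\pa_\bla-\pa_\la)Q_{\hat\alpha,p}=Q_{\hat\alpha,p-1}$ and the Euler-type relations, transported from the $Q$'s to their generating functions $\theta_{\hat\alpha}(\zeta)$. Finally, for the Levelt normalization I would expand $y_{\hat\alpha}(\zeta)$ at $\zeta=0$: the $\zeta^{\hat\alpha+1/2}$ (resp. $\zeta^{1/2}$, $\zeta^{1/2}\log\zeta$) prefactors built into \eqref{yth} show that $y_{\hat\alpha}(\zeta) = \zeta^{\mu_{\hat\alpha}}(\text{analytic at }0)$ plus the prescribed logarithmic correction in the $v$-block, identify $\mu_{\hat\alpha}$ with the $\cV$-eigenvalues from Lemma \ref{lemma-V}, and check that the constant term recovers $dt^{\hat\alpha}$ — i.e. that $\theta_{\hat\alpha}(0)$ gives the flat coordinate $t^{\hat\alpha}$, which follows from comparing the $\zeta\to0$ limit of \eqref{thetas} with the integral formulas for $t^{\hat\alpha}$ derived in Section \ref{fl-sec}; this is precisely the definition of a Levelt system.
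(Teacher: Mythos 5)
Your strategy is sound and the ingredients you identify are the right ones, but the route you sketch differs from the paper's in organization, and in one place it is genuinely incomplete. The paper does not verify the four functionals one by one. It first isolates the general class of functionals \eqref{class_local_functionals}, $y=\frac{\zeta^{-1/2}}{2\pi i}\oint F(\zeta\la,\zeta\bla)\frac{dz}{z}+\phi(\zeta)$, and proves two structural facts once and for all: (i) for any such $y$ the $\zeta$-equation \eqref{horizontality-z} is automatic once horizontality along $E$ holds \eqref{E_horizontal} --- this is exactly the homogeneity $[\zeta\pa_\zeta+\tfrac12]y=E(y)$ you allude to, but promoted to a lemma so that \eqref{horizontality-z} never needs to be checked separately; and (ii) Proposition~\ref{horizontal_lemma}, which converts the full horizontality \eqref{horizontality-X} into four scalar conditions \eqref{F-horizontality} on $F$ alone. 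After that, the proof of the theorem is one line per case: each $F_{\hat\alpha}$ satisfies $F_{\bx}-F_{\x}-F=$ (an explicit elementary function, e.g. $-\delta_{\alpha,-1}/\x$, or $\tfrac1{\bx}-\tfrac1{\x}$, or $1$), which is precisely the generating-function form of the recursion $(\pa_{\bla}-\pa_{\la})Q_{\hat\alpha,p}=Q_{\hat\alpha,p-1}$ you correctly anticipate, and \eqref{F-horizontality} follows by differentiating this identity. Your case-by-case direct verification would involve the same integrations by parts and the same bookkeeping of the projections in $C_X$, but repeated for each $\hat\alpha$ and entangled with the $\ein$ and $\log$ terms; the paper's factorization through the class \eqref{class_local_functionals} is what keeps those terms from ever having to be projected explicitly.

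The genuine gap is in your treatment of the Levelt property. Checking that $\theta_{\hat\alpha}(0)=t_{\hat\alpha}$ (equivalently $\Theta|_{\zeta=0}=\mathrm{Id}$) and reading off the exponents $\mu_{\hat\alpha}$ is not the definition of a Levelt system in the resonant situation at hand: since the eigenvalues of $\cV$ differ by integers, one must exhibit the factorization \eqref{normalform}, $Y=\Theta\,\zeta^{\cV}\zeta^{R}$, with an explicit nilpotent operator $R$ satisfying $\zeta^{\cV}R\,\zeta^{-\cV}=\zeta R$ and the required symmetry ($R=R_1$ symmetric here), and verify that the $\log\zeta$ term in $y_v$ is exactly accounted for by $R(\pa/\pa v)=2\,\pa/\pa u$ as in \eqref{RRR}. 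Your sketch records the logarithmic correction in the $v$-block but does not identify $R$ or check these compatibility conditions, and without them the claim that the basis is Levelt is not established.
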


The functions $y_{\hat\alpha}(\hat\la,\zeta)$ are multivalued in the variable $\zeta\in\C^\times$, with a branch point at $\zeta=0$. The branch point is of logarithmic type in the case of $y_v$ and of algebraic type in all the other cases. Note that the behavior of $y_v$ when $\zeta \to e^{2\pi i} \zeta$ is
\[
y_v \to e^{-\pi i} ( y_v + 4\pi i y_u ) . 
\]
The logarithmic branch point is a consequence of the presence of resonance in the spectrum of the Frobenius manifold, as we will see in the next subsection. 

Note that the entire exponential integral $\ein(z)$ is an entire function defined by the power series
\begin{equation} \label{einfn}
 \ein (z) := - \sum_{n=1}^{\infty} \frac{(-z)^n}{n!\ n} 
\end{equation}
with derivative $z \ein'(z) = 1- e^{-z}$.

In order to prove Theorem \ref{theorem2} we consider a general class of functionals of the form
\beq
\label{class_local_functionals}
y(\la,\bla;\zeta) := \frac{\zeta^{-\frac12}}{2 \pi i} \oint_{|z|=1} F(\zeta\la(z),\zeta\bla(z)) \frac{dz}{z} +\phi(\zeta)
\eeq
and give necessary and sufficient conditions for them to be deformed flat. In~\eqref{class_local_functionals} we assume that $F(x,\bx)$ is an analytic function on an open set $\Omega$ in $\C^2$ and $\phi(\zeta)$ to be a multivalued holomorphic function in $\C$ with a branch point of algebraic or logarithmic type at $\zeta=0$.

Note that in the following the function $F$ and its partial derivatives $F_\x, F_\bx, F_{\x\x},F_{\x\bx}$ and $F_{\bx\bx}$ are always implicitly evaluated in $\x=\zeta\la(z)$ and $\bx=\zeta\bla(z)$. Projections like $(\ )_{\geq0}$, $(\ )_{0}$, etc. are always taken with respect to the variable $z$. 

Let us consider first the horizontality equation in the direction of the deformation variable $\zeta$. The first important consequence of the choice of functionals of type~\eqref{class_local_functionals} is that it will directly follow from the other horizontality equations, as shown by the following proposition.
\begin{proposition}
Let $y(\hat\la;\zeta)$ be a functional of the form~\eqref{class_local_functionals} such that its differential is horizontal in the direction of the Euler vector field $E$, i.e.
\beq
\label{E_horizontal}
\nabla_E (dy(\zeta))= \zeta \cU(dy(\zeta))
.\eeq
Then the horizontality in the direction $\zeta$ follows
\beq
\label{zeta-horizontal}
\pa_\zeta dy(\zeta)=\left(\cU-\frac{1}{\zeta}\cV\right)(dy(\zeta))
.\eeq
\end{proposition}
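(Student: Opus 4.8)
The plan is to exploit the scaling structure of the functionals~\eqref{class_local_functionals}. Write $y=\zeta^{-1/2}G(\hat\la;\zeta)+\phi(\zeta)$ with $G(\hat\la;\zeta):=\frac1{2\pi i}\oint_{|z|=1}F(\zeta\la(z),\zeta\bla(z))\,\frac{dz}{z}$. The crucial input is the Euler-homogeneity identity $\pa_E G=\zeta\,\pa_\zeta G$. To prove it I would differentiate $F(\zeta\la(z),\zeta\bla(z))$ along the Euler field using $\pa_E\la=\la-z\la'$ and $\pa_E\bla=\bla-z\bla'$; regrouping terms,
\[
\pa_E\big[F(\zeta\la,\zeta\bla)\big]=\zeta\big(\la F_\x+\bla F_\bx\big)-z\big(\zeta\la'F_\x+\zeta\bla'F_\bx\big)=\zeta\,\pa_\zeta\big[F(\zeta\la,\zeta\bla)\big]-z\,\pa_z\big[F(\zeta\la,\zeta\bla)\big].
\]
Since $F(\zeta\la(z),\zeta\bla(z))$ is single valued and holomorphic on a neighbourhood of $\S^1$, the $z$-derivative term integrates to zero, $\oint_{|z|=1}z\,\pa_z[F(\zeta\la,\zeta\bla)]\,\frac{dz}{z}=\oint_{|z|=1}\pa_z[F(\zeta\la,\zeta\bla)]\,dz=0$; integrating the display over $\frac1z\,dz$ gives $\pa_E G=\zeta\pa_\zeta G$.

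Next, since $\phi$ does not depend on $\hat\la$, a one-line manipulation with $y=\zeta^{-1/2}G+\phi$ (using $\pa_E y=\zeta^{1/2}\pa_\zeta G$ and $\zeta^{-1/2}G=y-\phi$) yields $\pa_E y=\zeta\,\pa_\zeta y+\tfrac12\,y+(\text{a function of }\zeta\text{ alone})$. Applying the exterior differential $d$ on $M_0$, which commutes with $\pa_\zeta$ and annihilates the $\zeta$-only term, gives $d(\pa_E y)=\zeta\,\pa_\zeta(dy)+\tfrac12\,dy$. On the other hand, for any function $y$ and the torsion-free connection $\nabla$ the $1$-form $d(\pa_E y)-\nabla_E(dy)$ equals the transpose of the tensor $X\mapsto\nabla_XE$ applied to $dy$, which by~\eqref{V-oper} is $\tfrac12-\cV$ on $T^*M$; hence $d(\pa_E y)=\nabla_E(dy)+\tfrac12\,dy-\cV(dy)$.

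Comparing the two formulas for $d(\pa_E y)$, the $\tfrac12\,dy$ terms cancel, leaving $\zeta\,\pa_\zeta(dy)=\nabla_E(dy)-\cV(dy)$; inserting the hypothesis~\eqref{E_horizontal}, $\nabla_E(dy)=\zeta\,\cU(dy)$, and dividing by $\zeta$ gives precisely $\pa_\zeta(dy)=\big(\cU-\tfrac1\zeta\cV\big)(dy)$, which is~\eqref{zeta-horizontal}.

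The computation is short, so the main thing to get right is the bookkeeping: one must check that the $\tfrac12$ produced by the $\zeta^{-1/2}$ prefactor of~\eqref{class_local_functionals} combines exactly with the $\tfrac12$ in $\cV=\tfrac12-\nabla E$ so as to convert the plain homogeneity $\pa_E G=\zeta\pa_\zeta G$ into the $\zeta$-horizontality~\eqref{zeta-horizontal}. The second point needing care is the identity $d(\pa_E y)=\nabla_E(dy)+(\tfrac12-\cV)(dy)$: this must be verified against this paper's conventions for $\pa_X$ acting on $1$-forms, either directly from~\eqref{christ}, Lemma~\ref{lemma-V} and the $w$-coordinate expression for $E$, or — most transparently — in the flat coordinates $t^{\hat\alpha}$, where $\nabla$ has vanishing Christoffel symbols and $E$ is affine with $\nabla E$ the constant matrix $\tfrac12-\cV$. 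Finally, the vanishing of $\oint_{|z|=1}\pa_z[F(\zeta\la,\zeta\bla)]\,dz$ genuinely uses the single-valuedness of $F(\zeta\la(z),\zeta\bla(z))$ near $\S^1$, which is part of what makes~\eqref{class_local_functionals} well defined.
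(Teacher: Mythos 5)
Your proposal is correct and follows essentially the same route as the paper: you establish the Euler homogeneity identity $\pa_E y=\zeta\pa_\zeta y+\tfrac12 y$ (up to a $\zeta$-only term) by observing that the $z\pa_z$ contribution integrates to zero, differentiate it, and convert $d(\pa_E y)$ into $\nabla_E(dy)+(\tfrac12-\cV)(dy)$ using torsion-freeness and the definition $\cV=\tfrac12-\nabla E$, exactly as in the paper's equations~\eqref{homo-y}--\eqref{homo-3}. The only cosmetic difference is that you prove the homogeneity for the integral part $G$ first and then track the $\zeta^{-1/2}$ prefactor, whereas the paper computes $E(y(\zeta))$ directly; the substance is identical.
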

\begin{proof}
For a functional of type~\eqref{class_local_functionals} it is easy to check that the following identity holds
\beq
\label{homo-y}
\left[\zeta\pa_\zeta  +\frac12 \right] y(\zeta) = E\left(y(\zeta)\right) ,
\eeq
up to an irrelevant function that depends only on $\zeta$.
Indeed the second term in the right-hand side of 
\[
E(y(\zeta))= 
\zeta^{1/2} \frac1{2\pi i}\oint ( F_x\la+F_\bx\bla ) \frac{dz}{z} -\zeta^{1/2}\frac1{2\pi i}\oint ( F_x\la'+F_\bx\bla' ) dz
\]
vanishes, since it is the integral of a total derivative in $z$, and the first term equals $\zeta \pa_\zeta y(\zeta) + \frac12 y(\zeta)$.

Differentiating equation~\eqref{homo-y}, we obtain
\beq
\label{homo-2}
\zeta \pa_\zeta dy(\zeta) + \frac12 dy(\zeta) = d(E(y(\zeta))) .
\eeq
Then we rewrite the right-hand side as (see below)
\beq
\label{homo-3}
d(E(y(\zeta))) = \nabla_E dy(\zeta) + \nabla_{dy(\zeta)} E 
\eeq
and eliminate $\nabla_E dy(\zeta)$ using~\eqref{E_horizontal}, obtaining from~\eqref{homo-2}
\[
\zeta \pa_\zeta dy(\zeta) =
\left( \zeta \cU -\frac12 + \nabla E \right) (dy(\zeta))
\]
which, by definition of $\cV$, is the required equation
~\eqref{zeta-horizontal}.

Equation~\eqref{homo-3} is proved by contracting its left-hand side with an arbitrary vector field $X$, obtaining
\bean
<d(E(y(\zeta))),X> &= \pa_X < dy(\zeta), E> \\
&= < \nabla_X dy(\zeta), E> + <dy(\zeta), \nabla_X E>
.\eean
The last line, spelling out the covariant derivative in its first term  and transposing the operator $\nabla E$, is equal to
\[
<\pa_X dy(\zeta) , E> - <\Gamma_X (dy(\zeta)), E> 
+<\nabla_{dy(\zeta)}E,X>
.\]
Here we used the symbol $\nabla_{dy(\zeta)}E$ to denote the traspose w.r.t. $<,>$ of the operator $X \to \nabla_X E$ on the tangent bundle, evaluated on the $1$-form $dy(\zeta)$.  
Finally by the torsionless property~\eqref{torsionless} of the Christoffel operator $\Gamma_X$ and the symmetry of second derivatives, i.e. $<\pa_X dy,E>=<\pa_E dy,X>$, we obtain
\[
<d(E(y(\zeta))),X> = < (\pa_E - \Gamma_E + \nabla E) (dy(\zeta)), X >
\]
which, by the arbitrarity of $X$, is equivalent to~\eqref{homo-3}.
\end{proof}

Let us now consider the horizontality equations in the direction of $M_0$. We need the following simple lemma.
\begin{lemma}
\label{z_derivation} 
Let $G(\x,\bx)$ be a function analytic in $\x$ and $\bx$.
The following identity holds
\[
z\pa_zG=\zeta\left( G_\x \left(zw'(z)\right)_{\leq0} + G_\bx \left( zw'(z)\right)_{>0} - (G_\bx-G_\x)(z+\frac{e^u}{z})\right),
\]
where $G=G(\zeta\la(z),\zeta\bla(z))$, $G_\x=\frac{\pa G}{\pa \x}(\zeta\la(z),\zeta\bla(z))$ and $G_\bx=\frac{\pa G}{\pa \bx}(\zeta\la(z),\zeta\bla(z))$. 
\end{lemma}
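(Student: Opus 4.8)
The plan is to reduce the identity to a direct computation combining the chain rule with the explicit description of the $w$-coordinate parametrization. First I would apply the chain rule to $G=G(\zeta\la(z),\zeta\bla(z))$, which gives immediately
\[
z\pa_z G = \zeta\bigl(G_\x\, z\la'(z)+G_\bx\, z\bla'(z)\bigr),
\]
so that the entire statement reduces to rewriting $z\la'(z)$ and $z\bla'(z)$ in terms of $(zw'(z))_{\leq0}$, $(zw'(z))_{>0}$ and $z+\frac{e^u}{z}$.

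For this I would use the inversion formulas recalled in the discussion of the $w$-coordinates,
\[
\la(z)=w_{\leq0}(z)+z-v-e^u z^{-1},\qquad \bla(z)=w_{\geq1}(z)-z+v+e^u z^{-1}.
\]
Differentiating these and multiplying by $z$, and using that the operator $z\pa_z$ acts diagonally on Laurent monomials, I would observe that $z\pa_z w_{\leq0}(z)=(zw'(z))_{\leq0}$ and $z\pa_z w_{\geq1}(z)=(zw'(z))_{>0}$; here it is relevant — though harmless — that $zw'(z)$ has vanishing $z^0$-coefficient, so that it makes no difference whether the projections are taken with or without the constant term. Together with $z\pa_z z=z$ and $z\pa_z(-e^u z^{-1})=e^u z^{-1}$, this yields
\[
z\la'(z)=(zw'(z))_{\leq0}+z+\frac{e^u}{z},\qquad z\bla'(z)=(zw'(z))_{>0}-z-\frac{e^u}{z}.
\]

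Substituting these two expressions into the chain-rule formula above and collecting the terms proportional to $z+\frac{e^u}{z}$ gives
\[
z\pa_z G=\zeta\Bigl(G_\x\,(zw'(z))_{\leq0}+G_\bx\,(zw'(z))_{>0}-(G_\bx-G_\x)\bigl(z+\tfrac{e^u}{z}\bigr)\Bigr),
\]
which is exactly the assertion. I do not expect any genuine obstacle in this argument; the only point that deserves a little care is the bookkeeping of which Laurent projection contains the $z^0$ term, and this is settled once and for all by the vanishing of the constant coefficient of $zw'(z)$ noted above.
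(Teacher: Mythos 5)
Your argument is correct: the chain rule reduces the identity to the two formulas $z\la'(z)=(zw'(z))_{\leq0}+z+e^u z^{-1}$ and $z\bla'(z)=(zw'(z))_{>0}-z-e^u z^{-1}$, which follow immediately from the inversion formulas for the $w$-coordinates together with the fact that $z\pa_z$ acts degree-wise on Laurent series. The paper states this lemma without proof, and your computation is exactly the intended one.
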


For functionals of type~\eqref{class_local_functionals} we can reformulate~\eqref{horizontality-X} in term of simpler constraints on the function $F(\x,\bx)$. 
\begin{proposition}
\label{horizontal_lemma}
For a functional $y(\hat\la,\zeta)$ of type~\eqref{class_local_functionals} the horizontality equation~\eqref{horizontality-X} is equivalent to 
\beqa
\label{F-horizontality}
&\left(F_{x\bx}-F_{xx}- F_x \right)_{\geq-1} = 0  \\
&\left(F_{\bx\bx}-F_{x\bx} - F_\bx \right)_{\leq1} = 0 \\
&(F_{\bx}-F_{\x} - F)_0 = c \\
&\pa_u \left[e^u (F_{\bx}-F_{\x} - F)\right]_1 = 0 
\eeqa
where $c$ is a constant.
\end{proposition}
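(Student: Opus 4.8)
The plan is to write out both sides of the horizontality equation~\eqref{horizontality-X} in the $w$-coordinate representation of $T^*M_0$, use Lemma~\ref{z_derivation} to trade the $z$-derivative produced by the Christoffel operator for $\x,\bx$-derivatives of $F$, and then extract the content of the equation by letting the test vector $X=(X(z),X_v,X_u)$ vary freely. The additive term $\phi(\zeta)$ in~\eqref{class_local_functionals} plays no role, since it depends only on $\zeta$ and contributes nothing to $dy$ as a $1$-form on $M_0$. First, from $\la(z)=w_{\leq0}(z)+z-v-e^uz^{-1}$ and $\bla(z)=w_{\geq1}(z)-z+v+e^uz^{-1}$ one gets $\pa_X\la=X_{\leq0}(z)-X_v-e^uX_uz^{-1}$ and $\pa_X\bla=X_{\geq1}(z)+X_v+e^uX_uz^{-1}$; inserting these in $\pa_Xy=\frac{\zeta^{1/2}}{2\pi i}\oint(F_\x\pa_X\la+F_\bx\pa_X\bla)\frac{dz}z$ and using $\frac1{2\pi i}\oint A(B)_{\leq0}\frac{dz}z=\frac1{2\pi i}\oint(A)_{\geq0}B\frac{dz}z$ and $\frac1{2\pi i}\oint A(B)_{\geq1}\frac{dz}z=\frac1{2\pi i}\oint(A)_{\leq-1}B\frac{dz}z$, one reads off
\[
dy=\zeta^{1/2}\Big(F_\x+(F_\bx-F_\x)_{\leq-1},\ (F_\bx-F_\x)_0,\ e^u(F_\bx-F_\x)_1\Big)
\]
in $\cH(\S^1)\oplus\C^2$. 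Setting $\Phi:=F_\bx-F_\x-F$, the four target identities become $(\Phi_\x)_{\geq-1}=0$, $(\Phi_\bx)_{\leq1}=0$, $(\Phi)_0=c$, and $\pa_u[e^u\Phi]_1=0$; carrying the computation in terms of $\Phi$ is what keeps it manageable.

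Next, differentiating $dy$ along $X$ (with $\pa_XF_\x=\zeta(F_{\x\x}\pa_X\la+F_{\x\bx}\pa_X\bla)$ and its analogues) yields $\pa_Xdy$; formula~\eqref{christ} gives $\Gamma_X(dy)=\big(dy'(z)X(z)/w'(z),0,0\big)$, whose $z$-derivative is rewritten via Lemma~\ref{z_derivation}; and substituting $\alpha=dy$ in~\eqref{c-x} gives $\zeta C_X(dy)$. The key algebraic observation is that the second derivatives $F_{\x\x},F_{\x\bx},F_{\bx\bx}$ enter $\pa_Xdy$ and $\Gamma_X(dy)$ only through $\Phi_\x=F_{\x\bx}-F_{\x\x}-F_\x$ and $\Phi_\bx=F_{\bx\bx}-F_{\x\bx}-F_\bx$ — for instance $\pa_X(F_\bx-F_\x)=\pa_XF+\zeta(\Phi_\x\pa_X\la+\Phi_\bx\pa_X\bla)$ — whereas $\zeta C_X(dy)$, being $\C$-linear and algebraic in $dy$, contains no second derivatives at all.

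Then, since~\eqref{horizontality-X} must hold for every $X$ with $X(z)\in\cH(\S^1)$ arbitrary and $X_v,X_u\in\C$ arbitrary, one matches the three components. In the $\cH(\S^1)$-slot the part carrying second derivatives cannot be balanced on the right and must therefore vanish; after Lemma~\ref{z_derivation} and the splitting $(A)_{\leq0}=A-(A)_{>0}$ this is exactly $(\Phi_\x)_{\geq-1}=0$ together with $(\Phi_\bx)_{\leq1}=0$. Granting these, the $v$-component of~\eqref{horizontality-X} becomes an identity, the $u$-component collapses to $\pa_u[e^u\Phi]_1=0$, and the residual first-derivative content of the $\cH(\S^1)$-component forces $(\Phi)_0$ to be independent of the point of $M_0$, i.e. $(\Phi)_0=c$ — the value being immaterial, since adding a constant to $F$ alters neither $dy$ nor the horizontality condition. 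The converse is the same computation read backwards, so the four identities are equivalent to~\eqref{horizontality-X}.

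The step I expect to be the real obstacle is the projection bookkeeping in the last two paragraphs: formula~\eqref{c-x} for $C_X$ is a sum of many terms carrying the assorted projections $(\ )_{>0},(\ )_{<0},(\ )_{\geq0},(\ )_0$, and matching them against the $z\pa_z$ contributions from $\Gamma_X$ requires systematic use of Lemma~\ref{z_derivation}, of $(A)_{\leq0}=A-(A)_{>0}$ and its variants, and of $\la+\bla=w$. Recognizing from the outset that the whole identity organizes itself around the single combination $\Phi=F_\bx-F_\x-F$ is what keeps the cancellations under control.
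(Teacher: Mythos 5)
Your setup is sound and follows the same route as the paper: the formula you give for $dy$ in $w$-coordinates agrees with~\eqref{differential} (since $F_\x+(F_\bx-F_\x)_{\leq-1}=(F_\x)_{\geq0}+(F_\bx)_{<0}$), the reduction of the four conditions to statements about $\Phi=F_\bx-F_\x-F$ is exactly right, and the strategy of letting $X=(X(z),X_v,X_u)$ vary and matching the three components is equivalent to the paper's splitting into the directions $(X(z),0,0)$, $\partial/\partial u$, $\partial/\partial v$. The difficulty is that the proposition \emph{is} the projection bookkeeping, and that is precisely the part you defer. The $\cH(\S^1)$-component is not a matter of routine matching: one must first observe that the $C_X$-terms $\zeta\left(X_{>0}\,\tfrac{\pa y}{\pa w}\right)_{<0}-\zeta\left(X_{\leq0}\,\tfrac{\pa y}{\pa w}\right)_{\geq0}$ reproduce $\pa_X\tfrac{\pa y}{\pa w}$ up to the single term $-\zeta^{3/2}X(z)F_{\x\bx}$ (which requires rewriting $\tfrac{\pa y}{\pa w}$ via the already-established conditions as in~\eqref{dy-subs}), so that the whole equation becomes \emph{independent of} $X(z)$; one then needs the identity collapsing the first-derivative terms to $\oint z\pa_z F\,\tfrac{dz}{z}=0$, which is what leaves the residual coefficient $(\Phi_\x)_{-1}=0$ extending $(\Phi_\x)_{\geq0}=0$ to $(\Phi_\x)_{\geq-1}=0$. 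Similarly, the $u$-direction requires recognizing both sides of~\eqref{uder-u} as total $u$-derivatives before the condition $\pa_u[e^u\Phi]_1=0$ emerges. None of these cancellations is anticipated in your outline, and asserting that the bookkeeping "is exactly" the four conditions does not establish it.

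A second, smaller point: you attribute $(\Phi)_0=c$ to "the residual first-derivative content of the $\cH(\S^1)$-component", but that residual content vanishes identically (it is $\oint z\pa_z F\,\tfrac{dz}{z}$) and its leftover is $(\Phi_\x)_{-1}=0$, not a statement about $(\Phi)_0$. In the paper the condition $(\Phi)_0=c$ comes from integrating the $\partial/\partial v$-direction equation $\pa_v\,dy=\zeta\,dy$, i.e.\ from $d(\pa_v y-\zeta y)=0$, which identifies $\zeta^{1/2}(\Phi)_0$ as the function that must be constant on $M_0$. This misattribution is harmless for the final list of conditions (constancy of $(\Phi)_0$ on $M_0$ is in fact forced by the first two conditions), but it is a symptom of the same issue: the equivalence has been sketched, not verified.
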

\begin{proof}
The horizontality equation of the differential $dy(\zeta)$ w.r.t. the deformed flat connection,
\[
\tilde\nabla dy(\zeta)=0
,\]
can be split in three separate equations 
\[
\tilde\nabla_X dy(\zeta) = 0, \quad
\tilde\nabla_{\frac\pa{\pa v}} dy(\zeta) = 0, \quad
\tilde\nabla_{\frac\pa{\pa u}} dy(\zeta) =0
\]
where $X=(X(z),0,0)$, $X(z)\in\cH(\S^1)$, which we will consider separately.

{\it First part.} Here prove that the differential $dy(\zeta)$ of a functional of the form~\eqref{class_local_functionals} is covariantly constant along any vector of the form $X=(X(z),0,0)$ iff 
\beq \label{firstpart}
(F_{x \bx} - F_{xx} - F_x )_{\geq-1} = 0 \text{ and }
(F_{\bx \bx} - F_{x \bx} - F_\bx )_{\leq0} = 0
.\eeq

The differential of a functional of type~\eqref{class_local_functionals} at fixed $\zeta\in\C^\times$  is given by 
\[
 d y(\zeta)=(\frac{\pa y(\zeta)}{\pa w(z)},\frac{\pa y(\zeta)}{\pa v},\frac{\pa y(\zeta)}{\pa u}) \in \cH(\S^1)\oplus \C^2 
\]
where the components are defined by
\bes
\label{differential}
\bea
\frac{\pa y(\zeta)}{\pa w(z)}&=\sqrt{\zeta}\left(\left(F_x\right)_{\geq 0}+\left(F_{\bar x}\right)_{<0}\right), \\
\frac{\pa y(\zeta)}{\pa v}&=\sqrt{\zeta}\left( F_{{\bar x}} - F_x\right)_{0}, \\
\frac{\pa y(\zeta)}{\pa u}&=\sqrt{\zeta}\ e^u\left(F_{\bar x}-F_x\right)_{1}.
\eea
\ees

The $v$ and $u$ components of the horizontality equation 
\bea \label{hori1}
\pa_X dy(\zeta) = \Gamma_X (dy(\zeta)) + \zeta C_X(dy(\zeta))
\eea
are written explicitly as
\bes
\bea
&\pa_X \frac{\pa y(\zeta)}{\pa v} = \zeta \left( X(z) \frac{\pa y(\zeta)}{\pa w(z)} \right)_0, \label{der-v}\\
&\pa_X \frac{\pa y(\zeta)}{\pa u}=
\zeta e^u \left( X(z) \left( \frac{\pa y(\zeta)}{\pa w(z)} +  \frac{\pa y(\zeta)}{\pa v}  \right) \right)_1 \label{der-u}
\eea
\ees
respectively.

The derivations of the components of $dy(\zeta)$ along $X=(X(z),0,0)$ are easily computed 
\bes
\bea
&\pa_X \frac{\pa y(\zeta)}{\pa w(z)} = \zeta^{\frac32} \left( (F_{xx} X_{\leq0} + F_{x\bx} X_{\geq1} )_{\geq0} 
+(F_{x\bx}X_{\leq0} +F_{\bx\bx} X_{\geq1})_{\leq-1}\right), \label{dx-dw} \\
&\pa_X \frac{\pa y(\zeta)}{\pa v} = 
\zeta^{\frac32} \left( (F_{x \bx} - F_{xx}) X_{\leq 0} + (F_{\bx \bx} - F_{x \bx} ) X_{\geq1} \right)_0, \\
&\pa_X \frac{\pa y(\zeta)}{\pa u} = \zeta^{\frac32} e^u 
\left( (F_{x \bx} - F_{xx}) X_{\leq0} + (F_{\bx \bx} - F_{x \bx}) X_{\geq1} \right)_1 .
\eea
\ees
Substituting in~\eqref{der-v} we obtain
\[
( (F_{x \bx} - F_{xx} -F_x) X_{\leq0} + (F_{\bx \bx} - F_{x \bx} - F_{\bx} ) X_{\geq1} )_0 = 0
\]
which is equivalent to
\beq
\label{FFF1}
(F_{x \bx} - F_{xx} -F_x)_{\geq0} = 0 \text{ and } 
(F_{\bx \bx} - F_{x \bx} - F_{\bx} )_{\leq-1} =0 .
\eeq
Similarly~\eqref{der-u} is equivalent to 
\beq
\label{FFF2}
(F_{x \bx} - F_{xx} -F_x)_{\geq1} = 0 \text{ and } 
(F_{\bx \bx} - F_{x \bx} - F_{\bx} )_{\leq0} =0.
\eeq

Taking into account these equations we can write $\frac{\pa y(\zeta)}{\pa w(z)}$ in terms of the second derivatives of $F$ only
\beq
\label{dy-subs}
\frac{\pa y(\zeta)}{\pa w(z)} = \sqrt\zeta \left( (F_{x\bx} -F_{xx})_{\geq0} + ( F_{\bx\bx}-F_{x\bx})_{\leq-1} \right). 
\eeq

The remaining component of the horizontality equation~\eqref{hori1}, taking into account the Christoffel~\eqref{christ} and the multiplication~\eqref{c-x} operators on the cotangent, is 
\beqa 
\label{der-w}
&\pa_X \frac{\pa y(\zeta)}{\pa w(z)} = \frac{X(z)}{z w'(z)} z\pa_z \left( \frac{\pa y(\zeta)}{\pa w(z)} \right)  + \\
&+ \zeta \frac{X(z)}{zw'(z)} \left[ 
\left(z w'(z) \frac{\pa y(\zeta)}{\pa w(z)} \right)_{>0} - (zw'(z))_{>0} \frac{\pa y(\zeta)}{\pa w(z)} +\right.\\
&\left.+\left( z+ \frac{e^u}z \right) \frac{\pa y(\zeta)}{\pa w(z)} +
\frac{e^u}z \frac{\pa y(\zeta)}{\pa v} + \frac{\pa y(\zeta)}{\pa u}\right]+\\
&+ \zeta \left( X_{>0}(z) \frac{\pa y(\zeta)}{\pa w(z)} \right)_{<0} 
-\zeta \left( X_{\leq0}(z) \frac{\pa y(\zeta)}{\pa w(z)}\right)_{\geq0} .
\eeqa

In this equation, the last line, substituting~\eqref{dy-subs} and reorganizing its terms, is equal to the left-hand side, as given by~\eqref{dx-dw}, plus the term
\[
- \zeta^{\frac32} X(z) F_{x \bx} .
\]

Hence~\eqref{der-w} reduces to an equation which does not depend on $X(z)$, i.e.
\beqa
\label{der-w1}
\zeta^{\frac32} z w'(z) F_{x \bx} &=  z\pa_z \left( \frac{\pa y(\zeta)}{\pa w(z)} \right)+\\
 +\zeta &\left[ 
\left(z w'(z) \frac{\pa y(\zeta)}{\pa w(z)} \right)_{>0} - (zw'(z))_{>0} \frac{\pa y(\zeta)}{\pa w(z)} +\right.\\
& \left. +\left( z+ \frac{e^u}z \right) \frac{\pa y(\zeta)}{\pa w(z)} +
\frac{e^u}z \frac{\pa y(\zeta)}{\pa v} + \frac{\pa y(\zeta)}{\pa u}\right] .
\eeqa
Using Lemma~\ref{z_derivation} we compute the derivative 
\bean
z &\pa_z \frac{\pa y(\zeta)}{\pa w(z)} =\\ & =\zeta^{\frac32}  \left( \left[(F_{xx}-F_{\x\bx}) \left[zw'(z)\right]_{\leq0} + F_{x\bx}\,zw'(z) - (F_{x\bx}-F_{xx})(z+\frac{e^u}{z})\right]_{\geq0}\right.\\
&+\left.\left[F_{x\bx}\,zw'(z) + (F_{\bx\bx}-F_{x\bx}) \left[zw'(z)\right]_{>0} - (F_{\bx\bx}-F_{x\bx})(z+\frac{e^u}{z})\right]_{<0}\right)\nn .
\eean
Using~\eqref{FFF1}-\eqref{FFF2} this may be rewritten as
\bean
z\pa_z  \frac{\pa y(\zeta)}{\pa w(z)}  =\ \zeta^{3/2} &\left( \left[-F_{\x} \left[zw'(z)\right]_{\leq0} - (F_{\x\bx}-F_{\x\x})z\right]_{\geq0}+\right.\\
&+ \left[F_{\bx} \left[zw'(z)\right]_{>0} - (F_{\bx\bx}-F_{\x\bx})z\right]_{<0}+\\
&+ F_{\x\bx}\,zw'(z) -\left.\frac{e^u}{z}\left(\left[F_\x\right]_{>0}+\left[F_\bx\right]_{\leq0}\right)\right) .
\eean
Using this and~\eqref{differential}, observing that
\bean
&-\left[F_{\x} \left[zw'(z)\right]_{\leq0} \right]_{\geq0}+\left[F_{\bx} \left[zw'(z)\right]_{>0}\right]_{<0}
+\left[zw'(z)\left(\left[F_\x\right]_{\geq 0}+\left[F_\bx\right]_{<0}\right)\right]_{>0}\\
&-\left[z w'(z) \right]_{>0} \left(\left[F_\x\right]_{\geq 0}+\left[F_\bx\right]_{<0}\right)= -\left[F_x \left[z w'(z) \right]_{\leq0} +F_\bx \left[z w'(z) \right]_{>0}\right]_0 ,
\eean
and taking into account~\eqref{FFF1}-\eqref{FFF2}, we finally rewrite~\eqref{der-w1} as
\bean
-&\left[F_{\x}\left[z w'(z)\right]_{\leq0}+ F_{\bx}\left[z w'(z)\right]_{>0} - (F_\bx-F_\x)(z+\frac{e^u}{z})\right]_0 - \\
-&\left[F_{\x\bx}-F_{\x\x}-F_\x \right]_{-1} =0 .
\eean
Due to Lemma \ref{z_derivation}, the first parenthesis is precisely $-\oint z\pa_z F \frac{d z}{z}=0$. We have completed the proof of~\eqref{firstpart}.

{\it Second part.} 
Now we show that $d y(\zeta)$ is covariantly constant along the vector $\frac{\pa}{\pa u}$ iff
\bes
\label{secondpart}
\bea
&(F_{x \bx} - F_{xx} -F_x)_{\geq1} = 0, \\
&(F_{\bx \bx} - F_{x \bx} - F_{\bx} )_{\leq1} =0, \\
&\pa_u \left[\frac{e^u}{z}\left(F_\bx-F_\x-F\right)\right]_{0}=0.
\eea
\ees
The components of the horizontality equation
\[
\frac{\pa}{\pa u} dy(\zeta) = \zeta C_u(dy(\zeta)) 
\]
are given by
\bes
\bea
&\frac{\pa}{\pa u} \frac{\pa y(\zeta)}{\pa w(z)} = \zeta \frac{e^u}{z} \left( \frac{\pa y(\zeta)}{\pa w(z)}+\frac{\pa y(\zeta)}{\pa v}\right), \label{uder-z}\\
&\frac{\pa}{\pa u} \frac{\pa y(\zeta)}{\pa v} = \zeta \frac{\pa y(\zeta)}{\pa u}, \label{uder-v} \\
&\frac{\pa}{\pa u} \frac{\pa y(\zeta)}{\pa u} = \zeta e^u \left( \left(w'(z) \left( \frac{\pa y(\zeta)}{\pa w(z)} +\frac{\pa y(\zeta)}{\pa v} \right) \right)_0 - \frac{\pa y(\zeta)}{\pa v} \right).  \label{uder-u}
\eea
\ees

The derivatives of the components of $dy(\zeta)$ are 
\bes
\bea
&\frac{\pa}{\pa u} \frac{\pa y(\zeta)}{\pa w(z)}=
\zeta^{3/2}\left(\left[\frac{e^u}{z}\left(F_{\x\bx}-F_{\x\x}\right)\right]_{\geq 0}+\left[\frac{e^u}{z}\left(F_{\bx\bx}-F_{\x\bx}\right)\right]_{<0}\right), \\
&\frac{\pa}{\pa u} \frac{\pa y(\zeta)}{\pa v}=
\zeta^{3/2}\left[\frac{e^u}{z}\left(F_{\bx\bx}-2F_{\x\bx}+F_{\x\x}\right)\right]_{0}, \\
&\frac{\pa}{\pa u} \frac{\pa y(\zeta)}{\pa u}=
\zeta^{1/2}\left[\frac{e^u}{z}\left(F_\bx-F_\x\right)\right]_{0}+
\zeta^{3/2}\left[\frac{e^{2u}}{z^2}\left(F_{\bx\bx}-2F_{\x\bx}+F_{\x\x}\right)\right]_{0}. \label{derivative_u,u}
\eea
\ees

By a simple computation we obtain that equations~\eqref{uder-z}-\eqref{uder-v} are equivalent to 
\[
(F_{x \bx} - F_{xx} -F_x)_{\geq1} = 0 \text{ and } 
(F_{\bx \bx} - F_{x \bx} - F_{\bx} )_{\leq1} =0.
\]

Equation~\eqref{uder-u} is explicity written as
\beqa
\label{expl-uu}
&\frac1\zeta\left[\frac{e^u}{z}\left(F_\bx-F_\x\right)\right]_{0}+
\left[\frac{e^{2u}}{z^2}\left(F_{\bx\bx}-2F_{\x\bx}+F_{\x\x}\right)\right]_{0}=\\
=&\left[ \frac{e^u}{z} zw'(z)(\left[F_\x\right]_{>0}+\left[F_\bx\right]_{\leq0})\right]_0\!\!\! -e^u \left[F_\bx-F_\x\right]_0 .
\eeqa
Observe now that 
\bean
&\left[ \frac{e^u}{z} zw'(z)(\left[F_\x\right]_{>0}+\left[F_\bx\right]_{\leq0})\right]_0 =\\
=&\left[ \frac{e^u}{z}\left( F_\x\left[z w'(z)\right]_{\leq0}+ F_\bx\left[z w'(z)\right]_{<0})\right)\right]_0
\eean
and, since
\bean
0=\frac1\zeta\oint z\pa_z \frac{e^u}{z}F \frac{\de z}{z}&= 
\left[\frac{e^u}{z}\left(F_\x\left[z w'(z)\right]_{\leq0} + F_\bx\left[z w'(z)\right]_{<0})\right)\right]_0 - \\
&-\frac1\zeta\left[ \frac{e^u}{z}F \right]_0
 -e^u\left[F_\bx-F_x\right]_0 -\left[\frac{e^{2u}}{z^2}(F_\bx-F_x)\right]_0 ,
\eean
we can rewrite equation~\eqref{expl-uu} as
\bean
&\frac1\zeta\left[\frac{e^u}{z}\left(F_\bx-F_\x\right)\right]_{0}+\left[\frac{e^{2u}}{z^2}\left(F_{\bx\bx}-2F_{\x\bx}+F_{\x\x}\right)\right]_{0}=\\
&=\frac1\zeta\left[ \frac{e^u}{z}F \right]_0+\left[\frac{e^{2u}}{z^2}(F_\bx-F_x)\right]_0 .
\eean
To conclude, observe that both sides of this equation are derivatives in $u$
\bean
&\frac1\zeta\left[ \frac{e^u}{z}F \right]_0+\left[\frac{e^{2u}}{z^2}(F_\bx-F_x)\right]_0=\pa_u \frac1\zeta\left[ \frac{e^u}{z}F \right]_0 , \\
&\frac1\zeta\left[\frac{e^u}{z}\left(F_\bx-F_\x\right)\right]_{0}+\left[\frac{e^{2u}}{z^2}\left(F_{\bx\bx}-2F_{\x\bx}+F_{\x\x}\right)\right]_{0}=\pa_u \frac1\zeta\left[\frac{e^u}{z}\left(F_\bx-F_\x\right)\right]_{0} ,
\eean
hence it can be written as
\[
\pa_u \left[\frac{e^u}{z}\left(F_\bx-F_\x-F\right)\right]_{0}=0 .
\]
This ends the proof of~\eqref{secondpart}.

{\it Third part.}
Finally observe that, since $C_{\frac\pa{\pa v}}$ is the identity map and the Christoffel operator $\Gamma_{\frac\pa{\pa v}}$ is zero, the horizontality equation along the vector field $\frac\pa{\pa v}$ is simply
\[
\frac{\pa}{\pa v} dy(\zeta) = \zeta dy(\zeta) .
\]
Integrating we obtain
\[
 \pa_{v} y(\zeta)=\zeta y(\zeta) + k(\zeta),
\]
where $k(\zeta)$ is a constant depending only on $\zeta$. Hence this horizontality equation is equivalent to 
\beq
\label{thirdpart}
(F_\bx-F_x -F)_0 = c
\eeq
for a constant $c$.

Combining the formulas~\eqref{firstpart}, \eqref{secondpart}, \eqref{thirdpart} we obtain the desired result. 
\end{proof}

\begin{proof}[Proof of Theorem \ref{theorem2}]
It is now easy to see that the $y_{\hat\alpha}(\zeta)$ are deformed flat. The fact that they form a Levelt system, i.e. that the corresponding fundamental matrix can be written in a certain normal form, will be shown in the next section. 

The functionals~\eqref{y-functionals} are of the form~\eqref{class_local_functionals} 
\[
y_{\hat\alpha}(\hat\la,\zeta)=\frac{\zeta^{-\frac12}}{2\pi i}\oint F_{\hat\alpha}(\zeta\la(z),\zeta\bla(z)) \frac{dz}{z} +\phi_{\hat\alpha}(\zeta) \quad \hat\alpha\in\hat\Z 
\]
where the functions $F_{\hat\alpha}(\x,\bx)$ and $\phi_{\hat\alpha}(\zeta)$ are given by
\bean
F_\alpha(\x,\bx)=&-\frac{(\x+\bx)^{(\alpha+1)}}{\alpha+1}\exp(\frac{\bx-\x}{2}) \qquad \mbox{for $\alpha\neq -1$},\\
F_{-1}(\x,\bx)=&-\exp(-\x)(\log(\frac{\x+\bx}{\x})+\ein(-\x)-1)-\exp(\frac{\bx-\x}{2}),\\
F_{v}(\x,\bx)=&-\exp(-\x)(\log(\frac{\x+\bx}{\x})+\ein(-\x)-1)+\\
	      &+\exp(\bx)(\log((\x+\bx)\bx)-\ein(\bx)-1),\nonumber\\ 
F_{u}(\x,\bx)=&\exp(\bx)-1,
\eean
and
\[
\phi_{v}(\zeta)=-2\log(\zeta), \quad \phi_{u}(\zeta)=0, \quad \phi_{\alpha}(\zeta)=0 \qquad \mbox{for $\alpha\in\Z$}.
\]
These functions satisfy the following identities
\bean
& F_{\alpha;\bx}-F_{\alpha;\x}-F_\alpha =-\frac{\delta_{\alpha,-1}}{x} \quad \text{for } \alpha\in\Z, \\
&F_{v;\bx}-F_{v;\x}-F_{v} = \frac{1}{\bx}-\frac{1}{\x}, \\
&F_{u;\bx}-F_{u;\x}-F_{u} = 1 .
\eean
It follows that equations~\eqref{F-horizontality} are satisfied, hence, by Proposition~\ref{horizontal_lemma}, the $y_{\hat\alpha}(\zeta)$ are deformed flat. 
\end{proof}

\subsection{Levelt basis, monodromy and orthogonality.}

We now show that the deformed flat functions~\eqref{y-functionals} actually form a Levelt basis of deformed flat coordinates. 

It is convenient to rewrite the system~\eqref{horizontality-z} on the tangent space by introducing the gradient $\nabla y := \eta^*(dy)$ of a functional $y$. Taking into account the symmetry of $\cU$ and the antisymmetry of $\cV$ we get
\beq
\label{hor-tang}
\pa_\zeta \nabla y = (\cU + \frac1\zeta \cV ) (\nabla y) .
\eeq
The operator $\cV$, defined in~\eqref{V-oper}, is diagonal in the basis $\nabla t^{\hat\alpha}$ of $T M_0$
\[
\cV \nabla t^{\hat\alpha} = \mu_{\hat\alpha} \nabla t^{\hat\alpha}
\]
for $\mu_\alpha= -\alpha-\frac12$, $\mu_v = \frac 12 = -\mu_u$.
Here $\cU = E \cdot$ denotes the multiplication by $E$ on the tangent bundle $T M_0$. 

Equation~\eqref{hor-tang} is an operator-valued linear system on the complex plane with a regular singularity at $\zeta=0$ and an irregular singularity at $\zeta=\infty$, depending on the point $\hat\la\in M_0$.

In analogy with the finite-dimensional case we can define the ``fundamental matrix'' $Y:TM_0 \to TM_0$ as the linear operator determined by 
\[
Y(\nabla t^{\hat\alpha}) = \nabla y^{\hat\alpha}
\]
where $y_{\hat\alpha}$ are the deformed flat functions defined in~\eqref{y-functionals}.
Clearly $Y$ depends on $\zeta$ and on the point $\hat\la\in M_0$. The index $\hat\alpha$ is raised by the metric in flat coordinates~\eqref{eta-up}. 

The fundamental matrix satisfies the equation
\beq \label{hor-tang-Y}
\pa_\zeta Y = (\cU + \frac1\zeta \cV ) Y,
\eeq
where the composition of operators on $T M_0$ is understood. 

Let $R: T M_0 \to T M_0$ be a symmetric nilpotent operator defined by
\beq
\label{RRR}
R\left(\frac{\pa}{\pa t^\alpha}\right) = R\left(\frac{\pa}{\pa u}\right) = 0, \quad
R\left(\frac{\pa}{\pa v}\right)=  2 \frac{\pa}{\pa u}.
\eeq

\begin{proposition}
The fundamental matrix $Y$ can be factorized as 
\beq
\label{normalform}
Y = \Theta\ \zeta^\cV \zeta^R 
\eeq
where $\Theta: T M_0 \to T M_0$ is the linear operator defined by 
\[
\Theta( \nabla t^{\hat\alpha}) = \nabla \theta^{\hat\alpha}
\]
which is analytic at $\zeta=0$ and has leading term 
\beq
\label{leadingterm}
\Theta_{|\zeta=0} \equiv \mathrm{Id}.
\eeq
\end{proposition}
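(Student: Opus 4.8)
The plan is to verify the factorization \eqref{normalform} directly from the explicit formulas \eqref{y-functionals}, using the rewriting \eqref{yth}. Concretely, I would check that the operator $\Theta$ defined by $\Theta(\nabla t^{\hat\alpha})=\nabla\theta^{\hat\alpha}$ — where $\theta^{\hat\alpha}$ is the raising of the density $\theta_{\hat\alpha}(\zeta)$ of \eqref{thetas} by the flat metric \eqref{eta-up}, i.e. $\theta^\alpha=\theta_{-1-\alpha}$, $\theta^u=\theta_v$, $\theta^v=\theta_u$ — satisfies $Y=\Theta\,\zeta^{\cV}\zeta^R$, is analytic at $\zeta=0$, and reduces to the identity there. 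Note that $Y$ already solves \eqref{hor-tang-Y} because the $y_{\hat\alpha}(\zeta)$ are deformed flat (Theorem \ref{theorem2}); however \eqref{normalform} is an identity between explicitly given objects, so once the three properties above are checked no further appeal to that equation is needed.

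First I would unwind the right-hand side on the flat basis. Since $R^2=0$ one has $\zeta^R=\mathrm{Id}+(\log\zeta)\,R$, and translating the definition \eqref{RRR} through the metric \eqref{eta-up} (so that $\nabla t^\alpha=\frac{\pa}{\pa t^{-1-\alpha}}$, $\nabla t^u=\frac{\pa}{\pa v}$, $\nabla t^v=\frac{\pa}{\pa u}$) gives $R\,\nabla t^u=2\,\nabla t^v$ while $R$ annihilates every other $\nabla t^{\hat\alpha}$. Combining this with $\cV\,\nabla t^{\hat\alpha}=\mu_{\hat\alpha}\,\nabla t^{\hat\alpha}$ yields
\[
\zeta^{\cV}\zeta^R\nabla t^{\alpha}=\zeta^{-\alpha-\frac12}\nabla t^{\alpha},\qquad
\zeta^{\cV}\zeta^R\nabla t^{v}=\zeta^{\frac12}\nabla t^{v},\qquad
\zeta^{\cV}\zeta^R\nabla t^{u}=\zeta^{-\frac12}\nabla t^{u}+2\,\zeta^{\frac12}(\log\zeta)\,\nabla t^{v}.
\]
On the other hand \eqref{yth} gives $\nabla y^{\alpha}=\zeta^{-\alpha-\frac12}\nabla\theta^{\alpha}$, $\nabla y^{v}=\zeta^{\frac12}\nabla\theta^{v}$ and $\nabla y^{u}=\zeta^{-\frac12}\nabla\theta^{u}+2\,\zeta^{\frac12}(\log\zeta)\nabla\theta^{v}$. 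Applying $\Theta$ to the three displayed relations and comparing term by term with $Y(\nabla t^{\hat\alpha})=\nabla y^{\hat\alpha}$ gives exactly $Y=\Theta\,\zeta^{\cV}\zeta^R$.

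Next I would establish analyticity of $\Theta$ at $\zeta=0$, which reduces to analyticity of each $\theta_{\hat\alpha}(\zeta)$. For $\alpha\neq-1$ and for $\theta_u$ this is immediate: $e^{\frac{\bla-\la}{2}\zeta}$, $(\la+\bla)^{\alpha+1}$ and $\frac{e^{\bla\zeta}-1}{\zeta}$ are regular in $\zeta$ uniformly in $z\in\S^1$, conditions (i)--(iii) defining $M_0$ keeping $w=\la+\bla$ away from $0$ on $\S^1$. For $\theta_{-1}$ and $\theta_v$ the apparent $\zeta^{-1}$ and $\log\zeta$ singularities inside $\ein$ and the logarithms cancel because on $M_0\subset M_1$ the winding conditions make $\log\frac{\la+\bla}{z}$, $\log\frac{\la}{z}$ and $\log z\bla$ elements of $\cH(\S^1)$ (cf.~\eqref{logs}); hence the integrands are jointly holomorphic and in fact entire in $\zeta$, so $\theta_{-1}(\zeta)$ and $\theta_v(\zeta)$ are analytic at $\zeta=0$. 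Consequently $\nabla\theta^{\hat\alpha}(\zeta)=\eta^{*}(d\theta^{\hat\alpha}(\zeta))$ is analytic at $\zeta=0$, and so is $\Theta$.

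Finally I would compute the leading term by setting $\zeta=0$ in \eqref{thetas} ($\ein(0)=0$, $e^{0}=1$): this gives $\theta^{\alpha}(0)=\theta_{-1-\alpha}(0)=\frac1{2\pi i}\oint_{|z|=1}\frac{w^{-\alpha}}{\alpha}\frac{dz}{z}=t^{\alpha}$ for $\alpha\neq0$, and $\theta^{0}(0)=\theta_{-1}(0)=-\frac1{2\pi i}\oint_{|z|=1}\log\frac{w}{z}\frac{dz}{z}=t^{0}$ once one uses $\oint_{|z|=1}\log\frac{\la}{z}\frac{dz}{z}=0$; moreover $\theta^{u}(0)=\theta_v(0)=\frac1{2\pi i}\oint_{|z|=1}\log(\la\bla)\frac{dz}{z}=\log\bar u_{-1}=u=t^{u}$ and $\theta^{v}(0)=\theta_u(0)=\frac1{2\pi i}\oint_{|z|=1}\bla\,\frac{dz}{z}=\bar u_{0}=v=t^{v}$. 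Hence $\nabla\theta^{\hat\alpha}(0)=\nabla t^{\hat\alpha}$ for every $\hat\alpha\in\hat\Z$, i.e. $\Theta_{|\zeta=0}=\mathrm{Id}$; being analytic with $\Theta_{|\zeta=0}=\mathrm{Id}$, $\Theta$ is invertible near (indeed on all of) $\C^{\times}$, which proves the factorization. The hard part will be precisely these last identities together with the cancellations inside $\theta_{-1},\theta_v$: this is where the restriction to $M_0\subset M_1$, and the identification of the constant term of $\log z\bla$ with the Riemann--Hilbert flat coordinate $u$, are essential. Everything else is bookkeeping, confirming that the functionals \eqref{y-functionals}, built so as to solve \eqref{F-horizontality}, are already in Levelt normal form.
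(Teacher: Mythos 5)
Your proposal is correct and follows essentially the same route as the paper: relate $\nabla y^{\hat\alpha}$ to $\nabla\theta^{\hat\alpha}$ via \eqref{yth}, compute $\zeta^{\cV}\zeta^{R}$ on the gradients of the flat coordinates, compare term by term, and observe that $\theta_{\hat\alpha}(0)=t_{\hat\alpha}$. You merely supply details the paper leaves implicit — the explicit evaluation of $\theta_{\hat\alpha}(0)$ (including the vanishing of $\oint\log\tfrac{\la}{z}\tfrac{dz}{z}$ and the identification $(\log z\bla)_0=u$) and the analyticity of $\Theta$ at $\zeta=0$ — all of which check out.
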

\begin{proof}
The functionals $y_{\hat\alpha}$ are related to the $\theta_{\hat\alpha}$, analytic in $\zeta$, by the formulas~\eqref{yth}.
Taking the gradient and raising the indices in~\eqref{yth} we get 
\bean
&\nabla y^{\alpha}  = \zeta^{-\alpha - \frac12}\ \nabla\theta^\alpha, \\
&\nabla y^v = \zeta^{\frac12}\ \nabla\theta^v, \\
&\nabla y^u = \zeta^{-\frac12}\ \nabla\theta^u + 2 \zeta^{\frac12} \log\zeta \ \nabla\theta^v. 
\eean
We obtain exactly these expressions if we evaluate~\eqref{normalform} on $\nabla t^{\hat\alpha}$, taking into account that
\bean
&\zeta^\cV \zeta^R \nabla t^\alpha = \zeta^{-\alpha-\frac12} \nabla t^\alpha, \\
&\zeta^\cV \zeta^R \nabla v = \zeta^{\frac12} \nabla v,\\
&\zeta^\cV \zeta^R \nabla u = \zeta^{-\frac12} \nabla u +  2 \zeta^{\frac12} \log\zeta \ \nabla v.
\eean

The functionals $\theta_{\hat\alpha}$ at $\zeta=0$ coincide with the flat coordinates
\[
(\theta_{\hat\alpha})_{|\zeta=0} = t_{\hat\alpha},
\]
hence, taking the gradients at $\zeta=0$, we obtain~\eqref{leadingterm}.
\end{proof}

Let us now comment on the normal form~\eqref{normalform} in relation  with the usual theory of matrix-valued rational linear equations on the complex plane. 
In analogy with the finite-dimensional case, we say that a system of the form~\eqref{hor-tang-Y} is resonant if two or more eigenvalues of $\cV$ differ by a non-zero integer. In this sense our case is highly resonant, since all eigenvalues $\mu_{\hat\alpha}$ differ by non-zero integers, except for $\mu_v = \mu_{-1}=\frac12$ and $\mu_u =\mu_0=-\frac12$.

In finite dimensions, assuming $\cV$ is diagonalizable and non-resonant, the normal form of the fundamental matrix in a neighborhood of $\zeta=0$ is
\[
Y=\Theta(\zeta) \zeta^\cV, 
\]
with $\Theta$ uniquely determined by fixing $\Theta(0) = \mathrm{Id}$. In the resonant case the normal form of the fundamental matrix is~\eqref{normalform}, where one must allow for a nilpotent matrix $R=R_1+R_2+\dots$, with $R_{2n+1}$ symmetric, $R_{2n}$ skewsymmetric, and such that
\[
\zeta^{\cV}  R_k \zeta^{-\cV} = \zeta^k R_k
\]
for $k=1,2,\dots$

It is easy to check that the operator $R=R_1$ defined in~\eqref{RRR} satisfies these requirements. Therefore we can conclude that the system~\eqref{hor-tang-Y} admits a normal form that is completely analogous to the Levelt normal form constructed in the finite dimensional case. Correspondingly we say that $y_{\hat\alpha}$ is a Levelt system of deformed flat coordinates. 

Note that the resonance of the system implies that there is a residual gauge freedom in the choice of $R$ and $\Theta$ that we will exploit below. 

The vector space $V:=T_{\hat\la} M_0$, at a fixed point $\hat\la\in M_0$ together with the bilinear form $\eta$ and the operators $R$, $\cV$ on $V$ define the {\it spectrum} (or monodromy at $\zeta=0$) of the Frobenius manifold $M_0$.

As expected from the general theory of Frobenius manifolds, the monodromy does not depend on the point $\hat\la\in M_0$, as one can see from the fact that the operators $\cV$ and $R$ are constant in flat coordinates. This is indeed a reflection of the general property of isomonodromicity of the system~\eqref{horizz}.

\begin{remark}
Note that despite the high degree of resonance of $\cV$, in our case the matrix $R$ is very simple. This type of monodromy, where $R=R_1$, is typical of Frobenius manifolds originating from quantum cohomology. The potential of Frobenius manifold $M_0$ can indeed be written (see~\cite{CDM10})
\bean
F &= \frac1{4\pi i} \oint_\Gamma \oint_\Gamma \mathrm{Li}_3 \, \frac{\tilde{z}(w_1)}{\tilde{z}(w_2)} \ dw_1dw_2 + \frac1{2 \pi i} \oint_{\Gamma} \left( -e^{t^0} \tilde{z}(w) +\frac{e^s}{\tilde{z}(w)} \right) dw \\
&- e^{s+t^0} + (v+\frac{t^{-1}}2) \frac1{4 \pi i} \oint_\Gamma (t^0 + \log\frac{\tilde{z}(w)}{w} )^2 dw + \frac12 v^2 (s+t^0) 
\eean
where we have used a slightly different set of flat coordinates $(t^\alpha,v, s)$ where $s:=u-t^0$ and $\tilde{z}(w) := z(w)e^{-t^0}$. In this form the potential is given by a cubic part plus a deformation with possibly exponential dependence on the variables $s$ and $t^0$. The deformation is killed by performing the limit $s,t^0 \to - \infty$ and sending the remaining variables to $0$. This is called {\it point of classical limit} of the Frobenius manifold and, in the quantum cohomology case,	 corresponds to the point where the quantum cup product coincides with the ordinary cup product in cohomology. 
The structure constants at the point of classical limit are, in flat coordinates
\bean
&c^\alpha_{\beta\gamma} = \delta_{\beta+\gamma-\alpha, -1} (H_{-\beta-1} + H_{-\gamma-1}- H_{-\alpha-1}), \\
&c^u_{\alpha\beta} =\de_{\alpha,\beta} , \quad c^{\hat\alpha}_{v \hat\beta} = \de_{\hat\alpha, \hat\beta}
\eean
and zero otherwise. Here $H_n=1$ for $n\geq0$, $H_n=0$ otherwise. 
As proved in~\cite{D99}, at the point of classical limit $\hat\la_{\mathrm{class}}\in M_0$ the system~\eqref{hor-tang-Y} is automatically in normal form since 
\[
\lim_{\ \ \ \ \ \hat\la\to\hat\la_{\mathrm{class}}} \cU = R =R_1
\]
and by isomonodromicity it determines the spectrum of the Frobenius manifold. In the case of a quantum cohomology the operator $R$ corresponds to the multiplication by the Chern class $c_1(X)$ in ordinary cohomology of $X$. It would be interesting to understand if the Frobenius manifold $M_0$ admits a (quantum) cohomological origin as suggested by these observations. 
\end{remark}

%

\begin{remark} \label{ort-rem}
It was shown in~\cite{D99} that the Levelt fundamental matrix~\eqref{normalform} can be chosen in such a way that the analytic part $\Theta$ satisfies the orthogonality condition
\beq \label{orthogonnn} 
\Theta^* (-\zeta) \Theta (\zeta) \equiv 1.
\eeq
This condition is not satisfied by our choice of fundamental solution $Y$, since we have preferred to keep a simpler form for the deformed flat coordinates~\eqref{y-functionals}. 
However it is possible to obtain an orthogonal fundamental matrix $\tilde Y$ by a simple modification of $Y$. Note that, while in the nonresonant case the orthogonality condition follows from the symmetry properties of $\cU$ and $\cV$ in~\eqref{hor-tang-Y}, in the presence of resonances it must be imposed as an external condition on $\Theta$.

Let us first define the following functions $\tilde\theta$ on $M_0$ analytic in the parameter $\zeta$ in a neighborhood of $\zeta=0$
\bes
\label{odfc}
\bea
&\tilde\theta_\alpha(\zeta) = -  \frac{(2\alpha)!!}{2\pi i} \oint_{|z|=1} 
\Big[ \frac{e^{\frac{\la+\bla}2 \zeta}}{\zeta^{\alpha+1}} + \frac{e^{-\frac{\la+\bla}2 \zeta}}{(-\zeta)^{\alpha+1}} \Big]_{(+)} e^{\frac{\bla-\la}2 \zeta} \ \frac{dz}z \text{ for }\alpha\geq0, \label{oo1}\\
&\tilde\theta_{-1}(\zeta) =- \frac1{2\pi i} \oint e^{-\la \zeta} \left( \log\left(1+\frac{\bla}{\la} \right) + \ein(-\la\zeta) -\ein\left(-\frac{\la+\bla}2 \zeta\right) \right)\ \frac{dz}z, \\
&\tilde\theta_\alpha(\zeta) = \theta_{\alpha}(\zeta)=-\frac1{2\pi i} \oint_{|z|=1} \frac{(\la+\bla)^{\alpha+1}}{\alpha+1} e^{\frac{\bla-\la}2 \zeta} \ \frac{dz}z \text{ for }\alpha\leq-2, \\
&\tilde\theta_v(\zeta) = \frac1{2\pi i} \oint \left[ 
-e^{-\la \zeta} \left( \log\left( 1+\frac{\bla}{\la} \right) + \ein(-\la\zeta) -\ein\left(-\frac{\la+\bla}2 \zeta\right) \right) \right.+\\
&\quad\quad\qquad\qquad +
\left. e^{\bla \zeta} \left( \log( \bla(\la+\bla)) -\ein(\bla\zeta) -\ein\left(\frac{\la+\bla}2 \zeta\right)
\right)
\right] \frac{dz}z, \\
&\tilde\theta_u(\zeta) = \theta_u(\zeta) = \frac1{2\pi i} \oint_{|z|=1} \frac{e^{\bla\zeta}-1}{\zeta} \ \frac{dz}z,
\eea
\ees
and correspondingly
\bean
&\tilde y_{\alpha}(\zeta)  = \zeta^{\alpha + \frac12}\ \tilde\theta_\alpha(\zeta), \\
&\tilde y_v(\zeta) = \zeta^{-\frac12}\ \tilde\theta_v(\zeta) + 2 \zeta^{\frac12} \log \zeta \ \tilde\theta_u(\zeta), \\
&\tilde y_u(\zeta) = \zeta^{\frac12}\ \tilde\theta_u(\zeta) .
\eean
The bracket $[ \ ]_{(+)}$ in~\eqref{oo1} denotes the projection to non-negative powers of $\zeta$. Explicitly one has
\[
\Big[ \frac{e^{\frac{\la+\bla}2 \zeta}}{\zeta^{\alpha+1}} + \frac{e^{-\frac{\la+\bla}2 \zeta}}{(-\zeta)^{\alpha+1}} \Big]_{(+)} = 2 \sum_{n\geq0} \frac{\zeta^{2n}}{(2n+\alpha+1)!}\left(\frac{\la+\bla}2\right)^{2n+\alpha+1}.
\]

The corresponding fundamental matrix $\tilde Y$ and its analytic part $\tilde\Theta$, which are operators on $T M_0$ defined by
\[
\tilde Y ( \nabla t^{\hat\alpha} ) = \nabla \tilde y^{\hat\alpha},
\quad
\tilde \Theta( \nabla t^{\hat\alpha} ) = \nabla \tilde\theta^{\hat\alpha},
\]
are related as before by
\[
\tilde Y = \tilde\Theta\, \zeta^{\cV} \zeta^R .
\]

One can check that the fundamental matrix $\tilde Y$ is obtained from $Y$ by the right-composition with a constant invertible operator $C$ on $T M_0$
\[
\tilde Y = Y \, C .
\]
In components, where $C (\nabla t_{\hat\alpha}) = \nabla t_{\hat\gamma} \, C^{\hat\gamma}_{\ \hat\alpha}$, it is given by
\bean
&C^{\alpha}_{\ \beta}=\frac{(2\beta)!!}{(2\alpha)!!} \sum_{n\geq0} \de_{\alpha,\beta+2n} \text{ for } \alpha,\beta\geq0,\\
&C^{\alpha}_{\ -1} = (-1)^\alpha \frac{c_{\alpha+1}-1}{(2\alpha)!!} \text{ for } \alpha\geq0, \\
&C^\alpha_{\ v} = \frac{1+(-1)^\alpha}2 \frac{c_{\alpha+1}-1}{(2\alpha)!!} \text{ for } \alpha\geq0, \\
&C^v_{\ v} = C^u_{\ u} = C^\alpha_{\ \alpha} = 1 \text{ for } \alpha\leq-1 ,
\eean
all other components being zero. We have
\[
\tilde y_{\hat\alpha}(\zeta) =  \sum_{\hat\gamma\in\hat\Z}  y_{\hat\gamma}(\zeta) \, C^{\hat\gamma}_{\ \hat\alpha}, \quad 
\tilde\theta_{\hat\alpha}(\zeta) = \sum_{\hat\gamma\in\hat\Z} \theta_{\hat\gamma}(\zeta) \, C^{\hat\gamma}_{\ \hat\alpha} \,  \zeta^{\mu_{\hat\gamma}-\mu_{\hat\alpha}} .
\]

The formulas above show that $\tilde y_{\hat\alpha}$ are a Levelt system of deformed flat coordinates. Our choice of $C$  guarantees that they satisfy the orthogonality condition. 
\begin{proposition}
The family of functionals $\left\{ \tilde y_{\hat\alpha}(\hat\la,\zeta) \right\}_{\hat\alpha\in\hat\Z}$ over $M_0 \times C^\times$ forms a Levelt system of deformed flat coordinates on $M_0$ at $\zeta=0$ satisfying the orthogonality condition
\[
\tilde \Theta^*(-\zeta) \tilde\Theta(\zeta) \equiv 1 
\]
where $\tilde\Theta(\zeta)$ is holomorphic in $\zeta$ and $\tilde\Theta(0)=1$. 
\end{proposition}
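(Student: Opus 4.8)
Since the functionals $\tilde y_{\hat\alpha}$ have already been exhibited above as the linear combinations $\tilde y_{\hat\alpha}(\zeta)=\sum_{\hat\gamma}y_{\hat\gamma}(\zeta)\,C^{\hat\gamma}_{\ \hat\alpha}$ of the functions of Theorem~\ref{theorem2} by the constant invertible matrix $C$, the first two assertions of the Proposition are formal. Indeed, the deformed flatness equations~\eqref{horizz} are linear in $dy$ with coefficients $\Gamma_X,C_X,\cU,\cV$ that do not depend on the solution, and the $C^{\hat\gamma}_{\ \hat\alpha}$ are constants in both $\hat\la$ and $\zeta$; hence $d\tilde y_{\hat\alpha}=\sum_{\hat\gamma}C^{\hat\gamma}_{\ \hat\alpha}\,dy_{\hat\gamma}$ is again horizontal and, $C$ being invertible, $\tilde Y=YC$ is a fundamental solution of~\eqref{hor-tang-Y}. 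Moreover the associated analytic part is $\tilde\theta_{\hat\alpha}(\zeta)=\sum_{\hat\gamma}\theta_{\hat\gamma}(\zeta)\,C^{\hat\gamma}_{\ \hat\alpha}\,\zeta^{\mu_{\hat\gamma}-\mu_{\hat\alpha}}$, which is holomorphic near $\zeta=0$ with $\tilde\theta_{\hat\alpha}(0)=\theta_{\hat\alpha}(0)=t^{\hat\alpha}$ (the exponents attached to nonzero entries of $C$ being nonnegative integers, and $C$ restricting to the identity on each of the two resonant eigenspaces $\mu_v=\mu_{-1}$ and $\mu_u=\mu_0$, so that $\zeta^R$ never introduces an unbounded $\log\zeta$). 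Thus $\tilde Y=\tilde\Theta\,\zeta^{\cV}\zeta^{R}$ with $\tilde\Theta$ holomorphic at $\zeta=0$ and $\tilde\Theta(0)=\mathrm{Id}$, i.e.\ $\{\tilde y_{\hat\alpha}\}$ is a Levelt system. The real content is therefore the orthogonality $\tilde\Theta^{*}(-\zeta)\tilde\Theta(\zeta)\equiv 1$.

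To prove the orthogonality the plan is to use the invariance trick. For any fundamental solution $Z(\zeta)$ of~\eqref{hor-tang-Y} I would consider the operator $Z(-\zeta)^{*}Z(\zeta)$, where ${}^{*}$ denotes the $\eta$-adjoint; differentiating it in $\zeta$ and using the symmetry $\cU^{*}=\cU$ of the Euler multiplication together with the antisymmetry $\cV^{*}=-\cV$ of the operator~\eqref{V-oper}, the derivative cancels identically, so $Z(-\zeta)^{*}Z(\zeta)$ is a constant operator. Applying this to $\tilde Y=YC$ gives $\tilde Y(-\zeta)^{*}\tilde Y(\zeta)=C^{*}\bigl(Y(-\zeta)^{*}Y(\zeta)\bigr)C=:S$, a constant operator. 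Substituting the Levelt factorization $\tilde Y=\tilde\Theta\,\zeta^{\cV}\zeta^{R}$ into $\tilde Y(-\zeta)^{*}\tilde Y(\zeta)=S$ and using $R^{*}=R$ together with the commutation rule $\zeta^{\cV}R\,\zeta^{-\cV}=\zeta R$ from the normal form~\eqref{normalform}, the orthogonality $\tilde\Theta^{*}(-\zeta)\tilde\Theta(\zeta)\equiv 1$ becomes equivalent to $S$ having the specific value prescribed by the monodromy data $(\cV,R)$ and the normalization $\tilde\Theta(0)=\mathrm{Id}$ — this is the resonant analogue of the classical fact~\eqref{orthogonnn}, cf.~\cite[Theorem~3.6.4]{DZ01} and~\cite{D99}.

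The crux is then the finite identity $C^{*}\bigl(Y(-\zeta)^{*}Y(\zeta)\bigr)C=(\text{the prescribed constant})$. I would evaluate the constant operator $Y(-\zeta)^{*}Y(\zeta)$ directly from the explicit solution of Theorem~\ref{theorem2}, e.g.\ by letting $\zeta\to0$ in its own Levelt form $Y=\Theta\,\zeta^{\cV}\zeta^{R}$; it equals the prescribed value in all slots except those coupled by the resonances $\mu_v=\mu_{-1}=\tfrac12$, $\mu_u=\mu_0=-\tfrac12$ and the off-diagonal slots fed by the entire-exponential-integral terms $\ein(-\la\zeta)$ and $\ein(\bla\zeta)$ in $\theta_{-1}$ and $\theta_v$, where it records the harmonic numbers $c_p$. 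The matrix $C$ of the excerpt — with nontrivial entries $C^{\alpha}_{\ \beta}=(2\beta)!!/(2\alpha)!!$ for $\alpha-\beta\in 2\Z_{\geq0}$, $C^{\alpha}_{\ -1}=(-1)^{\alpha}(c_{\alpha+1}-1)/(2\alpha)!!$ and $C^{\alpha}_{\ v}=\tfrac{1+(-1)^{\alpha}}{2}(c_{\alpha+1}-1)/(2\alpha)!!$ — is precisely the constant gauge transformation conjugating these corrections away, so the identity reduces, weight by weight, to elementary relations among double factorials and harmonic numbers. An equivalent but more computational route, avoiding $Y$ altogether, is to check $\tilde\Theta^{*}(-\zeta)\tilde\Theta(\zeta)\equiv 1$ head on from~\eqref{odfc}: using the $w$-coordinate formula~\eqref{eta} for $\eta$ and the expressions~\eqref{differential} for the gradients $\nabla\tilde\theta_{\hat\alpha}$, the identity turns into a family of residue identities on $\S^{1}$ involving the $[\ ]_{(+)}$-truncated exponentials and the $\ein(-\tfrac{\la+\bla}{2}\zeta)$-shifts, each verified by expanding the exponentials in $\zeta$.

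The main obstacle I expect is exactly this last bookkeeping: showing that the $\ein$-shifts, the truncation $[\ ]_{(+)}$, and the double-factorial and harmonic-number coefficients of $C$ conspire so that the (a priori only gauge-correct) solution $\tilde Y$ satisfies the orthogonality \emph{exactly}, not merely up to a residual resonant term; since the system is highly resonant this cancellation is the delicate point and cannot be read off from the non-resonant argument. A secondary, more bookkeeping-level, point of care is the consistent choice of branch of $(-\zeta)^{\cV}$ and $(-\zeta)^{R}$, hence of the starred objects $\tilde\Theta^{*}(-\zeta)$, compatible with the monodromy $\zeta\mapsto e^{2\pi i}\zeta$ described after Theorem~\ref{theorem2}.
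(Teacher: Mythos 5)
Your proposal is correct in structure and is in fact more explicit than the paper's own proof, which disposes of the Levelt claim by the same gauge argument ($\tilde Y = Y\,C$ with $C$ constant, invertible, and compatible with the resonance structure) and then reduces orthogonality to the identity $< d\tilde\theta_{\hat\alpha}(-\zeta), d\tilde\theta_{\hat\beta}(\zeta)> = \eta_{\hat\alpha\hat\beta}$, declaring it ``essentially a long computation using the explicit expressions'' that it does not reproduce. Your secondary, head-on route is therefore exactly the paper's route. Your primary route — observing that $Z(-\zeta)^*Z(\zeta)$ is constant for any fundamental solution of~\eqref{hor-tang-Y} by the symmetry of $\cU$ and antisymmetry of $\cV$, then pinning down the constant via the Levelt factorization — is genuinely different and arguably better organized: the constancy means that in each matrix slot only finitely many Taylor coefficients of $\Theta^*(-\zeta)\Theta(\zeta)$ can contribute (those matching the exponent gap $\mu_{\hat\beta}-\mu_{\hat\alpha}$ after conjugation by $\zeta^{\pm\cV}\zeta^{R}$), so the infinite family of residue identities collapses to a finite verification per weight, which is where the double factorials and harmonic numbers of $C$ enter. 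What the paper's route buys is directness; what yours buys is a structural explanation of \emph{why} only the resonant and $\ein$-fed slots need checking. Neither you nor the paper actually executes the final combinatorial cancellation, so your flagged ``main obstacle'' is precisely the step the authors also elide; your sketch should be regarded as matching the paper's level of rigor while offering a cleaner reduction.
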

\begin{proof}
We only need to prove that the orthogonality holds. This is equivalent to showing that 
\beq \label{ort-den}
< d \tilde \theta_{\hat\alpha}(-\zeta) , d \tilde\theta_{\hat\beta}(\zeta) > = \eta_{\hat\alpha \hat\beta}.
\eeq
This is essentially a long computation using the explicit expressions~\eqref{odfc}. We will not reproduce them here. 
\end{proof}
\end{remark}

\subsection{The principal hierarchy}

Recall that the flat metric $\eta$ and the intersection form $\gamma$ on $M_0$ (see~\cite{CDM10}) define two Poisson brackets of hydrodynamic type on the loop space $\cL M_0$ which coincide with those given in Proposition~\ref{prop:poi}.

The Hamiltonian densities $\theta_{\alpha,p}$ define, through the Poisson bracket $\{,\}_1$, an infinite family of commuting flows on $\cL M_0$, which form the Principal hierarchy corresponding to the Frobenius manifold $M_0$. More precisely the Principal hierarchy of $M_0$ is given by the Hamiltonian vector fields on $\cL M_0$
\[
\frac{\pa}{\pa t^{\hat\alpha, p}} \cdot = 
\{ \cdot , H_{\hat\alpha, p} \}_1
\]
where the Hamiltonians are the functionals on $\cL M_0$ defined by
\[
H_{\hat\alpha, p} = \int_{S^1} \theta_{\hat\alpha, p+1} \ dx .
\]

In the first part of this work we have defined a family of commuting vector fields, the extended dispersionless 2D Toda hierarchy, on the loop space $\cL M_1$ of the space $M_1$ of pairs of ``holomorphic'' 2D Toda Lax symbols $(\la(z), \bla(z))$ with a winding numbers condition. Since $M_0$ is an open subset of $M_1$, the extended 2D Toda hierarchy can be restricted to $\cL M_0$. Moreover its Hamiltonian densities and the Poisson  bracket $\{,\}_1$ coincide with those of the Principal hierarchy, hence we clearly have that the two hierarchies coincide. 
\begin{proposition}
The extended dispersionless 2D Toda hierarchy, when restricted to $\cL M_0$, coincides with the Principal hierarchy of the Frobenius manifold $M_0$. 
\end{proposition}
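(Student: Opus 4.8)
The plan is to read this proposition as the final bookkeeping step that welds together the two constructions of the paper: since both the extended dispersionless 2D Toda hierarchy and the Principal hierarchy are by now presented as Hamiltonian systems, it suffices to check that they are Hamiltonian with respect to the \emph{same} Poisson bracket and generated by the \emph{same} Hamiltonians. The first thing to record is that the two hierarchies live on the same phase space: by Lemma~\ref{windingzeros}, condition (iii.) in the definition of $M_0$ forces the winding numbers condition defining $M_1$, so $M_0$ is an open subset of $M_1$ and the extended 2D Toda flows, a priori defined on $\cL M_1$, restrict to $\cL M_0$.

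Second, I would identify the Poisson structures. The bracket $\{,\}_1$ generating the Principal hierarchy is by definition the hydrodynamic-type bracket associated with the flat metric $\eta$ of $M_0$; as recalled above and proved in~\cite{CDM10}, this coincides with the operator $P_1$ of Proposition~\ref{prop:poi}, which is exactly the first bracket entering the bi-Hamiltonian formulation of the extended 2D Toda hierarchy (and the same is true of the second bracket, coming from the intersection form $\gamma$, although that is not needed here). The independence from the choice of covector representative in $\cH(\S^1)\oplus\cH(\S^1)$ is handled by Lemma~\ref{repres}. Third, I would match the Hamiltonians. The Principal hierarchy uses $H_{\hat\alpha,p}=\oint_{S^1}\theta_{\hat\alpha,p+1}\,dx$, where the densities $\theta_{\hat\alpha,p}$ are the $\zeta$-expansion coefficients at $\zeta=0$ of the analytic parts of the Levelt basis~\eqref{y-functionals} produced by Theorem~\ref{theorem2}. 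Taylor-expanding the explicit integrands~\eqref{thetas} at $\zeta=0$ --- using $(2p)!!=2^{p}\,p!$ for the exponential factors and the defining series of $\ein$ for the logarithmic ones, the latter producing the harmonic numbers $c_p$ that appear in $Q_{-1,p}$ and $Q_{v,p}$ --- yields
\[
\theta_{\hat\alpha,p}=\frac{1}{2\pi i}\oint_{|z|=1}Q_{\hat\alpha,p}(\la(z),\bla(z))\,\frac{dz}{z}
\]
with the $Q_{\hat\alpha,p}$ of~\eqref{Qfns}; equivalently this is just the $\zeta$-expansion of the functions $F_{\hat\alpha}(\zeta\la,\zeta\bla)$ already used in the proof of Theorem~\ref{theorem2}. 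Consequently $\theta_{\hat\alpha,p+1}$ equals the extended 2D Toda Hamiltonian density $h_{\hat\alpha,p}$, so the two families of Hamiltonian functionals on $\cL M_0$ are literally identical.

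Putting the two identifications together, the vector field $\{\cdot,H_{\hat\alpha,p}\}_1$ that defines the $\partial/\partial t^{\hat\alpha,p}$ flow of the Principal hierarchy coincides with the Hamiltonian vector field of the bi-Hamiltonian formulation of the extended 2D Toda hierarchy, which by that proposition is precisely the Lax flow~\eqref{eq:lax}; hence the two hierarchies agree on $\cL M_0$. I do not expect a genuine obstacle, as the real work was already done in establishing the bi-Hamiltonian structure of the extended flows and in Theorem~\ref{theorem2}; the points that require attention are the index shift $p\mapsto p+1$ between densities and Hamiltonians, the identification (imported from~\cite{CDM10}) of the $\eta$-bracket on $\cL M_0$ with $P_1$, and the term-by-term matching of the $\zeta$-expansion of~\eqref{thetas} with~\eqref{Qfns} in the two logarithmic cases $\theta_{-1}$ and $\theta_v$.
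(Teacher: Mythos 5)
Your proposal is correct and follows essentially the same route as the paper: the authors likewise observe that $M_0\subset M_1$ so the extended flows restrict to $\cL M_0$, that the bracket $\{,\}_1$ induced by $\eta$ is the Poisson operator $P_1$ of Proposition~\ref{prop:poi}, and that $\theta_{\hat\alpha,p}=\frac1{2\pi i}\oint_{|z|=1}Q_{\hat\alpha,p}\,\frac{dz}{z}$, so the Hamiltonians (and hence the flows) coincide. Your explicit verification of the $\zeta$-expansion, including the harmonic numbers arising from $e^{-\la\zeta}\ein(-\la\zeta)$, is a slightly more detailed account of a matching the paper only asserts.
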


\begin{remark}
Defining the functions on $M_0$
\[
\Omega_{\hat\alpha,p;\hat\beta,q} := \sum_{m=0}^q (-1)^m < \nabla\theta_{\hat\alpha,p+m+1} , \nabla \theta_{\hat\beta,q-m} >
\]
one can easily prove that
\[
\pa_x \Omega_{\hat\alpha,p;\hat\beta,q} = \frac{\pa \theta_{\hat\alpha,p}}{\pa t^{\hat\beta, q}}.
\]
This in particular shows that the Hamiltonian densities $h_{\hat\alpha,p} = \theta_{\hat\alpha,p+1}$ are densities of conserved quantities for all the flows of the hierarchy, and this in turn implies that  the Hamiltonians are in involution w.r.t. both Poisson brackets. 

As usual the symmetry of $\pa_{t^{\hat\alpha, p}} \Omega_{\hat\beta, q; \hat\gamma, r}$ under the exchanges of the three pairs of indices implies that with a solution $\hat\la(t,z)$ of the hierarchy one can associate a tau function such that
\[ \Omega_{\hat\alpha,p;\hat\beta,q} = 
\frac{\pa^2 \log\tau}{\pa t^{\hat\alpha,p} \pa t^{\hat\beta,q}} .
\]
For further details we refer to~\cite{DZ01}.
\end{remark}

\begin{remark}
In Remark~\ref{ort-rem} an alternative choice of deformed flat coordinates $\tilde y_{\hat\alpha}(\zeta)$ was made such that the generating functions of the Hamiltonian densities satisfy the orthogonality condition~\eqref{ort-den}. We call the associated hierarchy 
\[
\frac{\pa}{\pa \tilde t^{\hat\alpha, p}} \cdot = 
\{ \cdot , \tilde H_{\hat\alpha, p} \}_1
\]
with the Hamiltonians 
\[
\tilde H_{\hat\alpha, p} = \int_{S^1} \tilde\theta_{\hat\alpha, p+1} \ dx 
\]
the ``orthogonal'' Principal hierarchy. The Hamiltonian densities $\tilde\theta_{\hat\alpha,p}$ are defined as before by the expansion
\[
\tilde\theta_{\hat\alpha}(\zeta) = \sum_{p\geq0} \tilde\theta_{\hat\alpha,p} \zeta^p . 
\]
Their explicit expression is
\[
\tilde\theta_{\hat\alpha,p} = \frac1{2\pi i} \oint_{|z|=1} \tilde Q_{\hat\alpha,p} \, \frac{dz}z
\]
where
\bes
\label{}
\bea
&\tilde Q_{\alpha,p} = -2  (2\alpha)!! \sum_{0\leq n\leq\frac{p}2} \frac{(\bla-\la)^{p-2n} (\la+\bla)^{2n}}{(2n+\alpha+1)! (p-2n)!}\text{ for }\alpha\geq0, \\
&\tilde Q_{-1,p} = - \frac{(-\la)^p}{p!} \left( \log\left(1+\frac{\bla}{\la} \right) + c_p \right) 
+2^{-p}\sum_{l\geq0}^{p-1} \frac{(\bla-\la)^l (-\bla-\la)^{p-l} c_{p-l}}{l!(p-l)!}, \\
&\tilde Q_{\alpha,p} =Q_{\alpha,p}= - \frac{(\la+\bla)^{\alpha+1}}{\alpha+1} \frac1{p!} \left( \frac{\bla-\la}2 \right)^p \text{ for }\alpha\leq-2,\\
&\tilde Q_{v,p} = - \frac{(-\la)^p}{p!} \left( \log\left( 1+\frac\bla\la \right) +c_p \right) - \sum_{l=0}^{p-1} \frac{(-\la)^l}{l!} \frac{(\la+\bla)^{p-l}}{(2p-2l)!!(p-l)} +\\
&\qquad + \frac{(\bla)^p}{p!} \left( \log( \bla(\la+\bla) ) -c_p \right) + \sum_{l=0}^{p-1} \frac{\bla^l}{l!} \frac{(-\la-\bla)^{p-l}}{(2p-2l)!!(p-l)}, \\
&\tilde Q_{u,p} = Q_{u,p} = \frac{\bla^{p+1}}{(p+1)!}.
\eea
\ees

The ``orthogonal'' Principal hierarchy has a Lax representation
\beq
\label{eq:lax-ort}
\frac{\pa \la}{\ \pa \tilde t^{\hat\alpha,p}} = \{ -(\tilde Q_{\hat\alpha,p})_- , \la \} ,   \qquad
\frac{\pa \bla}{\ \pa \tilde t^{\hat\alpha,p}} = \{ (\tilde Q_{\hat\alpha,p})_+  , \bla \} ,
\eeq
and satisfies the same bi-Hamiltonian recursion relations as before
\bean
&\{ \cdot , \tilde H_{\alpha,p} \}_2 = (\alpha+p+2) \{ \cdot , \tilde H_{\alpha,p+1} \}_1 ,\\
&\{ \cdot , \tilde H_{v,p} \}_2 = (p+1) \{ \cdot , \tilde H_{v,p+1} \}_1 +
2 \{ \cdot , \tilde H_{u,p} \}_1 ,\\
&\{ \cdot, \tilde H_{u,p} \}_2 = (p+2) \{ \cdot , \tilde H_{u,p+1} \}_1 .
\eean

Finally  observe that the ``orthogonal'' densities $\tilde\theta_{\hat\alpha, p}$ are related to the $\theta_{\hat\alpha, p}$ by
\[
\tilde\theta_{\hat\alpha,p} = \sum_{\mu_{\hat\gamma}\leq p +\mu_{\hat\alpha} } \theta_{\hat\gamma,p-\mu_{\hat\gamma}+\mu_{\hat\alpha}} \, C^{\hat\gamma}_{\ \hat\alpha} 
\]
where the matrix $C$ has been defined in Remark~\ref{ort-rem};
note that the sum on the right-hand side is always finite. 

\end{remark}

\section*{Concluding remarks}

In the first part of this article we have defined, by assuming certain analytical properties of the Lax symbols $\la(z)$, $\bla(z)$, a new dispersionless hierarchy which extends the dispersionless 2D Toda hierarchy.
In this direction the most important open problem is the construction of the dispersive extended 2D Toda hierarchy, i.e. a hierarchy containing the difference equations of the 2D Toda hierarchy, introduced in terms of infinite matrices by Ueno and Takasaki~\cite{UT84} or equivalently in terms of difference operators, and an infinite number of extended flows, some of these including logarithmic terms in the spirit of~\cite{CDZ04, C07} and such that its semiclassical limit would coincide with the extended dispersionless 2D Toda defined here.

In the second part of the paper we have considered the relationship of the extended dispersionless 2D Toda hierarchy with the infinite-dimensional Frobenius manifold $M_0$ defined in~\cite{CDM10}. In particular we have constructed the deformed flat connection $\tilde\nabla$ on $M_0 \times \C^\times$ and provided an explicit basis $y_{\hat\alpha}$ of deformed flat coordinates. 
The Principal hierarchy so obtained on $\cL M_0$ coincides with the extended dispersionless 2D Toda.
The analysis of the monodromy at $\zeta=0$ of the $\zeta$-flatness equation indicates that the Frobenius manifold $M_0$ has the typical features of quantum cohomology, including a point of classical limit which explains the simple resonance pattern. An interesting open problem would be to understand if these hints can be extended to a proper (quantum) cohomological interpretation of $M_0$. 

Another important direction of research, will be addressed in subsequent publications, is the study of the properties of the solutions of the principal hierarchy and of their tau functions. Firstly, we plan to study the solution obtained by extending the potential of the Frobenius manifold to the descendent time variables $t^{\hat\alpha,p}$ of the principal hierarchy, the so-called topological solution, which is of particular interest, especially in connection with possible enumerative applications. Secondly, the behavior of a generic solution in the neighborhood of a singular point is expected to have a qualitatively more complicated structure than the $1+1$ case~\cite{D08}, due to the presence of a continuous family of Riemann invariants. 

Related important problems we plan to study are the generalized Stokes phenomenon associated with the behavior of operator-valued linear singular systems on the complex plane and the classification of (classes of) infinite-dimensional Frobenius manifolds. 


{\bf Acknowledgements:} 
G.~C. acknowledges the hospitality of IPhT in Saclay, of IMPA in Rio de Janeiro and of SISSA in Trieste; the support of the ESF-MISGAM exchange grant n.2324, of the INDAM ``Progetto Giovani'' grant and in particular that of Prof.~J.~P.~Zubelli.

L.~Ph.~M. is grateful to Prof.~B.~Dubrovin for being a source of guidance and inspiration during the years of his Ph.D., and for introducing him to the beautiful mathematics of Frobenius Manifolds. He would like to thank J.P. Zubelli for valuable and pleasant discussions, and for giving him the opportunity to join his research group at IMPA. L.~Ph.~M. would also like to acknowledge A. Brini, H. Bursztyn, M. Cafasso, R. Heluani and P. Rossi for insightful discussions. He acknowledges the support of MISGAM for his visit to IPhT, Paris; the support of INDAM for his visit to CMUC, Coimbra and the support of MEC (Minist\'erio da Educa\c c\~ao) and MCT (Minist\'erio da
Ci\^encia e Tecnologia) through CAPES - PNDP (Funda\c c\~ao Coordena\c c\~ao
de Aperfei\c coamento de Pessoal de N\'ivel Superior - Programa
Nacional de P\'os-Doutorado) during his stay at IMPA. Finally, he would like to acknowledge IMPA, for giving him the opportunity of doing mathematics in a professional and friendly environment.

\end{document}